\newcommand{\acli}[1]{\emph{\acl{#1}}}		
\newcommand{\acdef}[1]{\emph{\acl{#1}} \textup{(\acs{#1})}\acused{#1}}		
\colorlet{MyRed}{Crimson!80!Black}
\colorlet{MyGreen}{DarkGreen!80!Black}
\colorlet{MyBlue}{MediumBlue}
\titlespacing{\paragraph}{0em}{\medskipamount}{1em}
\titlespacing{\subparagraph}{0em}{0em}{0.5em}
\newcommand{\afterhead}{.\;}		
\newcommand{\para}[1]{\smallskip\paragraph{\textbf{#1\afterhead}}}
\tikzset{negated/.style={
        decoration={markings,
            mark= at position 0.5 with {
                \node[transform shape,yshift=1pt] (tempnode) {\scriptsize$\xcancel{\quad}$};
            }
        },
        postaction={decorate}
    }
}
\let\tempcite\cite
\newcommand{\citef}[2][]{\citeauthor{#2} \tempcite[#1]{#2}}
\renewcommand*{\cite}{\citef}		
\crefname{assumption}{Assumption}{Assumptions}
\theoremstyle{plain}
\newtheorem{theorem}{Theorem}		
\newtheorem{corollary}{Corollary}		
\newtheorem{lemma}{Lemma}		
\newtheorem{proposition}{Proposition}		
\newtheorem*{corollary*}{Corollary}		
\theoremstyle{definition}
\newtheorem{definition}{Definition}		
\newtheorem{example}{Example}		
\newtheorem*{definition*}{Definition}		
\newtheorem*{assumption*}{Assumptions}		
\newtheorem*{example*}{Example}		
\theoremstyle{remark}
\newtheorem{remark}{Remark}		
\newtheorem*{remark*}{Remark}		
\def\endenv{\hfill{\small$\lozenge$}}
\renewcommand{\qedsymbol}{\small$\blacksquare$}		
\newcounter{proofpart}
\numberwithin{theorem}{section}		
\numberwithin{proposition}{section}		
\numberwithin{corollary}{section}		
\numberwithin{definition}{section}		
\numberwithin{remark}{section}		
\numberwithin{example}{section}		
\newcommand{\debug}[1]{#1}		
\newcommand{\newmacro}[2]{\newcommand{#1}{\debug{#2}}}		
\newcommand{\newop}[2]{\DeclareMathOperator{#1}{\debug{#2}}}		
\DeclarePairedDelimiter{\braces}{\{}{\}}		
\DeclarePairedDelimiter{\bracks}{[}{]}		
\DeclarePairedDelimiter{\parens}{(}{)}		
\DeclarePairedDelimiter{\abs}{\lvert}{\rvert}		
\DeclarePairedDelimiterX{\setdef}[2]{\{}{\}}{#1:#2}		
\DeclarePairedDelimiterXPP{\exclude}[1]{\mathopen{}\setminus}{\{}{\}}{}{#1}
\newcommand{\R}{\mathbb{R}}		
\DeclareMathOperator*{\argmax}{arg\,max}		
\DeclareMathOperator{\dist}{dist}		
\DeclareMathOperator{\dom}{dom}		
\newmacro{\Jac}{J}		
\DeclareMathOperator{\one}{\mathds{1}}		
\DeclareMathOperator{\relint}{ri}		
\DeclareMathOperator{\supp}{supp}		
\newcommand{\cf}{cf.\xspace}		
\newcommand{\eg}{e.g.,\xspace}		
\newcommand{\ie}{i.e.,\xspace}		
\newcommand{\textpar}[1]{\textup(#1\textup)}		
\newcommand{\alt}[1]{#1'}		
\newcommand{\altalt}[1]{#1''}		
\newcommand{\avg}[1][\state]{\bar#1}		
\newmacro{\dd}{\:d}		
\newcommand{\ddt}{\frac{d}{dt}}		
\newcommand{\eps}{\varepsilon}		
\newcommand{\pd}{\partial}		
\newcommand{\wilde}{\widetilde}		
\newmacro{\const}{c}		
\newmacro{\coef}{\lambda}		
\newmacro{\param}{\theta}		
\newmacro{\params}{\Theta}		
\newmacro{\pexp}{p}		
\newmacro{\qexp}{q}		
\newmacro{\rexp}{r}		
\newmacro{\beforestart}{0}		
\newmacro{\start}{1}		
\newmacro{\afterstart}{2}		
\newmacro{\running}{\start,\afterstart,\dotsc}		
\newmacro{\run}{n}		
\newmacro{\runalt}{k}		
\newmacro{\runaltalt}{m}		
\newmacro{\nRuns}{T}		
\newmacro{\runs}{\mathcal{\nRuns}}		
\newmacro{\state}{X}		
\newmacro{\statealt}{Y}		
\newmacro{\statealtalt}{Z}		
\newmacro{\aux}{\tilde\state}
\newcommand{\beforeinit}[1][\state]{\debug{#1}_{\beforestart}}		
\newcommand{\beforeiter}[1][\state]{\debug{#1}_{\runalt-1}}		
\newcommand{\iter}[1][\state]{\debug{#1}_{\runalt}}		
\newcommand{\prev}[1][\state]{\debug{#1}_{\run-1}}		
\newcommand{\curr}[1][\state]{\debug{#1}_{\run}}		
\renewcommand{\next}[1][\state]{\debug{#1}_{\run+1}}		
\newmacro{\tstart}{0}		
\renewcommand{\time}{\debug{t}}		
\newmacro{\timealt}{s}		
\newmacro{\horizon}{T}		
\newmacro{\traj}{x}		
\newmacro{\trajalt}{y}		
\newmacro{\trajaltalt}{z}		
\newmacro{\flowmap}{\Theta}		
\DeclarePairedDelimiterXPP{\flowof}[2]{\flowmap_{#1}}{(}{)}{}{#2}		
\newmacro{\lyap}{E}
\newop{\Eq}{Eq}		
\newop{\ESS}{ESS}		
\newop{\GESS}{GESS}		
\newop{\NSS}{NSS}		
\newop{\GNSS}{GNSS}		
\newop{\CE}{CE}		
\newop{\CCE}{CCE}		
\newop{\KKT}{KKT}		
\newop{\NI}{NI}		
\newop{\MVI}{MVI}		
\newop{\SVI}{SVI}		
\newop{\br}{\mathsf{br}}		
\newop{\BR}{\mathsf{BR}}		
\newop{\reg}{Reg}		
\newop{\preg}{\overline{Reg}}		
\newop{\val}{val}		
\newmacro{\weight}{\eps}		
\newcommand{\rBR}[1]{\BR_{#1}}		
\newcommand{\rEq}[1]{\Eq_{#1}}		
\newcommand{\rNash}[1][\weight]{\Eq_{#1}}		
\newmacro{\play}{i}		
\newmacro{\playalt}{j}		
\newmacro{\playaltalt}{k}		
\newmacro{\nPlayers}{N}		
\newmacro{\players}{\mathcal{I}}		
\newmacro{\pure}{\alpha}		
\newmacro{\purealt}{\beta}		
\newmacro{\purealtalt}{\gamma}		
\newmacro{\nPures}{A}		
\newmacro{\pures}{\mathcal{\nPures}}		
\newmacro{\strat}{x}		
\newmacro{\stratalt}{\alt\strat}		
\newmacro{\strataltalt}{\altalt\strat}		
\newmacro{\strats}{\mathcal{X}}		
\newmacro{\intstrats}{\strats^{\circle}}		
\newcommand{\eq}{\sol}		
\newmacro{\loss}{\ell}		
\newmacro{\meanloss}{L}		
\newmacro{\payvec}{v}		
\newmacro{\aggpay}{S}		
\newmacro{\lossvec}{w}		
\newmacro{\cumloss}{S}		
\newmacro{\score}{y}		
\newmacro{\pay}{u}		
\newmacro{\payv}{F}		
\newmacro{\pot}{P}		
\newmacro{\game}{\mathcal{G}}		
\newmacro{\gamefull}{\game(\pures,\payv)}		
\newmacro{\fingame}{\Gamma}		
\newmacro{\fingamefull}{\Gamma(\players,\pures,\loss)}		
\newmacro{\gmat}{g}		
\newmacro{\gdist}{\dist_{\gmat}}
\newmacro{\mfld}{M}		
\newmacro{\form}{\omega}		
\newmacro{\tvec}{z}		
\newmacro{\uvec}{u}		
\newmacro{\ball}{\mathbb{B}}		
\newmacro{\sphere}{\mathbb{S}}		
\newmacro{\vertex}{v}		
\newmacro{\vertexalt}{\alt\vertex}		
\newmacro{\vertexaltalt}{\altalt\vertex}		
\newmacro{\nVertices}{V}		
\newmacro{\vertices}{\mathcal{\nVertices}}		
\newmacro{\edge}{e}		
\newmacro{\edgealt}{\alt\edge}		
\newmacro{\edgealtalt}{\altalt\edge}		
\newmacro{\nEdges}{E}		
\newmacro{\edges}{\mathcal{\nEdges}}		
\newmacro{\graph}{\mathcal{G}}		
\newmacro{\graphall}{\graph(\vertices,\edges)}		
\newmacro{\vecspace}{\mathcal{V}}		
\newmacro{\subspace}{\mathcal{W}}		
\newmacro{\bvec}{e}		
\newmacro{\bvecs}{\mathcal{E}}		
\newmacro{\coord}{i}		
\newmacro{\coordalt}{j}		
\newmacro{\coordaltalt}{k}		
\newmacro{\nCoords}{d}		
\newmacro{\dims}{\nCoords}		
\newmacro{\vdim}{\nCoords}		
\newmacro{\pspace}{\mathcal{X}}		
\newmacro{\dspace}{\vecspace^{\ast}}		
\newmacro{\ppoint}{x}		
\newmacro{\ppointalt}{\alt\ppoint}		
\newmacro{\ppointaltalt}{\altalt\ppoint}		
\newmacro{\ppoints}{\mathcal{X}}		
\newmacro{\pstate}{X}		
\newmacro{\dpoint}{y}		
\newmacro{\dpointalt}{\alt\dpoint}		
\newmacro{\dpointaltalt}{\altalt\dpoint}		
\newmacro{\dpoints}{\mathcal{Y}}		
\newmacro{\dstate}{Y}		
\newmacro{\pvec}{u}		
\newmacro{\dvec}{v}		
\newmacro{\mat}{M}		
\newmacro{\hmat}{H}		
\newmacro{\ones}{\mathbf{1}}		
\newmacro{\eye}{I}		
\newmacro{\zer}{\mathbf{0}}		
\DeclarePairedDelimiter{\norm}{\lVert}{\rVert}		
\DeclarePairedDelimiterXPP{\dnorm}[1]{}{\lVert}{\rVert}{_{\ast}}{#1}		
\DeclarePairedDelimiterXPP{\onenorm}[1]{}{\lVert}{\rVert}{_{1}}{#1}		
\DeclarePairedDelimiterXPP{\twonorm}[1]{}{\lVert}{\rVert}{_{2}}{#1}		
\DeclarePairedDelimiterXPP{\supnorm}[1]{}{\lVert}{\rVert}{_{\infty}}{#1}		
\DeclarePairedDelimiterX{\braket}[2]{\langle}{\rangle}{#1,#2}		
\DeclarePairedDelimiterX{\inner}[2]{\langle}{\rangle}{#1,#2}		
\newcommand{\defeq}{\coloneqq}		
\newcommand{\from}{\colon}		
\newcommand{\too}{\rightrightarrows}		
\newmacro{\source}{O}		
\newmacro{\sink}{D}		
\newmacro{\pair}{i}		
\newmacro{\pairalt}{j}		
\newmacro{\pairaltalt}{k}		
\newmacro{\nPairs}{N}		
\newmacro{\pairs}{\mathcal{\nPairs}}		
\newmacro{\route}{p}		
\newmacro{\routealt}{\alt\route}		
\newmacro{\routealtalt}{\altalt\route}		
\newmacro{\nRoutes}{P}		
\newmacro{\routes}{\mathcal{\nRoutes}}		
\newmacro{\flow}{f}		
\newmacro{\flowalt}{\alt\flow}		
\newmacro{\flowaltalt}{\altalt\flow}		
\newmacro{\flows}{\mathcal{F}}		
\newmacro{\load}{x}		
\newmacro{\loadalt}{\alt\load}		
\newmacro{\loadaltalt}{\altalt\load}		
\newmacro{\loads}{\mathcal{X}}		
\newop{\Opt}{Opt}		
\newop{\Sol}{Sol}		
\newop{\gap}{Gap}		
\newop{\orcl}{Or}		
\newmacro{\obj}{f}		
\newmacro{\objalt}{g}		
\newmacro{\sobj}{F}		
\newmacro{\gvec}{g}		
\newmacro{\oper}{M}		
\newcommand{\sol}[1][\point]{#1^{\ast}}		
\newmacro{\vbound}{G}		
\newmacro{\lips}{L}		
\newmacro{\strong}{\alpha}		
\newmacro{\smooth}{\beta}		
\newop{\tspace}{T}		
\newop{\tcone}{TC}		
\newop{\dcone}{\tcone^{\ast}}		
\newop{\ncone}{NC}		
\newop{\pcone}{PC}		
\newop{\hull}{\Delta}		
\newmacro{\cvx}{\mathcal{C}}		
\newmacro{\subd}{\partial}		
\newmacro{\subsel}{\wilde\nabla}		
\newmacro{\minmax}{L}		
\newmacro{\minvar}{\theta}		
\newmacro{\minvaralt}{\alt\minvar}		
\newmacro{\minvars}{\Theta}		
\newmacro{\maxvar}{\phi}		
\newmacro{\maxvaralt}{\alt\maxvar}		
\newmacro{\maxvars}{\Phi}		
\newop{\Eucl}{\Pi}		
\newop{\logit}{\Lambda}		
\newop{\dkl}{KL}		
\newmacro{\hreg}{h}		
\newmacro{\hconj}{\hreg^{\ast}}		
\newmacro{\breg}{D}		
\newmacro{\mprox}{P}		
\newmacro{\mirror}{\nabla\hconj}		
\newcommand{\fench}[1]{\debug{F}_{#1}}		
\newmacro{\hstr}{K}		
\newmacro{\depth}{H}		
\newmacro{\proxdom}{\strats_{\hreg}}		
\DeclarePairedDelimiterXPP{\proxof}[2]{\mprox_{#1}}{(}{)}{}{#2}		
\newmacro{\zone}{\mathcal{D}}		
\newmacro{\dzone}{\mathcal{F}}		
\newmacro{\point}{x}		
\newmacro{\pointalt}{\alt\point}		
\newmacro{\pointaltalt}{\altalt\point}		
\newmacro{\points}{\mathcal{X}}		
\newmacro{\intpoints}{\relint\points}		
\newmacro{\base}{p}		
\newmacro{\basealt}{q}		
\newmacro{\basealtalt}{u}		
\newmacro{\open}{\mathcal{U}}		
\newmacro{\closed}{\mathcal{C}}		
\newmacro{\cpt}{\mathcal{K}}		
\newmacro{\nhd}{\mathcal{U}}		
\newop{\ex}{\mathbb{E}}		
\newop{\prob}{\mathbb{P}}		
\newop{\Var}{Var}		
\newop{\simplex}{\hull}		
\DeclarePairedDelimiterXPP{\exof}[1]{\ex}{[}{]}{}{
 #1}
\DeclarePairedDelimiterXPP{\probof}[1]{\prob}{(}{)}{}{
 #1}
\DeclarePairedDelimiterXPP{\oneof}[1]{\one}{\{}{\}}{}{
 #1}
\newmacro{\sample}{\omega}		
\newmacro{\samples}{\Omega}		
\newmacro{\filter}{\mathcal{F}}		
\newmacro{\probspace}{(\samples,\filter,\prob)}		
\newmacro{\event}{E}       
\newmacro{\eventalt}{H}       
\newmacro{\mean}{\mu}		
\newmacro{\sdev}{\sigma}		
\newmacro{\variance}{\sdev^{2}}		
\newmacro{\signal}{V}		
\newmacro{\step}{\gamma}		
\newmacro{\learn}{\eta}		
\newmacro{\diff}{\delta}		
\newmacro{\proper}{\tau}		
\newmacro{\error}{Z}		
\newmacro{\noise}{U}		
\newmacro{\bias}{b}		
\newmacro{\brown}{W}		
\newmacro{\serror}{\theta}		
\newmacro{\snoise}{\xi}		
\newmacro{\sbias}{\psi}		
\newmacro{\sbound}{M}		
\newmacro{\bbound}{B}		
\newmacro{\noisepar}{\sdev}		
\newmacro{\noisevar}{\variance}		
\newop{\loc}{loc}
\newcommand{\choice}[1]{\debug{Q}_{#1}}		
\newmacro{\leb}{\mu}
\newmacro{\contclass}{\mathsf{Pot}\cup\mathsf{Mon}^{\ast}}
\begin{document}


\newcommand{\longtitle}{\uppercase{Learning in nonatomic games, Part I:\\Finite action spaces and population games}}		
\newcommand{\runtitle}{\uppercase{Learning in nonatomic games with finite action spaces}}		

\title[\runtitle]{\longtitle}		

\author[S.~Hadikhanloo]{Saeed Hadikhanloo$^{\ast}$}		
\address{$^{\ast}$ Tweag I/O, Paris, France.}		
\email{saeed.hadikhanloo@tweag.io}		

\author[R.~Laraki]{Rida Laraki$^{\dag}$}		
\address{$^{\dag}$ CNRS (Lamsade, University of Dauphine-PSL) and University of Liverpool (Computer Science Department).}		
\email{rida.laraki@dauphine.fr}		

\author
[P.~Mertikopoulos]
{\\Panayotis Mertikopoulos$^{\sharp}$}
\address{$^{\sharp}$\,%
Univ. Grenoble Alpes, CNRS, Inria, Grenoble INP, LIG, 38000, Grenoble, France.}
\email{panayotis.mertikopoulos@imag.fr}

\author[S.~Sorin]{Sylvain Sorin$^{\ddag}$}		
\address{$^{\ddag}$ Institut  de  Mathématiques Jussieu-PRG, Sorbonne Université, UPMC, CNRS UMR 7586.}		
\email{sylvain.sorin@imj-prg.fr}		


\subjclass[2020]{%
Primary 91A22, 91A26;
secondary 37N40, 68Q32.}

\keywords{%
Learning;
nonatomic games;
fictitious play;
dual averaging;
evolutionary stability;
Nash equilibrium;
variational inequalities}

\newacro{LHS}{left-hand side}
\newacro{RHS}{right-hand side}
\newacro{iid}[i.i.d.]{independent and identically distributed}
\newacro{lsc}[l.s.c.]{lower semi-continuous}

\newacro{CFP}[cFP]{continuous fictitious play}
\newacro{cRFP}[cFP$_{\weight}$]{continuous $\weight$-regularized fictitious play}
\newacro{CDA}[cDA]{continuous dual averaging}

\newacro{BRD}{best reply dynamics}
\newacro{RBRD}{regularized best reply dynamics}
\newacro{VBRD}{vanishingly regularized best reply dynamics}
\newacro{DAD}[DAD]{dual averaging dynamics}

\newacro{APT}{asymptotic pseudotrajectory}
\newacroplural{APT}{asymptotic pseudotrajectories}
\newacro{KKT}{Karush\textendash Kuhn\textendash Tucker}
\newacro{ESS}{evolutionarily stable state}
\newacro{GESS}{globally evolutionarily stable state}
\newacro{NSS}{neutrally stable state}
\newacro{GNSS}{globally neutrally stable state}
\newacro{FP}{fictitious play}
\newacro{SFP}{smooth fictitious play}
\newacro{VRFP}{vanishingly regularized fictitious play}
\newacro{DA}{dual averaging}
\newacro{FTRL}{``follow the regularized leader''}
\newacro{RFP}{regularized fictitious play}
\newacro{ODE}{ordinary differential equation}
\newacro{NE}{Nash equilibrium}
\newacroplural{NE}[NE]{Nash equilibria}
\newacro{RNE}{regularized Nash equilibrium}
\newacroplural{RNE}[RNE]{regularized Nash equilibria}
\newacro{VI}{variational inequality}
\newacroplural{VI}[VIs]{variational inequalities}
\newacro{MVI}{Minty variational inequality}
\newacroplural{MVI}[MVIs]{Minty variational inequalities}
\newacro{SVI}{Stampacchia variational inequality}
\newacroplural{SVI}[SVIs]{Stampacchia variational inequalities}

\begin{abstract}
%
%
We examine the long-run behavior of a wide range of dynamics for learning in nonatomic games, in both discrete and continuous time.
The class of dynamics under consideration includes \acl{FP} and its regularized variants, the \acl{BRD} (again, possibly regularized), as well as the dynamics of \acl{DA}\,/\,\acl{FTRL} (which themselves include as special cases the replicator dynamics and Friedman's projection dynamics).
Our analysis concerns both the actual trajectory of play and its time-average, and we cover potential and monotone games, as well as games with an \acl{ESS} (global or otherwise).
We focus exclusively on games with finite action spaces;
nonatomic games with continuous action spaces are treated in detail in Part II.
\end{abstract}

\allowdisplaybreaks		
\acresetall		
\maketitle


\section{Introduction}
\label{sec:introduction}

The study of evolutionary dynamics in population games has been the mainstay of evolutionary game theory ever since the inception of the field in the mid-1970's.
Such dynamics are typically derived from biological models
or economic microfoundations that express the net growth rate of a type (or strategy) within a population
as a function of the relative frequency of all other types.
Thus, owing to the flexibility of this general framework, these considerations have generated an immense body of literature;
for an introduction to the topic, we refer the reader to the masterful treatments of \citet{HS98,Wei95}, and the more recent, panoramic survey of \citet{San10,San15}.

Our paper takes a complementary approach and aims to examine the long-run behavior of \emph{learning procedures} in nonatomic games.
While the topic of learning in games with a \emph{finite} number of players is fairly well-studied \textendash\ see \eg the classic works of \citet{FL98} and \citet{CBL06} \textendash\ the same cannot be said for learning in population games.
On that account, our goal in this paper is to provide a synthetic treatment for a wide range of dynamics for learning in nonatomic games, unifying along the way a number of existing results in the literature.

We pursue this goal in two parts.
In the current paper, we focus exclusively on games with shared, \emph{finite} action spaces;
games with \emph{continuous} (and possibly player-dependent) action spaces are significantly more complicated, so their study is deferred to the companion paper \citep{HLMS21b}.

\para{Overview of the paper}
To specify the general framework under consideration one has to:
\begin{enumerate}
[left=\parindent,label={\itshape\alph*\upshape)}]
\item
Describe the family of games to be studied:
These are anonymous nonatomic games and two concrete classes thereof, potential and monotone games;
the relevant details are provided in \cref{sec:prelims}.
\item
Identify the solution concepts involved:
We will mainly focus on \aclp{NE} and variants of \aclp{ESS}, expressed throughout as solutions of an associated \acl{VI};
this is also described in \cref{sec:prelims}.
\item
Define the dynamics under study:
This is the content of \cref{sec:dynamics}, where we begin by recalling the standard \acl{FP} procedure of \citet{Bro51} and \citet{Rob51}, its regularized variants \citep{HS02}, and its continuous-time analogue, the \acl{BRD} of \citet{GM91}.
Subsequently, we introduce a general class of dynamics known as \acli{DA} \textendash\ or \acli{FTRL} \textendash\ and which contains as special cases the replicator dynamics of \citet{TJ78} and the projection dynamics of \citet{Fri91}.
\end{enumerate}

The dynamics we examine evolve in both discrete and continuous time:
the continuous-time analysis is presented in \cref{sec:cont} and comprises the template for the discrete-time analysis that follows in \cref{sec:disc}.
In both cases, our analysis builds on previous work and results by \citet{MS96,San01,HS03,HS02,HS09,HSV09} and \citet{MS16,MerSan18}.
In this context, some of the new results that we present in our paper can be summarized as follows:
\begin{enumerate}
[left=\parindent,label={\itshape\alph*\upshape)}]
\item
We establish a precise link between (regularized) \acl{FP} and \acl{DA}:
informally, \acl{FP} best-responds to the average population state whereas \acl{DA} best-responds to the average population payoff profile.
\item
We determine the equilibrium convergence properties of \acl{VRFP} in continuous time.
\item
We provide a range of conditions for the convergence of \acl{DA} trajectories to \aclp{ESS}, in both discrete and continuous time (and possibly with a variable learning rate).
\end{enumerate}
Finally, in \cref{sec:extensions}, we discuss several extensions of these results beyond single-population nonatomic games, and we provide a preview of the issues that arise in nonatomic games with continuous action spaces \textendash\ the topic of interest of the follow-up paper \citep{HLMS21b}.

\section{Preliminaries}
\label{sec:prelims}

\subsection{Nonatomic games}

A (single-population) \emph{nonatomic game} is defined by the following primitives:

\begin{itemize}
[left=\parindent]
\addtolength{\itemsep}{\smallskipamount}

\item
A \emph{population of players}.
This is modeled by the unit interval $\players = [0,1]$ endowed with the Lebesgue measure $\leb$.
The nonatomic aspect means that the behavior of a set of players with zero measure has no impact on the game (see below).

\item
A set of \emph{pure strategies} $\pures = \{1,\dotsc,\nPures\}$.
Following \citet{Sch73}, the players' \emph{strategic behavior} is defined as a measurable function $\chi \from \players \to \pures$, with $\chi(\play) \in \pures$ denoting the pure strategy of player $\play$.
Given a pure strategy $\pure\in\pures$, the inverse image $\chi^{-1}(\pure) = \setdef{\play\in\players}{\chi(\play) = \pure}$ will be called the set of ``$\pure$-strategists'', and the corresponding pushforward measure $\strat \defeq \chi \sharp \leb \equiv \leb\circ\chi^{-1}$ on $\pures$ will be called the \emph{state} of the population.
This measure will be the main variable of interest;
for convenience, we will view $\strat$ as an element of the simplex $\strats \defeq \simplex(\pures)$ and we will write $\strat_{\pure} = \leb(\chi^{-1}(\pure))$ for the \emph{relative frequency} of $\alpha$-strategists in the population.

\item
A family of continuous \emph{payoff functions} $\pay_{\pure}\from\strats\to\R$, $\pure\in\pures$.
Specifically, $\pay_{\pure}(\strat)$ denotes the payoff to $\pure$-strategists when the state of the population is $\strat\in\strats$.
Collectively, we will also write $\payv(\strat) = (\pay_{\pure}(\strat))_{\pure\in\pures}$ for the population's \emph{payoff vector} at state $\strat$.
\end{itemize}

The above definition specifies a \emph{population game} $\game \equiv \gamefull$, \cf \citet{San10} and references therein.
In this definition, players are \emph{anonymous} in the sense that their payoffs factor through the state of the population and do not depend on specific players choosing specific strategies.
For more intricate classes of nonatomic games (possibly involving idiosyncratic components), see \citet{Sch73} and \citet{MC84}.

\subsection{Equilibria, stability, and \aclp{VI}}

In population games, an equilibrium is defined as a population state $\eq\in\strats$ in which almost all players are satisfied with their choice of strategy.
Formally, $\eq\in\strats$ is an \emph{equilibrium} of $\game$ if
\begin{equation}
\label{eq:Nash}
\tag{Eq}
\pay_{\pure}(\eq)
	\geq \pay_{\purealt}(\eq)
	\quad
	\text{for all $\pure\in\supp(\eq)$, $\purealt\in\pures$},
\end{equation}
where $\supp(\eq)$ denotes the support of $\eq$.
Equivalently, this requirement for $\eq\in\strats$ can be stated as a \acdef{SVI} of the form%
\begin{equation}
\label{eq:SVI}
\tag{SVI}
\braket{\payv(\eq)}{\strat - \eq}
	\leq 0
	\quad 
	\text{for all $\strat \in \strats$}.
\end{equation}
Equilibria can also be seen as fixed points of the continuous endomorphism $\strat \mapsto \Eucl_{\strats}(\strat + \payv(\strat))$,
where $\Eucl_{\strats}$ denotes the Euclidean projector on $\strats$.
By standard fixed point arguments, it follows that the game's set of equilibria is nonempty;
we will denote this set as $\Eq(\game)$.

\begin{remark*}
In operator theory and optimization, the direction of \eqref{eq:SVI} is typically reversed because optimization problems are usually formulated as \emph{minimization} problems.
The utility maximization viewpoint is more common in the literature on population games, so we will maintain this sign convention throughout.
\end{remark*}

An alternative characterization of equilibria is by means of the \emph{best-response correspondence} $\BR\from\strats\too\strats$, defined here as
\begin{align}
\label{eq:BR}
\BR(\strat)
	&\defeq \argmax_{\stratalt\in\strats} \braket{\payv(\strat)}{\stratalt},
\end{align}
We then have the following string of equivalences:
\begin{equation}
\label{eq:characterizations}
\eq
	\in \BR(\eq)
	\iff
\supp(\eq)
	\subseteq \BR(\eq)
	\iff
\eq
	\in \Eq(\game)
\end{equation}
\ie the equilibria of the game correspond to the fixed points of $\BR$. One can easily show that, under our assumptions, this correspondence has a closed graph and convex compact non-empty values.
By Kakutani's fixed point theorem, $\Eq(\game)$ \textendash\ that is, the set of fixed points of $\BR$ \textendash\ is nonempty and compact.

We will also consider the associated
\acdef{MVI} for $\eq\in\strats$,
namely
\begin{equation}
\label{eq:MVI}
\tag{MVI}
\braket{\payv(\strat)}{\strat - \eq}
	\leq 0
	\quad
	\text{for all $\strat \in \strats$}.
\end{equation}
For concreteness, we will write $\SVI(\game)$ and $\MVI(\game)$ for the set of solutions of \eqref{eq:SVI} and \eqref{eq:MVI} respectively.
In terms of structural properties,
$\MVI(\game)$ is convex
and,
under our continuity assumption for $\payv$, it is straightforward to verify that $\MVI(\game) \subseteq \SVI(\game) \equiv \Eq(\game)$.
In this regard, \eqref{eq:MVI} represents an ``equilibrium refinement'' criterion for $\game$ (see also \cref{sec:classes} below).

In more detail, \eqref{eq:MVI} is intimately related to the seminal notion of \emph{evolutionary stability} that was introduced by \citet{MSP73} and formed the starting point of evolutionary game theory.
To make this link precise, recall that a population state $\eq\in\strats$ is an \acli{ESS} of $\game$ \citep{MSP73,HS98,San10}, if
\begin{equation}
\label{eq:ESS}
\tag{ESS}
\braket{\payv(\delta\strat + (1-\delta)\eq)}{\strat - \eq}
	< 0
\end{equation}
for all sufficiently small $\delta>0$ and all  states $\strat\neq\eq$.
Analogously, $\eq\in\strats$ is a \acdef{NSS} of $\game$ if \eqref{eq:ESS} holds as a weak inequality, \ie
\begin{equation}
\label{eq:NSS}
\tag{NSS}
\braket{\payv(\delta\strat + (1-\delta)\eq)}{\strat - \eq}
	\leq 0
\end{equation}
for all sufficiently small $\delta>0$ and all $\strat\in\strats$.
Finally, $\eq$ is said to be a \acli{GESS} (resp.~\acli{GNSS}) of $\game$ if \eqref{eq:ESS} [resp.~\eqref{eq:NSS}] holds for all $\delta\in(0,1]$.
In obvious notation, we will write $\ESS(\game)$, $\NSS(\game)$, $\GESS(\game)$ and $\GNSS(\game)$ for the respective set of states of $\game$ (by definition $\GESS(\game)$ consists of at most a single point).

\para{Characterizations and implications}
A concise characterization of these notions of stability was derived by \citet{Tay79} and \citet{HSS79}, who showed that $\eq\in\ESS(\game)$ if and only if there exists a neighborhood $\nhd$ of $\eq$ in $\strats$ such that
\begin{equation}
\label{eq:ESS-VI}
\braket{\payv(\strat)}{\strat - \eq}
	< 0
	\quad
	\text{for all $\strat \in \nhd\setminus\{\eq\}$},
\end{equation}
and, furthermore, if we can take $\nhd=\strats$ above, then $\GESS(\game) = \{\eq\}$.
For the corresponding neutral versions, we have $\eq\in\NSS(\game)$ if and only if \eqref{eq:ESS-VI} holds on $\nhd$ as a weak inequality, and $\eq\in\GNSS(\game)$ if and only if we can take $\nhd = \strats$ in this case.
This characterization of $\GNSS(\game)$ corresponds precisely to the solution set of \eqref{eq:MVI};
in particular, we have the following string of inclusions:
\begin{equation}
\label{eq:taxonomy}
\begin{aligned}
\begin{tikzcd}[column sep=tiny,row sep=small]
\GESS
	\arrow[draw=none]{r}[sloped,auto=false]{\subseteq}
	\arrow[draw=none]{d}[sloped,auto=false]{\subseteq}
	&\GNSS
	\arrow[draw=none]{r}[sloped,auto=false]{\equiv}
	\arrow[draw=none]{d}[sloped,auto=false]{\subseteq}
	&\MVI
	\arrow[draw=none]{d}[sloped,auto=false]{\subseteq}
	&
	\\
\ESS
	\arrow[draw=none]{r}[sloped,auto=false]{\subseteq}
	&\NSS
	\arrow[draw=none]{r}[sloped,auto=false]{\subseteq}
	&\Eq
	\arrow[draw=none]{r}[sloped,auto=false]{\equiv}
	&\SVI
\end{tikzcd}
\end{aligned}
\end{equation}
In general, the above inclusions are all one-way;
in the next section, we discuss two important cases where some of them become equivalences.
\acused{ESS}
\acused{NSS}
\acused{GESS}
\acused{GNSS}

\subsection{Classes of games}
\label{sec:classes}

The following classes of games will play an important role in the sequel:
\begin{enumerate}
\addtolength{\itemsep}{\medskipamount}

\item
\emph{Potential games:}
these are games that admit a \emph{potential function}, \ie a function $\pot\from\strats\to\R$ such that $\payv = \nabla\pot$ on $\strats$, or, more generally,
\begin{equation}
\label{eq:pot}
\tag{Pot}
\pot'(\strat;\stratalt-\strat)
	= \braket{\payv(\strat)}{\stratalt-\strat}
	\quad
	\text{for all $\strat,\stratalt \in \strats$},
\end{equation}
where $\pot'(\strat;\tvec) \defeq \lim_{t\to0^{+}} \bracks{\pot(\strat+t\tvec) - \pot(\strat)} / t$ denotes the one-sided directional derivative of $\pot$ at $\strat$ along $\tvec$.
If $\game$ is a potential game, any local maximum of $\pot$ is an equilibrium of $\game$, and any \emph{strict} local maximum of $\pot$ is an \ac{ESS} of $\game$.
For a detailed treatment of potential games in population games extending the initial study of \citet{MS96}, see \citet{San01,San10}.

\item
\emph{Monotone games:}
	these are games that satisfy the monotonicity condition
\begin{equation}
\label{eq:monotone}
\tag{Mon}
\braket{\payv(\stratalt) - \payv(\strat)}{\stratalt - \strat}
	\leq 0
	\quad
	\text{for all $\strat,\stratalt\in\strats$}.
\end{equation}
This condition implies that every equilibrium of $\game$ is neutrally stable, so we get the following equivalences:
\begin{equation}
\MVI(\game)
	\equiv \GNSS(\game)
	= \NSS(\game)
	= \Eq(\game)
	\equiv \SVI(\game)
\end{equation}
In this class, the existence of equilibria relies only on the minmax theorem \citep{Min67}.

Going further, a game is called \emph{strictly monotone} if \eqref{eq:monotone} holds as a strict inequality for all $\stratalt\neq\strat$.
In this case, we have the stronger equivalence
\begin{equation}
\GESS(\game)
	= \ESS(\game)
	= \Eq(\game),
\end{equation}
\ie all inclusions in \eqref{eq:taxonomy} become two-way.
Moreover, $\Eq(\game)$ is a singleton which is also the unique \ac{GESS} of the game.
\end{enumerate}

The intersection of potential and monotone (resp. \emph{strictly} monotone) games occurs when $\game$ admits a \emph{concave} (resp. \emph{strictly} concave) potential.
The most important example of such games is the class of nonatomic congestion games that arise in routing and transportation theory, \cf~\citet{Daf80}.

\subsection{Regularization}

In the sequel, we will often consider ``regularized'' best responses that are single-valued.
These are defined as follows:

\begin{definition}
\label{def:reg}
A \emph{regularizer on $\strats$} is a function $\hreg\from\R^{\pures}\to\R\cup\{\infty\}$ such that:
\begin{enumerate}
\item
$\hreg$ is supported on $\strats$, \ie $\dom\hreg = \setdef{\strat\in\R^{\pures}}{\hreg(\strat)<\infty} = \strats$.
\item
$\hreg$ is strictly convex and continuous on $\strats$.
\end{enumerate}
Given a regularizer on $\strats$, we further define:
\begin{subequations}
\begin{enumerate}
[\itshape a\upshape)]
\item
The associated \emph{choice map} $\choice{\hreg}\from\R^{\pures}\to\strats$ given by
\begin{alignat}{2}
\label{eq:choice}
\choice{\hreg}(\payvec)
	&\defeq \argmax_{\stratalt\in\strats} \{ \braket{\payvec}{\stratalt} - \hreg(\stratalt) \}.
\shortintertext{%
\item
The \emph{regularized best response} to a population state $\strat\in\strats$:
}
\label{eq:BR-reg}
\rBR{\hreg}(\strat)
	&\defeq \argmax_{\stratalt\in\strats} \{ \braket{\payv(\strat)}{\stratalt} - \hreg(\stratalt) \}
	= \choice{\hreg}(\payv(\strat)).
\end{alignat}
\end{enumerate}
\end{subequations}
\end{definition}

\begin{remark}
By our assumptions for $\hreg$ (continuity and strict convexity), $\choice{\hreg}$ and $\rBR{\hreg}$ are both well-defined and single-valued as correspondences.
Moreover, by the strict convexity of $\hreg$, it follows that the convex conjugate $\hconj(\score) = \max_{\strat\in\strats}\{ \braket{\score}{\strat} - \hreg(\strat) \}$ of $\hreg$ is differentiable and satisfies
\begin{equation}
\label{eq:dhconj}
\choice{\hreg}(\payvec)
	= \nabla\hconj(\payvec)
	\quad
	\text{for all $\dvec\in\R^{\pures}$}.
\end{equation}
\end{remark}

\begin{remark}
By construction, $\rBR{\hreg}{}$ (approximately) best-responds to strategies, while $\choice{\hreg}$ best-responds to payoff vectors.
For this reason, we will sometimes refer to $\choice{\hreg}$ as a ``payoff-based'' regularized best response 
(as opposed to ``strategy-based'' regularized best response for $\rBR{\hreg}{}$).
\end{remark}

Moving forward, for part of our analysis, we will also require one (or both) of the regularity conditions below:
\begin{enumerate}
\addtolength{\itemsep}{\smallskipamount}

\item
$\hreg$ is \emph{$\hstr$-strongly convex} on $\strats$, \ie
\begin{equation}
\label{eq:strong}
\tag{StrCvx}
\hreg(\coef\strat + (1-\coef)\stratalt)
	\leq \coef \hreg(\strat)
		+ (1-\coef) \hreg(\stratalt)
		- \frac{\hstr}{2} \coef (1-\coef) \norm{\stratalt - \strat}^{2}
\end{equation}
for all $\strat,\stratalt\in\strats$ and all $\coef\in[0,1]$.

\item
The subdifferential $\subd\hreg$ of $\hreg$ admits a \emph{continuous selection};
specifically, writing $\proxdom \defeq \dom\subd\hreg = \setdef{\strat}{\subd\hreg(\strat)\neq\varnothing}$ for the ``prox-domain'' of $\hreg$, we posit that there exists a continuous map $\subsel\hreg\from\proxdom\to\R^{\pures}$ such that
\begin{equation}
\label{eq:selection}
\tag{Diff}
\subsel\hreg(\strat)
	\in \subd\hreg(\strat)
	\quad
	\text{for all $\strat\in\proxdom$}.
\end{equation}
\end{enumerate}

By replacing best responses with regularized best responses in the definition of equilibria, we also define the notion of a \emph{regularized equilibrium} as follows:

\begin{definition}
\label{def:Nash-reg}
Given a regularizer $\hreg$ on $\strats$ and a regularization weight $\weight>0$, we define an \emph{$\weight$-regularized equilibrium} of $\game$ as a profile $\eq\in\strats$ such that $\eq\in\rBR{\weight\hreg}(\eq)$.
The set of $\weight$-regularized equilibria of $\game$ will be denoted as $\rEq{\weight\hreg}(\game)$.
\end{definition}

Under \eqref{eq:selection}, $\eq\in\strats$ is an $\weight$-regularized equilibrium if and only if
\begin{equation}
\label{eq:Nash-eps}
\braket{\payv(\eq) - \weight\subsel\hreg(\eq)}{ \eq -\strat }
	\geq 0
	\quad
	\text{for all $\strat\in\strats$}.
\end{equation}
In this regard, $\weight$-regularized equilibria can be seen as \aclp{NE} of the ``$\weight$-regularized'' game $\game_{\weight} \equiv \game_{\weight}(\pures,\payv_{\weight})$ with payoff profile $\payv_{\weight}(\strat) \defeq \payv(\strat) -  \weight\subsel\hreg(\strat)$;
more concisely, $\rEq{\weight}(\game) = \Eq(\game_{\weight})$.

Equilibria and regularized equilibria are related by the next hemicontinuity property, itself a consequence of the maximum theorem of \citet{Ber97}. 

\begin{lemma}
\label{lem:cont}
Let $\eq_{\weight} \in \rEq{\weight}(\game)$, $\weight>0$, be a family of regularized equilibria of $\game$.
Then, in the limit $\weight\to0$, any accumulation point of $\eq_{\weight}$ is an equilibrium of $\game$.
\end{lemma}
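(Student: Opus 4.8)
The plan is to work with the variational-inequality characterization of regularized equilibria and pass to the limit. Fix a sequence $\weight_{\run} \to 0^{+}$ and corresponding regularized equilibria $\eq_{\weight_{\run}} \in \rEq{\weight_{\run}\hreg}(\game)$, and suppose (after extracting a subsequence) that $\eq_{\weight_{\run}} \to \eq$ for some $\eq\in\strats$; note $\eq$ lies in $\strats$ since $\strats$ is closed. I want to show $\eq\in\Eq(\game) \equiv \SVI(\game)$, \ie that $\braket{\payv(\eq)}{\strat - \eq} \le 0$ for all $\strat\in\strats$.

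The cleanest route avoids \eqref{eq:Nash-eps} (which needs the selection hypothesis \eqref{eq:selection}) and instead uses the defining inclusion $\eq_{\weight_{\run}} \in \rBR{\weight_{\run}\hreg}(\eq_{\weight_{\run}}) = \argmax_{\stratalt\in\strats}\{\braket{\payv(\eq_{\weight_{\run}})}{\stratalt} - \weight_{\run}\hreg(\stratalt)\}$ directly. This gives, for every $\strat\in\strats$,
\begin{equation*}
\braket{\payv(\eq_{\weight_{\run}})}{\eq_{\weight_{\run}}} - \weight_{\run}\hreg(\eq_{\weight_{\run}})
	\ge \braket{\payv(\eq_{\weight_{\run}})}{\strat} - \weight_{\run}\hreg(\strat),
\end{equation*}
that is, $\braket{\payv(\eq_{\weight_{\run}})}{\strat - \eq_{\weight_{\run}}} \le \weight_{\run}\bracks{\hreg(\strat) - \hreg(\eq_{\weight_{\run}})}$. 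Now let $\run\to\infty$: the left-hand side tends to $\braket{\payv(\eq)}{\strat - \eq}$ by continuity of $\payv$ (part of the standing assumptions), and on the right-hand side $\hreg$ is continuous on the compact set $\strats$, hence bounded there, so $\weight_{\run}\bracks{\hreg(\strat) - \hreg(\eq_{\weight_{\run}})} \to 0$. We conclude $\braket{\payv(\eq)}{\strat - \eq} \le 0$ for all $\strat\in\strats$, which is exactly \eqref{eq:SVI}; by the equivalence $\SVI(\game) \equiv \Eq(\game)$ recorded in \cref{sec:prelims}, $\eq$ is an equilibrium of $\game$.

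There is essentially no hard obstacle here; the only point needing care is the boundedness of the regularization term, which is where continuity of $\hreg$ on the compact simplex $\strats$ (guaranteed by \cref{def:reg}) is used — without it one would have to argue, e.g., via $\hconj$ and \eqref{eq:dhconj}, that the penalty contribution stays controlled. Alternatively, one may phrase the whole argument as an instance of Berge's maximum theorem as suggested in the text: the map $(\weight,\strat)\mapsto \braket{\payv(\strat)}{\cdot} - \weight\hreg(\cdot)$ is jointly continuous on $[0,\infty)\times\strats$ with compact constraint set $\strats$, so the associated $\argmax$ correspondence is upper hemicontinuous, and at $\weight=0$ it is precisely $\BR$; an accumulation point of fixed points $\eq_{\weight}\in\rBR{\weight\hreg}(\eq_{\weight})$ is then a fixed point of $\BR$, \ie an element of $\Eq(\game)$ by \eqref{eq:characterizations}. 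I would present the elementary inequality-chasing version as the main proof and mention the maximum-theorem viewpoint as the conceptual reason it works.
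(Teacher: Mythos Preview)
Your proof is correct. The paper does not actually prove this lemma: it simply states that the result is ``a consequence of the maximum theorem of \citet{Ber97}'' and moves on. Your direct inequality-chasing argument is a valid and more self-contained route than invoking Berge, and you also sketch the Berge-based argument that the paper alludes to, so you have covered both the paper's intended approach and an elementary alternative.
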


We close this section with a concrete illustration of the above concepts:

\begin{example}
[Logit choice]
\label{ex:logit}
An important example of regularization corresponds to the entropic regularizer $\hreg(\strat) = \sum_{\pure\in\pures} \strat_{\pure}\log\strat_{\pure}$ which, by a standard calculation, leads to the choice map
\begin{equation}
\label{eq:logit}
\logit(\payvec)
	= \frac{(e^{\payvec_{1}},\dotsc,e^{\payvec_{\nPures}})}{e^{\payvec_{1}} + \dotsm + e^{\payvec_{\nPures}}}.
\end{equation}
This map is commonly referred to as the \emph{logit choice map} \citep{McF74a}, and the corresponding fixed points $\eq = \logit(\payv(\eq)/\weight)$ are known as \emph{logit equilibria};
for a detailed discussion, see \citet{vD87,MP95,San10}, and references therein.
It is also well known that the entropic regularizer is $1$-strongly convex relative to the $L_{1}$-norm, so $\logit$ is $1$-Lipschitz continuous, \cf \citet{SS11}.
\endenv
\end{example}


\section{Dynamics}
\label{sec:dynamics}

In this section, we introduce a wide range of dynamics for learning in nonatomic games, in both continuous and discrete time.
The unifying principle of the dynamics under study is that agents play a best response \textendash\ regularized or otherwise \textendash\ to some version of the history of play:
the empirical population frequencies, or the players' average gains.
For historical reasons, we begin with the discrete-time framework in \cref{sec:dyn-disc}, and we present the corresponding continuous-time dynamics in \cref{sec:dyn-cont}.

\subsection{Discrete-time processes}
\label{sec:dyn-disc}

In the discrete-time case, we will consider the evolution of the population state $\curr \in \strats$ over a series of epochs $\run = \running$ (corresponding for example to generations in an evolutionary context).
In turn, this population state is determined by a recursive update rule \textendash\ or \emph{algorithm} \textendash\ that defines the dynamics in question.
Depending on the manner of ``best-responding'' and the signal that the agents are best-responding to, we have the following dynamics.

\para{\Acl{FP}}
Dating back to \citet{Bro51} and \citet{Rob51}, \acdef{FP} posits that agents best-respond to the \emph{empirical frequency} (or ``\emph{time average}'') $\curr[\avg] = (1/\run) \sum_{\runalt=\start}^{\run} \iter$ of their population state.
Concretely, this means that the population state evolves according to the process
\begin{equation}
\label{eq:FP}
\tag{\acs{FP}}
\next
	\in \BR(\curr[\avg])
	= \argmax\nolimits_{\strat\in\strats} \braket{\payv(\curr[\avg])}{\strat}.
\end{equation}
In terms of empirical frequencies, we have the recursive dynamics
\begin{equation}
\label{eq:FP-avg}
\next[\avg] - \curr[\avg]
	\in \frac{1}{\run+1} \bracks{\BR(\curr[\avg]) - \curr[\avg]}.
\end{equation}

In contrast to the sequence of actual population states $\curr\in\strats$, $\run=\running$, the stationary points of the time-averaged process \eqref{eq:FP-avg} are straightforward to characterize.
Indeed, any fixed point $\eq$ of \eqref{eq:FP-avg} satisfies
\begin{equation}
\label{eq:FP-rest}
0
	\in \BR(\eq) - \eq,
\end{equation}
\ie $\eq$ is stationary under \eqref{eq:FP-avg} if and only if it is an equilibrium of $\game$.

\para{\Acl{RFP}}
A variant of the original \acl{FP} algorithm is the ``regularized'' version in which best responses are replaced by \emph{regularized} best responses.
In particular, given an underlying regularizer $\hreg$ on $\strats$ and a corresponding regularization weight $\weight>0$, the \acdef{RFP} process is defined as
\begin{equation}
\label{eq:RFP}
\tag{\acs{RFP}}
\next
	= \rBR{\weight\hreg}(\curr[\avg])
	= \argmax\nolimits_{\strat\in\strats} V_{\weight\hreg}(\curr[\avg], {\strat}) 
\end{equation}
with 
\begin{equation}
\label{def:V}
V_{\weight\hreg}(\stratalt,\strat)
	= \braket{\payv(\stratalt)}{\strat} - \weight \hreg(\strat)
\end{equation}
Then, as before, in terms of empirical frequencies, we have
\begin{equation}
\label{eq:RFP-avg}
\next[\avg] - \curr[\avg]
	\in \frac{1}{\run+1} \bracks{\rBR{\weight\hreg}(\curr[\avg]) - \curr[\avg]}.
\end{equation}

As in the case of \eqref{eq:FP}, the stationary points of the time-averaged dynamics \eqref{eq:RFP} are given by the fixed point equation
\begin{equation}
\label{eq:RFP-rest}
0
	\in \rBR{\weight\hreg}(\eq) - \eq,
\end{equation}
\ie they are precisely the $\weight$-regularized equilibria of $\game$ (\cf \cref{def:Nash-reg}).

\begin{remark*}
In the literature, this process is sometimes referred to as ``smooth'' or ``perturbed'' \acl{FP}, \cf \citet{FL99,HS02} and references therein.
This difference in terminology is owed to a different set of assumptions for $\hreg$, which gives rise to interior-valued choice maps $\choice{}\from\R^{\pures}\to\strats$.
\end{remark*}

\para{\Acl{VRFP}}

When the regularization weight in \eqref{eq:RFP} decreases over time, we obtain the \acdef{VRFP} dynamics of \citet{BF13}, viz.
\begin{equation}
\label{eq:VRFP}
\tag{VRFP}
\next
	= \rBR{\curr[\weight]\hreg}(\curr[\avg])
	= \argmax\nolimits_{\strat\in\strats} \ V_{\curr[\weight]\hreg}( \curr[\avg], {\strat}) 
\end{equation}
with $\curr[\weight] > 0$ for all $\run$ and $\curr[\weight]\to0$ as $\run\to\infty$.
 In terms of empirical frequencies, we then have
\begin{equation}
\label{eq:VRFP-avg}
\next[\avg] - \curr[\avg]
	\in \frac{1}{\run+1} \bracks{\rBR{\curr[\weight]\hreg}(\curr[\avg]) - \curr[\avg]}.
\end{equation}
Because this process is non-autonomous ($\curr[\weight]$ depends explicitly on $\run$), it is no longer meaningful to discuss its rest points.

\para{\Acl{DA}}
Dually to the above, instead of best-responding to the aggregate history of \emph{play} (perfectly or approximately), we can also consider the case where agents play a ''regularized best response to their aggregate \emph{payoff}''.
In our context, this gives rise to the \acli{DA} dynamics
\begin{align}
\label{eq:DA-agg}
\next
	&= \choice{\hreg}\parens*{\curr[\learn] \sum_{\runalt=\start}^{\run} \payv(\iter)},
\shortintertext{with $\choice{\hreg}$ defined in \eqref{eq:choice}.
Then, in iterative form, we get the update rule}
\label{eq:DA}
\tag{\acs{DA}}
\begin{split}
\curr[\aggpay]
	&= \prev[\aggpay]
		+ \payv(\curr)
	\\
\next
	&= \choice{\hreg}(\curr[\learn]\curr[\aggpay])
\end{split}
\end{align}
where $\curr[\learn] > 0$ is a ``learning rate'' parameter in the spirit of \citet{Nes09}.
In the recursive formulation \eqref{eq:DA} of the dynamics, the initialization $\beforeinit[\aggpay] = 0$ is the most common one but, otherwise, it can be arbitrary.

\begin{remark*}
In the literature on online learning and online optimization, the process \eqref{eq:DA} is known as \acdef{FTRL} \citep{SS11,SSS06}.
The terminology ``dual averaging'' is due to \citet{Nes09} and is more common in the theory of \aclp{VI} and offline convex optimization.
We adopt the latter to highlight the link of our work to \aclp{VI}.
\end{remark*}

\para{Links between the dynamics: the case of random matching}

To illustrate the relations between the various learning processes discussed above, it will be instructive to consider nonatomic games generated by random matching \citep{Wei95,San10}.
In such games, players are randomly matched to play a symmetric two-player game with payoff matrix $A\in\R^{\pures\times\pures}$, so the population's mean payoff profile is $\payv(\strat) = M\strat$ for all $\strat\in\strats$.
As a result, the dynamics \eqref{eq:DA} may be rewritten as
\begin{equation}
\next
	= \choice{\hreg}\parens*{\curr[\learn] \sum_{\runalt=\start}^{\run} \payv(\iter)}
	= \choice{\hreg}\parens*{\run\curr[\learn] \payv(\curr[\avg])}
	= \choice{\hreg/(\run\curr[\learn])}(\payv(\curr[\avg])),
\end{equation}
which immediately allows us to recover the variants of \acl{FP} as follows:
\begin{enumerate}
\item
For \eqref{eq:VRFP}:
	$\curr[\learn] = 1/(\run\curr[\weight])$ for a sequence of weights $\curr[\weight]>0$, $\lim_{\run\to\infty} \curr[\weight] = 0$.
\item
For \eqref{eq:RFP}:
	$\curr[\learn] = 1/(\run\weight)$ for some fixed $\weight>0$.
\item
For \eqref{eq:FP}:
	this corresponds to the limiting cases ``$\run\curr[\learn] = \infty$'' or ``$\hreg = 0$''.
[Of course, these parameter choices are not formally allowed in \eqref{eq:DA}, so this limit is informal.]
\end{enumerate}

\begin{remark*}
The above is meaningful only when $\payv(\strat)$ is linear in $\strat$;
however, the relation between $\weight$ and $\learn$ will be seen to underlie a large part of the sequel.
\end{remark*}

\subsection{Continuous-time dynamics}
\label{sec:dyn-cont}

We now proceed to define the corresponding continuous-time dynamics for each of the processes described above.

\para{\Acl{BRD}}

The autonomous formulation \eqref{eq:FP-avg} of \acl{FP} can be seen as an Euler discretization of the \acdef{BRD} of \citet{GM91}, namely
\begin{equation}
\label{eq:BRD}
\tag{\acs{BRD}}
\dot\mean_{\time}
	\in \BR(\mean_{\time}) - \mean_{\time}.
\end{equation}
In the above, $\mean_{\time} \in \strats$ is the continuous-time analogue of the empirical mean process $\curr[\avg]$ in discrete time;
we use the notation $\mean$ instead of $\strat$ to highlight this link.
Clearly, the rest points of \eqref{eq:BRD} are again described by the fixed point equation \eqref{eq:FP-rest}, \ie $\eq[\mean]$ is stationary under \eqref{eq:BRD} if and only if it is an equilibrium of $\game$.

\para{\Acl{RBRD}}

Working as above, the \acdef{RBRD} are defined as
\begin{equation}
\label{eq:RBRD}
\tag{\acs{RBRD}}
\dot\mean_{\time}
	= \rBR{\weight\hreg}(\mean_{\time}) - \mean_{\time}.
\end{equation}
As in the case of \eqref{eq:RFP}, the rest points of \eqref{eq:RBRD} are characterized by the fixed point equation \eqref{eq:RFP-rest}, \ie $\eq[\mean]$ is stationary under \eqref{eq:RBRD} if and only if it is an $\weight$-regularized equilibrium of $\game$.

\begin{example}
[The logit dynamics]
\label{ex:RBRD-logit}
Building on the logit choice map discussed in \cref{sec:prelims} (\cf \cref{ex:logit}), the associated \acl{RBRD} are known as \emph{logit dynamics}, and are given by
\begin{equation}
\label{eq:LD}
\dot\mean_{\time}
	= \logit\parens*{\payv(\mean_{\time})/ \weight} - \mean_{\time}.
\end{equation}
\end{example}

\para{ Vanishing regularized best reply dynamics}

If we allow the regularization weight $\weight$ to vary in \eqref{eq:RBRD}, we obtain the non-autonomous dynamics
\begin{equation}
\label{eq:VBRD}
\tag{\acs{VBRD}}
\dot\mean_{\time}
	= \rBR{\weight_{\time}\hreg}(\mean_{\time}) - \mean_{\time},
\end{equation}
where the ``V'' indicates again that $\weight_{\time}\to 0$ as $\time\to\infty$.

\para{The \acl{DAD}}

Finally, given a regularizer $\hreg$ on $\strats$ as above, the dynamics of \acl{DA} in continuous time can be written as
\begin{equation}
\label{eq:DA-cont}
\tag{\acs{DAD}}
\begin{aligned}
\dot\score_{\time}
	&= \payv(\strat_{\time})
	\\
\strat_{\time}
	&=\choice{\hreg}(\learn_{\time}\score_{\time})
\end{aligned}
\end{equation}
where $\learn_{\time} \geq 0$ is a time-varying learning parameter, which we assume throughout to be non-increasing.
More compactly, the above can also be written as
\begin{subequations}
\begin{align}
\label{eq:DA-diff}
\dot\score_{\time}
	&= \payv(\choice{\hreg}(\learn_{\time}\score_{\time}))
\shortintertext{or}
\label{eq:DA-int}
\strat_{\time}
	&=\choice{\hreg}\parens*{\learn_{\time} \int_{0}^{\time} \payv(\strat_{\timealt}) \dd\timealt},
\end{align}
\end{subequations}
depending on whether we take a differential- or integral-based viewpoint.
Specifically, we note that \eqref{eq:DA-diff} is an autonomous \acl{ODE} evolving in the  dual, payoff space of the game;
by contrast, \eqref{eq:DA-int} is an integral equation stated directly in the primal, strategy space of the game.
We also note that, unlike \eqref{eq:BRD}, \eqref{eq:RBRD} and \eqref{eq:VBRD}, the dynamics \eqref{eq:DA-cont} are stated in terms of the actual population state $\strat_{\time}$ at time $\time$, not its empirical mean $\mean_{\time}$ (which is the driving variable of the previous dynamics).

\begin{example}
If $\learn_{\time} \equiv 1$ and $\hreg$ is the entropic regularizer on $\strats$ (\cf \cref{ex:logit}), the dynamics \eqref{eq:DA-cont} yield the replicator dynamics of \citet{TJ78}, viz.
\begin{equation}
\label{eq:RD}
\tag{RD}
\dot\strat_{\pure,\time}
	= \strat_{\pure,\time}
		\bracks{\pay_{\pure}(\strat_{\time})-\braket{\payv(\strat_{\time})}{\strat_{\time}}}.
\end{equation}
In words, \eqref{eq:RD} indicates that the per capita growth rate of the population of $\pure$-strategists is proportional to the difference between the payoff that they experience and the mean population payoff.
For a detailed presentation of this derivation in different contexts, see \citet{Rus99,Sor09,Sor20,MM10} and references therein.
\endenv
\end{example}

\begin{example}
If $\learn_{\time} \equiv 1$ and $\hreg(\strat) = (1/2) \sum_{\pure\in\pures} \strat_{\pure}^{2}$,
the integral form \eqref{eq:DA-int} of \eqref{eq:DA-cont} becomes
\begin{equation}
\label{eq:PD-int}
\strat_{\time}
	= \Eucl\parens*{\int_{0}^{\time} \payv(\strat_{\timealt}) \dd\timealt}
\end{equation}
where $\Eucl$ denotes the Euclidean projector on $\strats$.
The differential form of \eqref{eq:PD-int} is more delicate to describe because $\strat_{\time}$ may enter and exit different faces of $\strats$ in perpetuity.
However, if we focus on an open interval $\runs\subseteq\R$ over which the support of $\strat_{\time}$ remains constant, it can be shown that \eqref{eq:PD-int} follows the projection dynamics of \citet{Fri91}, viz. for all $\time\in\runs$ we have
\begin{equation}
\label{eq:PD}
\tag{PD}
\dot\strat_{\pure,\time}
	= \begin{cases}
	\pay_{\pure}(\strat) - \frac{1}{\abs{\supp(\strat_{\time})}} \sum_{\purealt\in\supp(\strat_{\time})} \pay_{\purealt}(\strat)
		&\quad
		\text{if $\pure\in\supp(\strat_{\time})$},
		\\
	0
		&\quad
		\text{otherwise}.
	\end{cases}
\end{equation}
For the details of this derivation, see \citet{MS16}.
\endenv
\end{example}

\begin{remark*}
For posterity, we note that both regularizers used in the above examples satisfy \eqref{eq:strong}, \eqref{eq:selection}, and the technical condition \eqref{eq:reciprocity} that we describe later in the paper.
\end{remark*}

\para{Random matching in continuous time}

Following \citet{HSV09}, the various dynamics presented so far can be linked as follows when $\payv$ is linear and  $\mean_{\time} = \frac{1}{\time}\int_{0}^{\time} \strat_{\timealt} \dd\timealt$ for $\time>0$:
\begin{align}
\strat_{\time}
	&= \argmax\limits_{\strat\in\strats}
		\braces*{\learn_{\time} \int_{0}^{\time} \braket{\payv(\strat_{\timealt})}{\strat} \dd\timealt - \hreg(\strat)}
	= \argmax\limits_{\strat\in\strats}
		\braces*{\learn_{\time}\,\time\,  \braket{  \payv(\mean_{\time})}{\strat}  - \hreg(\strat)}
	\notag\\
	&= \rBR{[1/(\time\learn_{\time})]\hreg}\parens{\mean_{\time}}.
\end{align}
It thus follows that the empirical distribution of play $\mean_{\time}$ under \eqref{eq:DA-cont} follows the dynamics
\begin{equation}
\dot\mean_{\time}
	= \frac{\rBR{\weight_{\time}\hreg}(\mean_{\time}) - \mean_{\time}}{\time}
\end{equation}
with $\weight_{\time} = 1/(\time\learn_{\time})$.
Hence, after the change of time $\time \gets \log\time$, we get \eqref{eq:RBRD} with a variable regularization parameter $\weight_{\time} = 1/(\time\learn_{\time})$.

\section{Analysis and results in continuous time}
\label{sec:cont}

We now proceed to present our results for the class of dynamics under study.
We begin with the continuous-time analysis which provides a template for the discrete-time analysis in the next section.
We prove that under some assumptions depending on the dynamical system and the class of games under consideration (potential, monotone or strictly monotone), the dynamics converge to the set of equilibria or or \textendash\ regularized equilibria \textendash\ of the game.%
\footnote{Recall that a family $(\strat_{\time})_{\time\geq0}$ (resp.~a sequence $(\state_{\run})$, $\run=\running$) converges to a set $B$ if the set of accumulation points $ {\textbf A}[x_t] $ (resp. $ \mathbf{A}[x_n] $), as $t \rightarrow \infty$ (resp $n \rightarrow \infty$), satisfies: $ {\textbf A}[x_t] \subset B$ (resp. $ \mathbf{A}[x_n] \subset B$).}

\subsection{\Acl{BRD}}

Our first result concerns the \acl{BRD} \eqref{eq:BRD}.
Albeit slightly more general, the results of this subsection and the next essentially follow \citet{HS02,HS09}.

\begin{theorem}
\label{thm:BRD}
Suppose that one of the following holds:
\begin{enumerate}
[\upshape(\itshape a\upshape)]
\item
$\game$ is potential.
\item
$\game$ is monotone and $\payv$ is $C^1$-smooth.
\end{enumerate}
Then every solution orbit $\mean_{\time}$ of \eqref{eq:BRD} converges to $\Eq(\game)$.
\end{theorem}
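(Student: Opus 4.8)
The natural approach is a Lyapunov argument, treating the two cases with two different energy functions but the same overall structure: exhibit a continuous function that is (weakly) decreasing along solutions of \eqref{eq:BRD}, show it is strictly decreasing off $\Eq(\game)$, and then invoke LaSalle's invariance principle (or a direct accumulation-point argument) to conclude that every orbit converges to the equilibrium set. A preliminary remark is that \eqref{eq:BRD} is a differential inclusion with upper semicontinuous, convex, compact, nonempty right-hand side (since $\BR$ has these properties, as noted after \eqref{eq:characterizations}), so solutions exist, are absolutely continuous, and remain in the compact set $\strats$; this justifies the use of LaSalle-type tools and guarantees that accumulation points exist.

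For case (a), $\game$ potential with potential $\pot$, the candidate Lyapunov function is $\pot$ itself, evaluated along the time-average variable $\mean_{\time}$. Using \eqref{eq:pot} and $\dot\mean_{\time} \in \BR(\mean_{\time}) - \mean_{\time}$, a chain-rule computation (valid a.e., via the one-sided directional derivative) gives
\begin{equation}
\label{eq:pot-deriv-plan}
\ddt \pot(\mean_{\time})
	= \braket{\payv(\mean_{\time})}{\dot\mean_{\time}}
	= \braket{\payv(\mean_{\time})}{\br(\mean_{\time}) - \mean_{\time}},
\end{equation}
where $\br(\mean_{\time})\in\BR(\mean_{\time})$ is the selected velocity. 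Since $\br(\mean_{\time})$ maximizes $\braket{\payv(\mean_{\time})}{\cdot}$ over $\strats$ and $\mean_{\time}\in\strats$, the bracket is $\geq 0$, so $\pot$ is nondecreasing; and it is strictly increasing unless $\mean_{\time}\in\BR(\mean_{\time})$, i.e. unless $\mean_{\time}\in\Eq(\game)$ by \eqref{eq:characterizations}. Then LaSalle's principle forces accumulation points of $\mean_{\time}$ into the largest invariant subset of $\{\mean : \mean\in\BR(\mean)\} = \Eq(\game)$, which is $\Eq(\game)$ itself (every equilibrium is a rest point). One subtlety to handle carefully: \eqref{eq:pot-deriv-plan} only directly gives the directional derivative along the velocity, and since $\BR$ is set-valued one must check that the bound holds for \emph{every} measurable selection — this is immediate because every element of $\BR(\mean_{\time})$ yields the same (maximal) value of the bracket.

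For case (b), $\game$ monotone with $\payv$ of class $C^1$, the potential is unavailable, so instead I would fix an equilibrium $\eq\in\Eq(\game)$ (nonempty by the minmax theorem, as noted in \cref{sec:classes}) and use the ``distance-like'' function built from the best-response gap — concretely, the function $\time\mapsto \max_{\strat\in\strats}\braket{\payv(\mean_{\time})}{\strat-\mean_{\time}} = \braket{\payv(\mean_{\time})}{\br(\mean_{\time})-\mean_{\time}}$, which is nonnegative and vanishes exactly on $\Eq(\game)$. Differentiating (this is where $C^1$-smoothness of $\payv$ enters, to control $\ddt\payv(\mean_{\time}) = \Jac\payv(\mean_{\time})\dot\mean_{\time}$) and using monotonicity \eqref{eq:monotone} together with the envelope theorem for the max, one shows this gap function is nonincreasing along orbits and strictly decreasing away from $\Eq(\game)$; an alternative, closer to \citet{HS02,HS09}, is to track $\max_{\playalt}\pay_{\playalt}(\mean_{\time}) - \braket{\payv(\mean_{\time})}{\mean_{\time}}$ and exploit monotonicity to bound its upper Dini derivative. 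Either way, LaSalle then yields convergence to $\Eq(\game)$.

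The main obstacle is case (b): the differential-inclusion / set-valued nature of $\BR$ means the gap function is only Lipschitz, not $C^1$, so one works with Dini derivatives and must argue the differentiability of the max term via an envelope/Danskin-type argument while simultaneously juggling the $C^1$ dependence of $\payv$ on $\mean_{\time}$ and the a.e.-defined velocity $\dot\mean_{\time}$. Making the LaSalle invariance argument rigorous for this differential inclusion (e.g. verifying the chain rule $\ddt \lyap(\mean_{\time}) = \braket{\nabla\lyap(\mean_{\time})}{\dot\mean_{\time}}$ holds a.e. and that limit sets are invariant under the flow of the inclusion) is the technical heart; the potential case (a) is comparatively routine once \eqref{eq:pot-deriv-plan} is in hand.
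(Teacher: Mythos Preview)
Your proposal is correct and follows essentially the same approach as the paper: the potential $\pot$ as Lyapunov function in case~(a), and the gap function $\gap(\mean) = \max_{\stratalt}\braket{\payv(\mean)}{\stratalt-\mean}$ in case~(b), differentiated via the envelope theorem and controlled through the Jacobian using monotonicity. The only refinement you miss is that in case~(b) the paper's computation actually yields the sharper inequality $\ddt\gap(\mean_{\time}) \leq -\gap(\mean_{\time})$, so the gap decays exponentially and convergence follows directly without invoking LaSalle.
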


\begin{proof}
Both cases rely on establishing a suitable Lyapunov function $\lyap(t)$ for \eqref{eq:BRD}:
\begin{enumerate}
\item
In the potential case, we  simply take $\lyap(\time) = \pot(\mean_{\time})$.
\item
In the monotone case, $\lyap(\time) = \gap(\mean_{\time})$ where $\gap$ denotes the function
\begin{equation}
\gap(\strat)
	:= \max_{\stratalt\in\strats} \braket{\payv(\strat)}{\stratalt-\strat}=\braket{\payv(\strat)}{\BR(x)-\strat}
\end{equation}
Observe that $\gap(\strat) \geq 0$ with equality if and only if $\strat\in\Eq(\game)$.
\end{enumerate}

We now proceed on a case-by-case basis:

\para{Potential case}
From \eqref{eq:pot} we obtain that
\begin{equation}
\label{derivationP}
\ddt \pot(\mean_{\time})
	= \lim_{\delta\to0^{+}} \bracks{\pot(\mean_{\time} + \delta\dot\mean_{\time}) - \pot(\mean_{\time})} / \delta
	= \braket{\payv (\mean_{\time})}{\dot\mean_{\time}}
\end{equation}
Since $\dot\mean_{\time}\in \BR(\mean_{\time}) - \mean_{\time}$, it follows  that $\ddt \pot(\mean_{\time})= \gap (\mean_{\time}) \geq 0$ with equality if and only if $\dot\mean_{\time}=0$. Consequently, from Lyapunov's first method \citep[see \eg][Theorem~2.6.1]{HS98} we deduce that $\mean_{\time}$ converges to the set of rest points of \eqref{eq:BRD}, which is $\Eq(\game)$. 

\para{Monotone case}
Let $\Jac\payv$  denote the Jacobian of $\payv$.
By the envelope theorem we get: 
\begin{equation}
\label{envelope}
\ddt \gap(\mean_{\time})
	= \braket{\payv(\mean_{\time})}{-\dot \mean_{\time}}
		+ \dot \mean_{\time} \Jac\payv (\mean_{\time}) (\BR(\mean_{\time})-\mean_{\time} )
	= -\gap (\mean_{\time})
		+ \dot \mean_{\time} \Jac\payv (\mean_{\time}) \dot \mean_{\time}
\end{equation}
As $\game$ is monotone,  $ (\stratalt-\strat) \Jac\payv(\strat) (\stratalt-\strat) \leq 0$ for all $\strat,\stratalt\in\strats$, so $\dot \mean_{\time} \Jac\payv (\mean_{\time}) \dot \mean_{\time} \leq 0$.
Consequently:
\begin{equation}
\ddt \gap(\mean_{\time})  \leq  -  \gap(\mean_{\time}).
\end{equation}
Hence $\gap(\mean_{\time})$ decreases exponentially to $0$, so $\mean_{\time}$ converges to $\Eq(\game)$.\end{proof}

\subsection{\Acl{RBRD}}

In this section we will assume that the regularizer $\hreg$ satisfies \eqref{eq:selection}.
In a slight abuse of notation, we will skip the dependence on $\hreg$ and write $\payv_{\weight}$ for $\payv_{\weight\hreg} \defeq \payv - \weight \subsel\hreg$, see \eqref{eq:Nash-eps},
with $\subsel\hreg$ defined as in \eqref{eq:selection}.
We also write  $\BR_{\weight}(\strat)$ for $\BR_{\weight \hreg}(\strat)$  and so on.
The reason is that  the regularizer $\hreg$ will  be fixed throughout while the weight $\weight$ will be allowed to vary in the next subsection.
\begin{lemma}
\label{lem:RBRD1}
Under \eqref{eq:selection}, we have $\braket{\payv_{\weight } (\strat)}{\BR_{\weight }(\strat)-\strat} \geq 0$ for all $\strat\in\strats$, and equality holds if and only if $\strat=\BR_{\weight }(\strat)$.
\end{lemma}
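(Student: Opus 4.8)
The plan is to read off the inequality from two elementary facts already available: the variational optimality that defines the regularized best response, and the subgradient inequality for the convex regularizer $\hreg$. Throughout I would fix $\strat$ in the prox-domain $\proxdom$, so that $\subsel\hreg(\strat)$ — and hence $\payv_{\weight}(\strat) = \payv(\strat) - \weight\subsel\hreg(\strat)$ — is well defined; recall from \eqref{eq:BR-reg} and \eqref{def:V} that $\BR_{\weight}(\strat)$ is the maximizer of $\stratalt \mapsto V_{\weight\hreg}(\strat,\stratalt) = \braket{\payv(\strat)}{\stratalt} - \weight\hreg(\stratalt)$ over $\strats$, and that this maximizer is unique because $\hreg$ is strictly convex on $\strats$.

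First I would compare the optimal point $\BR_{\weight}(\strat)$ with the feasible competitor $\strat\in\strats$ itself: optimality gives $V_{\weight\hreg}(\strat,\BR_{\weight}(\strat)) \geq V_{\weight\hreg}(\strat,\strat)$, that is,
\[
\braket{\payv(\strat)}{\BR_{\weight}(\strat) - \strat} \;\geq\; \weight\bigl[\hreg(\BR_{\weight}(\strat)) - \hreg(\strat)\bigr].
\]
Then I would apply the subgradient inequality at $\strat$ for the selection $\subsel\hreg(\strat)\in\subd\hreg(\strat)$ supplied by \eqref{eq:selection}, namely $\hreg(\BR_{\weight}(\strat)) - \hreg(\strat) \geq \braket{\subsel\hreg(\strat)}{\BR_{\weight}(\strat) - \strat}$. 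Since $\weight>0$, chaining the two bounds and collecting terms gives
\[
\braket{\payv_{\weight}(\strat)}{\BR_{\weight}(\strat) - \strat} \;=\; \braket{\payv(\strat) - \weight\subsel\hreg(\strat)}{\BR_{\weight}(\strat) - \strat} \;\geq\; 0,
\]
which is the asserted inequality.

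For the equality characterization, the ``if'' direction is immediate. For the converse, the key observation is that vanishing of the left-hand side above forces every inequality in the chain to be an equality; in particular the first display becomes $V_{\weight\hreg}(\strat,\BR_{\weight}(\strat)) = V_{\weight\hreg}(\strat,\strat)$, so $\strat$ is itself a maximizer of $V_{\weight\hreg}(\strat,\cdot)$ over $\strats$, and uniqueness of that maximizer (which is exactly where strict convexity of $\hreg$ enters) forces $\strat = \BR_{\weight}(\strat)$. I do not expect a genuine obstacle here; the only things to watch are keeping the argument inside the prox-domain so that $\payv_{\weight}(\strat)$ makes sense, and being careful in the equality case to use strict rather than merely weak convexity of $\hreg$.
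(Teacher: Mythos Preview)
Your proof is correct and, at bottom, is the same argument as the paper's, unpacked. The paper phrases it via the \acl{VI}/monotone-operator viewpoint: it sets $\phi(z)=\braket{\payv(\strat)}{z}-\weight\hreg(z)$, notes that $\phi$ is strictly concave so its superdifferential is strictly monotone, and then observes that the maximizer $\BR_{\weight}(\strat)$ solves the associated \eqref{eq:MVI}, which reads $\braket{d}{\BR_{\weight}(\strat)-\strat}\geq 0$ for every $d\in\partial\phi(\strat)$, with equality forcing $\strat=\BR_{\weight}(\strat)$; the result follows since $\payv_{\weight}(\strat)\in\partial\phi(\strat)$. Your two steps---optimality of $\BR_{\weight}(\strat)$ against the competitor $\strat$, followed by the subgradient inequality for $\hreg$ at $\strat$---are exactly the content of that \ac{MVI} statement once you expand what ``$d\in\partial\phi(\strat)$'' and ``$\BR_{\weight}(\strat)$ maximizes $\phi$'' mean. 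Your equality analysis (forcing both chained inequalities to equalities and invoking uniqueness of the maximizer) is likewise equivalent to the paper's appeal to strict monotonicity/strict concavity. As a bonus, you are more careful than the paper about the domain issue: $\payv_{\weight}(\strat)$ is only defined on $\proxdom=\dom\subd\hreg$, which the paper's ``for all $\strat\in\strats$'' glosses over.
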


\begin{proof}
Let $\phi (z)= \langle \payv(\strat),z \rangle -\weight \hreg(z)$.
Then $\phi$ is strictly concave and so $\partial \phi $ defines a strictly monotone operator.
Thus,  $\strat'=\arg\max_{z\in\strats} \phi(z)$ is a solution of the variational inequality \eqref{eq:MVI} associated to $\partial \phi $: for all $d \in \partial\phi (\strat)$: $\langle d, \strat'-\strat \rangle \geq 0$ with equality if and only if $\strat=\strat'$.
As $\payv_{\weight}(\strat) \in \partial\phi (\strat)$ and $\strat'=\BR_{\weight}(\strat)$ the result follows.
\end{proof}

\begin{lemma}
\label{lem:RBRD2}
Suppose that \eqref{eq:selection} holds.
If $\game$ with payoff field $\payv$ is potential \textpar{resp.~monotone} then the game $\game_{\weight}$ with payoff function $\payv_{\weight}$ is potential \textpar{resp.~strictly monotone}.
\end{lemma}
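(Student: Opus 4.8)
The plan is to exhibit the required structure for $\game_{\weight}$ by an explicit construction in each case, recalling throughout that $\payv_{\weight} = \payv - \weight\subsel\hreg$ with $\subsel\hreg$ the continuous selection of $\subd\hreg$ furnished by \eqref{eq:selection}.

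For the potential case, the natural candidate potential is $\pot_{\weight} \defeq \pot - \weight\hreg$. First I would record that one-sided directional derivatives add whenever they exist and are finite, so that for $\strat$ in the prox-domain $\proxdom$ and $\stratalt\in\strats$ one has $\pot_{\weight}'(\strat;\stratalt-\strat) = \pot'(\strat;\stratalt-\strat) - \weight\,\hreg'(\strat;\stratalt-\strat)$, where $\pot'(\strat;\stratalt-\strat) = \braket{\payv(\strat)}{\stratalt-\strat}$ by \eqref{eq:pot} and $\hreg'(\strat;\stratalt-\strat)$ is finite because $\subd\hreg(\strat)\neq\varnothing$ and $\stratalt\in\strats$. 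The crux is then to identify $\hreg'(\strat;\stratalt-\strat)$ with $\braket{\subsel\hreg(\strat)}{\stratalt-\strat}$: the inequality ``$\geq$'' is immediate from the subgradient inequality for $\subsel\hreg(\strat)\in\subd\hreg(\strat)$, while for ``$\leq$'' I would apply the subgradient inequality at the points $\strat_{t} \defeq (1-t)\strat + t\stratalt$, rearrange it to $[\hreg(\strat_{t}) - \hreg(\strat)]/t \leq \braket{\subsel\hreg(\strat_{t})}{\stratalt-\strat}$, and let $t\to0^{+}$ using the continuity of $\subsel\hreg$ from \eqref{eq:selection}. Combining these gives $\pot_{\weight}'(\strat;\stratalt-\strat) = \braket{\payv_{\weight}(\strat)}{\stratalt-\strat}$, which is exactly \eqref{eq:pot} for the pair $(\pot_{\weight},\payv_{\weight})$, so $\game_{\weight}$ is potential.

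For the monotone case the computation is short. For $\strat\neq\stratalt$ in $\strats$,
\[
\braket{\payv_{\weight}(\stratalt)-\payv_{\weight}(\strat)}{\stratalt-\strat} = \braket{\payv(\stratalt)-\payv(\strat)}{\stratalt-\strat} - \weight\braket{\subsel\hreg(\stratalt)-\subsel\hreg(\strat)}{\stratalt-\strat}.
\]
The first term on the right is $\leq 0$ by \eqref{eq:monotone}. For the second, since $\hreg$ is strictly convex its subdifferential is strictly monotone: adding the subgradient inequalities $\hreg(\stratalt)\geq\hreg(\strat)+\braket{\subsel\hreg(\strat)}{\stratalt-\strat}$ and $\hreg(\strat)\geq\hreg(\stratalt)+\braket{\subsel\hreg(\stratalt)}{\strat-\stratalt}$ gives $\braket{\subsel\hreg(\stratalt)-\subsel\hreg(\strat)}{\stratalt-\strat}\geq0$, and if this were an equality then both inequalities would be tight, forcing the affine minorant of $\hreg$ through $\strat$ to meet $\hreg$ again at $\stratalt$, which contradicts strict convexity at the midpoint of $\strat$ and $\stratalt$. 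Hence the second term is strictly negative and the whole expression is $<0$, i.e., $\game_{\weight}$ is strictly monotone.

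The step I expect to require the most care is the identity $\hreg'(\strat;\stratalt-\strat) = \braket{\subsel\hreg(\strat)}{\stratalt-\strat}$ in the potential case, and specifically making the ``$\leq$'' direction rigorous when $\strat$ lies on the relative boundary of $\strats$: there one must first check that the approximating points $\strat_{t}$ stay inside $\proxdom$ so that $\subsel\hreg(\strat_{t})$ is defined before invoking its continuity. On $\relint\strats$ this is automatic — indeed \eqref{eq:selection} already forces $\hreg$ to be continuously differentiable there, so the identity is transparent — so only a short limiting argument near the boundary is needed (equivalently, one notes that $\payv_{\weight}$, hence $\game_{\weight}$, is naturally defined precisely on the convex set $\proxdom$, over which the quantifiers in \eqref{eq:pot} and \eqref{eq:monotone} should be read). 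Everything else reduces to routine manipulation of the defining inequalities.
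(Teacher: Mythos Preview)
Your approach is essentially the same as the paper's: the paper simply names $\pot-\weight\hreg$ as the potential and asserts that $\braket{\subsel\hreg(\strat)-\subsel\hreg(\stratalt)}{\strat-\stratalt}\geq 0$ with equality iff $\strat=\stratalt$, without further justification. You supply the missing verification of the directional-derivative identity $\hreg'(\strat;\stratalt-\strat)=\braket{\subsel\hreg(\strat)}{\stratalt-\strat}$ and spell out the strict monotonicity of $\subd\hreg$, so your write-up is strictly more detailed than the paper's two-line proof; your caveat about the domain $\proxdom$ versus $\strats$ is also a point the paper leaves implicit.
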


\begin{proof}
If  $\game$  has a potential $\pot$, a potential function of $\game_{\weight}$ is  $\pot - \weight\hreg$.
If $\game$ is a monotone game, the game $\game_{\weight}$ is strictly monotone because we have $\braket{\subsel\hreg(\strat) - \subsel\hreg(\stratalt)}{\strat-\stratalt} \geq 0$ for every $\strat,\stratalt\in\strats$, with equality if and only if $\strat=\stratalt$.
\end{proof}

We now turn to the asymptotic properties of \acl{RBRD}.

\begin{theorem}
\label{thm:RBRD}
Suppose that \eqref{eq:selection} is satisfied
and assume one of the following holds:
\begin{enumerate}
[\upshape(\itshape a\upshape)]
\item
$\game$ is potential.
\item
$\game$ is monotone and $\payv$ is $C^1$-smooth.
\end{enumerate}
Then every solution orbit $\mean_{\time}$ of \eqref{eq:RBRD} converges to $\rEq{\weight}(\game)$.
\end{theorem}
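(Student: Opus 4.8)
The plan is to run the argument of \cref{thm:BRD} not on $\game$ itself but on the regularized game $\game_{\weight}$ with payoff field $\payv_{\weight} = \payv - \weight\subsel\hreg$. By \cref{lem:RBRD2}, $\game_{\weight}$ is potential in case~(a) and strictly monotone in case~(b), and by \eqref{eq:Nash-eps} together with \cref{def:Nash-reg} its equilibria are exactly the $\weight$-regularized equilibria $\rEq{\weight}(\game)$, which in turn coincide with the rest points of \eqref{eq:RBRD}. The catch is that \eqref{eq:RBRD} is \emph{not} literally the best-reply dynamics of $\game_{\weight}$ ($\rBR{\weight}$ is single-valued, not vertex-valued), so \cref{thm:BRD} cannot be quoted verbatim and the Lyapunov functions must be built by hand with the regularizer folded in. The workhorse is the \emph{regularized gap}
\begin{equation*}
\gap_{\weight}(\strat)
	\defeq \max_{\stratalt\in\strats} V_{\weight\hreg}(\strat,\stratalt) - V_{\weight\hreg}(\strat,\strat)
	= \braket{\payv(\strat)}{\rBR{\weight}(\strat) - \strat} - \weight\bracks{\hreg(\rBR{\weight}(\strat)) - \hreg(\strat)},
\end{equation*}
with $V_{\weight\hreg}$ as in \eqref{def:V}; by optimality of $\rBR{\weight}(\strat)$ one has $\gap_{\weight}\geq 0$ on $\strats$, with $\gap_{\weight}(\strat) = 0$ iff $\strat = \rBR{\weight}(\strat)$, \ie iff $\strat\in\rEq{\weight}(\game)$ (this is the $\hreg$-nonsmooth shadow of \cref{lem:RBRD1}).

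In the potential case I would take $\lyap(\time) = \pot(\mean_{\time}) - \weight\hreg(\mean_{\time})$, \ie the potential of $\game_{\weight}$ supplied by \cref{lem:RBRD2}. Along an orbit of \eqref{eq:RBRD}, the computation of \eqref{derivationP} gives $\ddt\pot(\mean_{\time}) = \braket{\payv(\mean_{\time})}{\dot\mean_{\time}}$ through \eqref{eq:pot}, while the convexity of $\hreg$ and the identity $\dot\mean_{\time} = \rBR{\weight}(\mean_{\time}) - \mean_{\time}$ give $\ddt\hreg(\mean_{\time}) \leq \hreg(\mean_{\time}+\dot\mean_{\time}) - \hreg(\mean_{\time}) = \hreg(\rBR{\weight}(\mean_{\time})) - \hreg(\mean_{\time})$, where on such quantities $\ddt$ is read as an upper right derivative. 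Subtracting, $\ddt\lyap(\mean_{\time}) \geq \gap_{\weight}(\mean_{\time}) \geq 0$, so $\lyap$ is non-decreasing along \eqref{eq:RBRD} and strictly increasing off $\rEq{\weight}(\game)$; since $\lyap$ is continuous on the compact set $\strats$, Lyapunov's first method \citep[Theorem~2.6.1]{HS98} forces $\mean_{\time}$ to converge to the zero set of $\gap_{\weight}$, which is $\rEq{\weight}(\game)$.

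In the monotone case $\payv$ is $C^{1}$, so $\Psi(\strat) \defeq \max_{\stratalt\in\strats} V_{\weight\hreg}(\strat,\stratalt) = \weight\hconj(\payv(\strat)/\weight)$ is $C^{1}$ as well — because $\hconj$ is differentiable with $\nabla\hconj = \choice{\hreg}$, see \eqref{eq:dhconj} — with $\nabla\Psi(\strat) = \Jac\payv(\strat)^{\top}\rBR{\weight}(\strat)$. Now take $\lyap(\time) = \gap_{\weight}(\mean_{\time}) = \Psi(\mean_{\time}) - \braket{\payv(\mean_{\time})}{\mean_{\time}} + \weight\hreg(\mean_{\time})$; differentiating the first two ($C^{1}$) terms and estimating $\ddt[\weight\hreg(\mean_{\time})]$ by convexity exactly as before, a short computation parallel to \eqref{envelope} yields
\begin{equation*}
\ddt\gap_{\weight}(\mean_{\time})
	\leq \braket{\dot\mean_{\time}}{\Jac\payv(\mean_{\time})\,\dot\mean_{\time}} - \gap_{\weight}(\mean_{\time}).
\end{equation*}
As $\game$ is monotone, $\braket{\tvec}{\Jac\payv(\strat)\,\tvec}\leq 0$ for every $\strat\in\strats$ and admissible direction $\tvec$ (differentiate \eqref{eq:monotone}), so $\ddt\gap_{\weight}(\mean_{\time}) \leq -\gap_{\weight}(\mean_{\time})$; hence $\gap_{\weight}(\mean_{\time}) \leq \gap_{\weight}(\mean_{0})\,e^{-\time}\to 0$, and continuity of $\gap_{\weight}$ forces $\mean_{\time}\to\rEq{\weight}(\game)$.

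The main obstacle — and the reason everything is routed through $\gap_{\weight}$ — is that $\hreg$ is assumed only strictly convex and continuous (plus \eqref{eq:selection}), never differentiable, so $\time\mapsto\hreg(\mean_{\time})$ cannot be differentiated into an inner product against a gradient. Each place where a ``$\nabla\hreg$'' would otherwise appear is absorbed into the one-sided convexity estimate $\hreg'(\strat;\tvec)\leq\hreg(\strat+\tvec)-\hreg(\strat)$; in particular, using the ordinary gap of $\game_{\weight}$ as the Lyapunov function would require $\payv_{\weight}\in C^{1}$, hence $\subsel\hreg\in C^{1}$, which is not available. The remaining points are routine: the chain-rule differentiation of $\weight\hconj(\payv(\cdot)/\weight)$ (legitimate since $\hconj\in C^{1}$ and $\payv\in C^{1}$), together with the identification $\nabla\Psi = \Jac\payv^{\top}\rBR{\weight}$; the continuity of the Lyapunov functions on $\strats$; and existence of orbits of \eqref{eq:RBRD}, which follows from continuity of $\rBR{\weight} = \choice{\weight\hreg}\circ\payv$.
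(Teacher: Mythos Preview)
Your argument is correct and follows essentially the same architecture as the paper: the Lyapunov functions are identical ($\pot_{\weight}=\pot-\weight\hreg$ in the potential case, $\gap_{\weight}$ in the monotone case), and the monotone case proceeds through the same envelope/Jacobian computation leading to $\ddt\gap_{\weight}\leq -\gap_{\weight}$.

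The one genuine difference is in how the nonsmooth term $\ddt[\hreg(\mean_{\time})]$ is handled. The paper writes this as $\braket{z(\time)}{\dot\mean_{\time}}$ for some $z(\time)\in\partial\hreg(\mean_{\time})$, invokes \cref{lem:choice} (specifically \eqref{eq:subinverse}) to compare $z(\time)$ against the continuous selection $\subsel\hreg(\mean_{\time})$, and then appeals to \cref{lem:RBRD1}, which is phrased in terms of $\payv_{\weight}=\payv-\weight\subsel\hreg$. You instead use the secant bound $\hreg'(\mean_{\time};\dot\mean_{\time})\leq\hreg(\rBR_{\weight}(\mean_{\time}))-\hreg(\mean_{\time})$ coming from convexity alone, so that the $\hreg$-increment combines directly with $\braket{\payv(\mean_{\time})}{\dot\mean_{\time}}$ to produce $\gap_{\weight}(\mean_{\time})$. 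This buys you two things: you avoid \cref{lem:choice} and \cref{lem:RBRD1} entirely, and---more interestingly---your argument never touches $\subsel\hreg$, so it goes through without assumption \eqref{eq:selection}. In the monotone case you also give a cleaner justification of the envelope step via $\Psi=\weight\hconj(\payv/\weight)$ and \eqref{eq:dhconj}, whereas the paper simply invokes ``the envelope theorem''. The paper's route, on the other hand, makes the connection to the regularized payoff field $\payv_{\weight}$ explicit, which it reuses verbatim in the \eqref{eq:VBRD} analysis that follows.
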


\begin{remark*}
Observe that when $\game$ is monotone, $\game_{\weight}$ is strictly monotone, so $\rEq{\weight}(\game)$ is a singleton.
\end{remark*}

\begin{proof}
As above, both cases rely on establishing a suitable Lyapunov function $\lyap(t)$ for \eqref{eq:RBRD}:
\begin{enumerate}
\item
In the potential case: $\lyap(\time)=\pot_{\weight} (\mean_{\time}) \defeq \pot(\mean_{\time}) - \weight\hreg(\mean_{\time})$.
\item
In the monotone case: 
\begin{align}
\lyap(\time)
	&= \gap_{\weight}(\mean_{\time})
	= \max_{\stratalt\in\strats} \braket{\payv(\mean_{\time})}{\stratalt-\mean_{\time}}
		- \weight(\hreg(\stratalt) - \hreg(\mean_{\time}))
	\notag\\
	&=  \max_{\stratalt\in\strats}V_{\weight}(\mean_{\time},\stratalt)
		-  V_{\weight}(\mean_{\time}, \mean_{\time}) 
\end{align}
\end{enumerate}

\para{Potential case}
There is $z(\time) \in \partial \hreg(\mean_{\time})$ such that \textendash\ similarly to \eqref{derivationP} \textendash\ we have:
\begin{equation}
\ddt \pot_{\weight}(\mean_{\time})= \lim_{dt\to0^{+}} \bracks{\pot_{\weight}(\mean_{\time}+\dot\mean_{\time}dt) - \pot_{\weight}(\mean_{\time})} / dt= \langle \payv (\mean_{\time}), \dot\mean_{\time}\rangle- \weight \langle z(\time), \dot\mean_{\time}\rangle  
\end{equation}
As $\dot\mean_{\time}= \BR_{\weight }(\mean_{\time}) - \mean_{\time}$, by \cref{lem:choice} in the appendix, we have $\langle z(\time), \dot\mean_{\time}\rangle \leq \langle \subsel\hreg(\mean_{\time}), \dot\mean_{\time}\rangle$.
Consequently
\begin{equation}
\ddt \pot_{\weight}(\mean_{\time}) \geq \langle \payv (\mean_{\time}), \dot\mean_{\time}\rangle- \weight \langle \subsel\hreg(\mean_{\time}), \dot\mean_{\time}\rangle  = \langle \payv_{\weight} (\mean_{\time}), \dot\mean_{\time}\rangle 
\end{equation}
By \cref{lem:RBRD1} and the above, we deduce that $\ddt \pot_{\weight}(\mean_{\time})\geq 0$  with equality if and only if $\dot\mean_{\time}=0$.
Consequently, by a Lyapunov argument, $\mean_{\time}$ converges to $\rEq{\weight}(\game)$.

\para{Monotone case}
Using the envelope theorem as in \eqref{envelope} and, just as above,  there is $z(\time) \in \partial \hreg(\mean_{\time})$ such that: 
\begin{align}
\ddt \gap_{\weight}(\mean_{\time})= -\langle \payv(\mean_{\time})-\weight z(t), \dot \mean_{\time} \rangle + \dot \mean_{\time} \Jac\payv (\mean_{\time}) \dot \mean_{\time}  \leq -\langle \payv_{\weight}(\mean_{\time}), \dot \mean_{\time} \rangle + \dot \mean_{\time} \Jac\payv (\mean_{\time}) \dot \mean_{\time} 
\end{align}
The inequality being a consequence of \cref{lem:choice} as it implies that $\langle z(\time), \dot\mean_{\time}\rangle \leq \langle \subsel\hreg(\mean_{\time}), \dot\mean_{\time}\rangle$.
Since  $\game$ is monotone, $ \dot \mean_{\time} \Jac\payv(\mean_{\time}) \dot \mean_{\time} \leq 0$.
Consequently, as $\dot\mean_{\time}= \BR_{\weight }(\mean_{\time}) - \mean_{\time}$, by \cref{lem:RBRD1}, $\ddt \gap_{\weight}(\mean_{\time})  \leq 0$ with equality if and only if $\dot \mean_{\time}=0$.
Hence  $\mean_{\time}$ converges to $\rEq{\weight}(\game)$.
\end{proof}

\subsection{Vanishing \acl{RBRD}}

A similar technique will be used to obtain convergence to the set of Nash equilibria for vanishing \acl{RBRD}.
The additional difficulty here comes from the fact that the vanishing \acl{RBRD} is not autonomous and so we should prove by hand the Lyapunov convergence argument.
To the best of our knowledge, this is a new result.

\begin{theorem}
\label{thm:VBRD-cont}
Suppose that \eqref{eq:selection} holds and \eqref{eq:VBRD} is run with a smoothly-varying and strictly decreasing regularization weight $\weight_{\time}\to0$.
Assume further that $\hreg$ takes values in $[0,1]$, and one of the following holds:
\begin{enumerate}
[\upshape(\itshape a\upshape)]
\item
$\game$ is potential.
\item
$\game$ is monotone, $\payv$ is $C^1$-smooth, and $\subsel\hreg$ is bounded.
\end{enumerate}
Then every solution orbit $\mean_{\time}$ of \eqref{eq:VBRD} converges to $\rEq{\weight}(\game)$.
\end{theorem}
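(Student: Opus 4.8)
The plan is to reproduce the Lyapunov calculus behind \cref{thm:RBRD}, but with a Lyapunov function that carries the \emph{instantaneous} regularization weight $\weight_{\time}$, and then to compensate for the loss of autonomy by ``integrating the Lyapunov drift'' by hand. (Note that $\weight_{\time}\to0$, so the relevant limiting set is $\Eq(\game)$, consistent with \cref{lem:cont}.) Two facts will be used throughout. First, $\dot\mean_{\time} = \BR_{\weight_{\time}}(\mean_{\time}) - \mean_{\time}$ takes values in $\strats - \strats$, so $\mean_{\time}$ is globally Lipschitz in $\time$. Second, since $\weight_{\time}$ is strictly decreasing with $\weight_{\time}\to 0$ and $\hreg$ takes values in $[0,1]$, we have $\int_{0}^{\infty}\abs{\dot\weight_{\time}}\dd\time = \weight_{0} < \infty$ and $\abs{\hreg(\strat) - \hreg(\stratalt)} \leq 1$ for all $\strat,\stratalt \in \strats$. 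Finally, abbreviate $\gap(\strat) = \max_{\stratalt\in\strats}\braket{\payv(\strat)}{\stratalt-\strat}$ as in \cref{thm:BRD} (so that $\gap$ is continuous, $\gap\geq 0$, and $\gap^{-1}(0) = \Eq(\game)$), and $g_{\weight}(\strat) = \braket{\payv_{\weight}(\strat)}{\BR_{\weight}(\strat)-\strat}$, which is nonnegative by \cref{lem:RBRD1}.

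For the first step I would take, in the potential case, $\lyap(\time) = \pot(\mean_{\time}) - \weight_{\time}\hreg(\mean_{\time})$, and in the monotone case $\lyap(\time) = \gap_{\weight_{\time}}(\mean_{\time}) = \max_{\stratalt\in\strats} V_{\weight_{\time}}(\mean_{\time},\stratalt) - V_{\weight_{\time}}(\mean_{\time},\mean_{\time})$ --- \ie precisely the Lyapunov functions of \cref{thm:RBRD} evaluated at the running weight $\weight_{\time}$. Differentiating along $\mean_{\time}$ exactly as in \eqref{derivationP} and \eqref{envelope}, and using \cref{lem:choice} to bound $\braket{z(\time)}{\dot\mean_{\time}}\leq\braket{\subsel\hreg(\mean_{\time})}{\dot\mean_{\time}}$ for the subgradient $z(\time)\in\partial\hreg(\mean_{\time})$ produced by the chain rule (legitimate because $\dot\mean_{\time} = \BR_{\weight_{\time}}(\mean_{\time}) - \mean_{\time}$), the only term not already dealt with in \cref{thm:RBRD} is the one coming from $\dot\weight_{\time}$. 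In the potential case this gives $\ddt\lyap(\time)\geq g_{\weight_{\time}}(\mean_{\time}) - \dot\weight_{\time}\hreg(\mean_{\time})\geq g_{\weight_{\time}}(\mean_{\time})\geq 0$ (using $\dot\weight_{\time}\leq 0$ and $\hreg\geq 0$), and in the monotone case $\ddt\lyap(\time)\leq -g_{\weight_{\time}}(\mean_{\time}) + \abs{\dot\weight_{\time}}$. Since $\pot$ and $\hreg$ are bounded on the compact set $\strats$, $\lyap$ is bounded; hence in the potential case $\lyap$ is nondecreasing and bounded, so it converges and $\int_{0}^{\infty} g_{\weight_{\time}}(\mean_{\time})\dd\time < \infty$, while in the monotone case the same follows by applying this to $\Phi(\time) = \gap_{\weight_{\time}}(\mean_{\time}) + \int_{\time}^{\infty}\abs{\dot\weight_{\timealt}}\dd\timealt$, which is nonincreasing and bounded below by $0$.

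The second step is the one that tames the non-autonomy. Using only convexity of $\hreg$ and $\hreg(\strats)\subseteq[0,1]$, I would show the weight-uniform bound $g_{\weight}(\strat)\geq \gap(\strat) - 2\weight$ for all $\strat\in\strats$, $\weight>0$: the subgradient inequality gives $\braket{\subsel\hreg(\strat)}{\BR_{\weight}(\strat)-\strat}\leq\hreg(\BR_{\weight}(\strat))-\hreg(\strat)\leq 1$, whence $g_{\weight}(\strat)\geq\braket{\payv(\strat)}{\BR_{\weight}(\strat)-\strat} - \weight$, and comparing the value of $\braket{\payv(\strat)}{\cdot}-\weight\hreg(\cdot)$ at $\BR_{\weight}(\strat)$ with its value at any $\stratalt\in\BR(\strat)$ gives $\braket{\payv(\strat)}{\BR_{\weight}(\strat)-\strat}\geq\gap(\strat) - \weight$. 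Then, for fixed $\delta>0$, pick $T_{\delta}$ with $\weight_{\time}<\delta/4$ for $\time>T_{\delta}$, so that $\gap(\mean_{\time})\geq\delta$ forces $g_{\weight_{\time}}(\mean_{\time})\geq\delta/2$; together with Step 1 this shows that $\{\time>T_{\delta}:\gap(\mean_{\time})\geq\delta\}$ has finite Lebesgue measure. Since $\gap$ is continuous on the compact $\strats$ and $\mean_{\time}$ is Lipschitz, $\time\mapsto\gap(\mean_{\time})$ is uniformly continuous, and a nonnegative uniformly continuous function all of whose super-level sets have finite measure must tend to $0$ (otherwise one extracts infinitely many disjoint intervals of a common positive length on which it exceeds $\delta/2$). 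Hence $\gap(\mean_{\time})\to 0$, \ie $\dist(\mean_{\time},\Eq(\game))\to 0$, which is the assertion. (In the monotone case one may instead finish directly: $\gap_{\weight_{\time}}(\mean_{\time}) = \Phi(\time) - \int_{\time}^{\infty}\abs{\dot\weight_{\timealt}}\dd\timealt$ converges, $\abs{\gap_{\weight}(\strat)-\gap(\strat)}\leq\weight$ because $\hreg$ is bounded, so $\gap(\mean_{\time})$ converges to a constant which $\int_{0}^{\infty}g_{\weight_{\time}}(\mean_{\time})\dd\time<\infty$ together with the bound of Step 2 forces to be $0$.)

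I expect the crux to be the last passage of Step 2 --- from $\int_{0}^{\infty}g_{\weight_{\time}}(\mean_{\time})\dd\time<\infty$ to $\mean_{\time}\to\Eq(\game)$: because \eqref{eq:VBRD} is non-autonomous, neither LaSalle's invariance principle nor the asymptotic-pseudotrajectory machinery applies off the shelf, and this step should succeed only because the weight-uniform estimate of Step 2 lets the non-autonomous integrand control the genuinely autonomous gap $\gap$, while the Lipschitz regularity of the orbit upgrades the finite-measure property to pointwise decay. By comparison, verifying that the extra $\dot\weight_{\time}$-terms are harmless --- which is exactly where the standing hypotheses $\hreg(\strats)\subseteq[0,1]$ and, in the monotone case, boundedness of $\subsel\hreg$ (needed to keep $\payv_{\weight_{\time}}$ and the $\Jac\payv$-term under control along the orbit) are used --- should be routine.
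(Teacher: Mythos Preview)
Your proposal is correct and follows essentially the same route as the paper: the Lyapunov functions $\pot_{\weight_{\time}}(\mean_{\time})$ and $\gap_{\weight_{\time}}(\mean_{\time})$ are identical to the paper's, your auxiliary $\Phi(\time)=\gap_{\weight_{\time}}(\mean_{\time})+\int_{\time}^{\infty}\abs{\dot\weight_{\timealt}}\dd\timealt$ is exactly the paper's $\gap_{\weight_{\time}}(\mean_{\time})+\weight_{\time}$, and the pivotal estimate $g_{\weight}(\strat)\geq\gap(\strat)-2\weight$ is the content of the paper's chain of inequalities around \eqref{gap-equality}--\eqref{gap-inequality2}.

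The one genuine difference is the endgame. The paper finishes each case with a bespoke contradiction: in the potential case it shows $\pot(\mean_{\time})$ converges and then argues that a non-equilibrium accumulation point would force $\pot$ to increase by a fixed amount over short intervals; in the monotone case it shows $\gap_{\weight_{\time}}(\mean_{\time})+\weight_{\time}\to\delta$ and, if $\delta>0$, bounds the drift by $-\delta/2$ using the boundedness of $\subsel\hreg$. You instead extract the single conclusion $\int_{0}^{\infty} g_{\weight_{\time}}(\mean_{\time})\dd\time<\infty$ from Step~1 in both cases and close with the uniform ``finite-measure + uniform-continuity $\Rightarrow$ $\gap(\mean_{\time})\to0$'' argument. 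This is a modest but real streamlining: it unifies the two cases, and in the monotone case it in fact bypasses the only place where the paper invokes the boundedness of $\subsel\hreg$ (the paper's $\Delta$-estimate), so your parenthetical about needing that hypothesis is overly cautious.
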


\begin{remark*}
The assumption that $\hreg$ takes value in $[0,1]$ is only made for notational convenience and does not incur any loss of generality.
\end{remark*}

\begin{proof}
As before, we treat the potential and monotone cases separately.

\para{Potential case}
Let $\pot_{\weight_{\time}} (\mean_{\time}) = \pot (\mean_{\time})- \weight_{\time}\hreg(\mean_{\time})$.
Then, a calculation similar to the above implies that there exists $z(\time) \in \partial \hreg(\mean_{\time})$  such that
\begin{equation}
\ddt \pot_{\weight_{\time}}(\mean_{\time})=  \langle \payv (\mean_{\time})-\weight_{\time} z(\time),  \BR_{\weight_{\time} }(\mean_{\time}) - \mean_{\time}\rangle-\dot\weight_{\time} \hreg(\mean_{\time}) \geq \langle \payv_{\weight_{\time}} (\mean_{\time}),  \BR_{\weight_{\time} }(\mean_{\time}) - \mean_{\time}\rangle-\dot\weight_{\time} \hreg(\mean_{\time})
\end{equation}
The inequality being a consequence of \cref{lem:choice} as it implies that $\langle z(\time), \BR_{\weight_{\time} }(\mean_{\time}) - \mean_{\time}\rangle \leq \langle \subsel\hreg(\mean_{\time}), \BR_{\weight_{\time} }(\mean_{\time}) - \mean_{\time}\rangle$.
By \cref{lem:RBRD1},  $\langle \payv_{\weight_{\time}} (x), \BR_{\weight_{\time} }(x)-x \rangle \geq 0$.
Since $-\dot\weight_{\time} \hreg(\mean_{\time}) \geq 0$ we deduce that $\ddt \pot_{\weight_{\time}}(\mean_{\time}) \geq 0$ and  consequently $\pot_{\weight_{\time}}(\mean_{\time})$ converges.
As $\weight_{\time}$ goes to zero and $\hreg(\cdot)$ is bounded, $\pot(\mean_{\time})$ converges as well.
We will deduce from this that $\mean_{\time}$ converges to $\Eq(\game)$.
By contradiction, if a limit point $\mean_{0}=\lim_k \mean_{\time_k}$ is not in $\Eq(\game)$ then:
\begin{equation}
\gap (\mean_{0})
	= \max_{\stratalt\in\strats} \braket{\payv(\mean_{0})}{\stratalt-\mean_{0}}
	= \braket{\payv(\mean_{0})}{\BR(\mean_{0}) - \mean_{0}}
	= \alpha
	>0.
\end{equation} 
Recall that, by definition, for all $t$ and $\mean$:
\begin{align} 
\label{gap-equality}
\gap_{\weight_{\time}}(\mean)
	&= \max_{\stratalt\in\strats} \braket{\payv(\mean)}{\stratalt-\mean}-\weight_{\time} (\hreg(\stratalt)-\hreg(\mean))
	\notag\\
	&=\braket{\payv(\mean)}{\BR_{\weight_{\time} }(\mean)-\mean}-\weight_{\time} (\hreg(\BR_{\weight_{\time} }(\mean))-\hreg(\mean))
	\notag\\
	&= \max_{\stratalt\in\strats}V_{\weight_{\time}}(\mean, x') -  V_{\weight_{\time}}(\mean, \mean).
\end{align}
Also, as $\hreg$ takes values in $[0,1]$, $|V_{\weight }(x, x') -  V_{\weight }(x, x)  - (V_{\weight'}(x, x') -  V_{\weight '}(x, x)) | \leq  |\weight - \weight' |$.
Thus:
\begin{align}
\label{gap-inequality}
\abs{\gap_{\weight_{\time}}(\mean) -  \gap(\mean)}
	\leq \weight_{\time}.
\end{align}
As in \eqref{derivationP} one has  for all $k$ and $s$:
\begin{align}
\ddt \pot(\mean_{s+\time_k})&=\langle \payv (\mean_{s+\time_k}),  \BR_{\weight_{s+\time_{k}} }(\mean_{s+\time_k}) - \mean_{s+\time_k}\rangle \end{align}
Thus, using \eqref{gap-equality} then \eqref{gap-inequality} we deduce that:
\begin{align}
\ddt \pot(\mean_{s+\time_k})&\geq \gap_{\weight_{s+\time_{k}}}(\mean_{\time_k+s})- \weight_{s+\time_{k}} \geq \gap(\mean_{\time_k+s})- 2\weight_{s+\time_{k}}.
\end{align}

As the function $\mean \rightarrow \gap (\mean)$ is uniformly continuous and $\time \rightarrow \mean_{\time}$ is Lipschitz, recalling that $\gap (\mean_k) \rightarrow \alpha>0$, we deduce that there is $\delta >0$ and $K_1$ such that for $k\geq K_1$ and all $0\leq s \leq \delta $ one has: 
\begin{align}
\label{gap-inequality2} \gap(\mean_{\time_k+s}) \geq \gap(\mean_{\time_k})- \alpha/4 \geq 3\alpha/4
\end{align}
As $\weight_{\time}$ decreases to $0$ there is $K_2$ such that for all $k\geq K_2$, $\weight_{s+\time_{k}} \leq \alpha/4$.
From this, \eqref{gap-inequality} and \eqref{gap-inequality2}, we obtain that $ \ddt \pot(\mean_{s+\time_k}) \geq \alpha/2$ for all $0\leq s\leq \delta$,  for all $k\geq \max(K_1,K_2)$.
Thus  $\pot(\mean_{\delta+\time_k}) - \pot (\mean_{\time_k})= \int_0^\delta \ddt \pot(\mean_{s+\time_k}) ds \geq \delta \alpha/2$, which contradicts the convergence of  $\pot(\mean_{\time})$.\\

\para{Monotone case}
Using the envelope theorem, and a calculation similar to the above implies that there exists $z(\time) \in \partial \hreg(\mean_{\time})$  such that
\begin{subequations}
\begin{align}
\label{54}
\ddt \gap_{\weight_{\time}}(\mean_{\time}) 
	&= -\langle \payv(\mean_{\time})-\weight_{\time}z(\time) ,\dot \mean_{\time} \rangle + \dot \mean_{\time} \Jac\payv (\mean_{\time}) \dot \mean_{\time}  + \dot \weight_{\time} (\hreg(\mean_{\time})-\hreg(BR_{\weight_{\time} } (\mean_{\time}))
	\\
	&\leq -\langle \payv_{\weight_{\time}}(\mean_{\time}), \dot \mean_{\time} \rangle + \dot \mean_{\time} \Jac\payv (\mean_{\time}) \dot \mean_{\time}  + \dot \weight_{\time} (\hreg(\mean_{\time})-\hreg(BR_{\weight_{\time} } (\mean_{\time}))
	\\
	&\leq -\langle \payv_{\weight_{\time}}(\mean_{\time}), \dot \mean_{\time} \rangle + \dot \weight_{\time} (\hreg(\mean_{\time})-\hreg(BR_{\weight_{\time} } (\mean_{\time}))
	\\
\label{51}
	&= -\langle \payv_{\weight_{\time}}(\mean_{\time}), BR_{\weight_{\time}}- \mean_{\time} \rangle + \dot \weight_{\time} (\hreg(\mean_{\time})-\hreg(BR_{\weight_{\time}} (\mean_{\time})).
\end{align}
\end{subequations}
The first inequality is a consequence of \cref{lem:choice}, which implies that $\langle z(\time), \dot\mean_{\time}\rangle \leq \langle \subsel\hreg(\mean_{\time}), \dot\mean_{\time}\rangle$.
The second inequality is a consequence of $ \dot \mean_{\time} \Jac\payv(\mean_{\time}) \dot \mean_{\time} \leq 0$ as $\game$ is monotone.

We now proceed to prove that $\ddt ( \gap_{\weight_{\time}}(\mean_{\time})+\weight_{\time} ) <0$.
\begin{enumerate}
[\itshape {Case} 1:]
\item
If $\gap_{\weight_{\time}}(\mean_{\time})= 0$ then  $\mean_{\time}= BR_{\weight_{\time}} (\mean_{\time}) = \rNash[\weight_{\time}](\game)$ thus   $\dot \mean_{\time}=0$.
Then \eqref{54} implies  that $\ddt \gap_{\weight_{\time}}(\mean_{\time}) =0$ and thus $\ddt (\gap_{\weight_{\time}}(\mean_{\time}) + \weight_{\time})<0$.

\item
If $\gap_{\weight_{\time}}(\mean_{\time}) >0$ then since $\hreg(\cdot)  \in [0,1]$ and $ \dot \weight_{\time} <0$ we have:
\begin{align}
\label{bounding epsilon}
\dot \weight_{\time} (\hreg(\mean_{\time})-\hreg(BR_{\weight_{\time} } (\mean_{\time}))
	\leq - \dot \weight_{\time} 
\end{align}
and from  \cref{lem:RBRD1} and \eqref{51} we deduce that $\ddt ( \gap_{\weight_{\time}}(\mean_{\time})+\weight_{\time})<0$.
\end{enumerate}

Consequently, $\gap_{\weight_{\time}}(\mean_{\time}) + \weight_{\time}$  converges to $ \delta \geq 0$ and since $\weight_{\time}$ decreases to zero we conclude that $\gap_{\weight_{\time}}(\mean_{\time}) \rightarrow \delta$.
If $\delta>0$, define:
\begin{align}
\Delta
	&\defeq -\langle \payv_{\weight_{\time}}(\mean_{\time}), BR_{\weight_{\time}} (\mean_{\time} )- \mean_{\time} \rangle
	\notag\\
	&\,= -\langle \payv(\mean_{\time}), BR_{\weight_{\time}}(\mean_{\time} )- \mean_{\time} \rangle + \weight_{\time} \langle \subsel\hreg(\mean_{\time}), BR_{\weight_{\time}}(\mean_{\time} )- \mean_{\time} \rangle .
 \end{align}
From \eqref{gap-equality} we further get:
\begin{align}
\Delta &=-\gap_{\weight_{\time}}(\mean_{\time}) + \weight_{\time} (\hreg(\mean_{\time})-\hreg(BR_{\weight_{\time}} (\mean_{\time})) + \weight_{\time} \langle \subsel\hreg(\mean_{\time}), BR_{\weight_{\time}}(\mean_{\time} )- \mean_{\time} \rangle 
\end{align}
As $\gap_{\weight_{\time}}(\mean_{\time}) \rightarrow \delta$, and by the assumptions, $h(\cdot)$ and $\subsel h$ are bounded and $\weight_{\time}$ decreases to $0$, we deduce that for $t$ large enough:
\begin{equation}
\Delta
	= -\langle \payv_{\weight_{\time}}(\mean_{\time}), BR_{\weight_{\time}}(\mean_{\time} )- \mean_{\time} \rangle \leq -\delta/2.
\end{equation}
Consequently, for $t$ large enough we deduce from \eqref{51}, \eqref{bounding epsilon} and the last inequality that
\begin{equation}
\ddt ( \gap_{\weight_{\time}}(\mean_{\time})+\weight_{\time} )\leq -\langle \payv_{\weight_{\time}}(\mean_{\time}), BR_{\weight_{\time}}(\mean_{\time} )- \mean_{\time} \rangle <-\delta/2,
\end{equation}
 which is impossible because $\gap_{\weight_{\time}}(\mean_{\time})+\weight_{\time} \geq 0$.
 Hence $\delta=0$ and so $\mean_{\time} $ converges to $ \Eq(\game)$.
\end{proof}

\subsection{\Acl{DAD} with a constant learning rate}

Before stating our main result we establish some useful lemmas.
Define the regret associated to a trajectory $\strat_{\time}$  and a reference point  $\strat\in \strats$ as:
\begin{equation}
\label{eq:Regret}
\reg_{\strat}(\time)
	= \int_{0}^{\time} \braket{\payv(\strat_{\timealt})}{\strat-\strat_{\timealt}} \dd\timealt
\end{equation}

\begin{lemma}
\label{lem:reg}
Let $\strat_{\time}$ be a solution trajectory of \eqref{eq:DA-cont}.
Then:
\begin{equation}
\reg_{\strat}(\time)
	\leq \max\hreg - \min\hreg.
\end{equation}
\end{lemma}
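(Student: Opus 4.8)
The plan is to run the standard Fenchel--coupling energy argument for \eqref{eq:DA-cont}. For a base point $\base\in\strats$ and a dual vector $\dpoint\in\R^{\pures}$, set $\fench{\base}(\dpoint) \defeq \hreg(\base) + \hconj(\dpoint) - \braket{\dpoint}{\base}$; this quantity is nonnegative by the Fenchel--Young inequality, and it vanishes exactly when $\base=\choice{\hreg}(\dpoint)$. Since $\hconj$ is finite, convex and differentiable on $\R^{\pures}$ (the remark following \cref{def:reg}), it is in fact $C^{1}$ there with $\nabla\hconj = \choice{\hreg}$; hence $\dpoint\mapsto\fench{\base}(\dpoint)$ is $C^{1}$ and $\nabla_{\dpoint}\fench{\base}(\dpoint) = \choice{\hreg}(\dpoint) - \base$.

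First I would fix a reference point $\strat\in\strats$ and follow the energy $g(\time)\defeq\fench{\strat}(\score_{\time})$ along a solution of \eqref{eq:DA-cont} with the standard initialization $\score_{0}=0$ and (for the constant-rate case) $\learn_{\time}\equiv 1$, so that $\strat_{\time}=\choice{\hreg}(\score_{\time})$ and $\dot\score_{\time}=\payv(\strat_{\time})$. Since $\time\mapsto\score_{\time}$ is $C^{1}$, the chain rule gives
\[
\ddt g(\time)
	= \braket{\choice{\hreg}(\score_{\time}) - \strat}{\dot\score_{\time}}
	= \braket{\strat_{\time} - \strat}{\payv(\strat_{\time})}
	= -\braket{\payv(\strat_{\time})}{\strat - \strat_{\time}}.
\]
Integrating over $[0,\time]$ and comparing with the definition \eqref{eq:Regret} of $\reg_{\strat}$, we get $g(\time) - g(0) = -\reg_{\strat}(\time)$, that is, $\reg_{\strat}(\time) = \fench{\strat}(\score_{0}) - \fench{\strat}(\score_{\time}) \leq \fench{\strat}(\score_{0})$, the last step using $\fench{\strat}\geq 0$.

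It then remains to evaluate the right-hand side: at $\score_{0}=0$ we have $\hconj(0) = \max_{\stratalt\in\strats}\{-\hreg(\stratalt)\} = -\min\hreg$, so $\fench{\strat}(0) = \hreg(\strat) - \min\hreg \leq \max\hreg - \min\hreg$, which is the claimed bound. I do not anticipate a genuine obstacle here; the only points that require care are (i) the regularity justifying the chain rule (differentiability of $\hconj$ and continuity of $\choice{\hreg}$, both available from \cref{def:reg} and its remark), (ii) the bookkeeping for the initialization and the learning rate (a general $\score_{0}$ only replaces the bound by $\fench{\strat}(\score_{0})$, while a constant $\learn\neq 1$ merely rescales $\hreg$, and hence the bound, by $1/\learn$), and (iii) the nonnegativity of the Fenchel coupling, which is immediate from Fenchel--Young applied to the pair $(\hreg,\hconj)$.
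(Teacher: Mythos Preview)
Your proposal is correct and follows essentially the same approach as the paper: define the Fenchel coupling $\fench{\hreg}(\strat,\score_{\time})$, differentiate along the trajectory to get $\ddt\fench{\hreg}(\strat,\score_{\time}) = -\ddt\reg_{\strat}(\time)$, integrate, and use nonnegativity together with $\fench{\hreg}(\strat,0) = \hreg(\strat) + \hconj(0) \leq \max\hreg - \min\hreg$. The paper's proof is slightly terser (it cites \citet{MZ19} for the derivative computation), but the argument is the same.
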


This is a particular case of a more general bound due to \citet{KM17} (see \cref{lem:reg2} below).
The next more compact proof of this simpler bound follows \citet{BM17} and \citet{Sor21}.

\begin{proof}
Define for $\strat\in\strats$ and $\score\in\R^{\pures}$ the Fenchel coupling
\begin{equation}
\label{eq:Fench}
\fench{\hreg}(x,y)
	= \hreg(x) + \hconj(y) - \braket{x}{y}.
\end{equation}
By the Fenchel-Young inequality, $\fench{\hreg}(x,y) \geq 0$.
Recall that from \eqref{eq:dhconj} and \eqref{eq:DA-int} $\strat_{\time}=\nabla\hconj(\score_{\time})=\choice{\hreg}(\score_t)$ with $\score_{\time}=\int_0^\time \payv(\strat_s)ds $.
Then, following \citet[Lemma~A.2]{MZ19}, a simple derivation gives that for all $\strat\in \strats$:
\begin{equation}
\label{FenchelDerivation}
\ddt\fench{\hreg}(\strat,\score_{\time})
	= \braket{\payv(\strat_{\time})}{\nabla\hconj(\score_{\time}) - \strat}
	= \braket{\payv(\strat_{\time})}{\strat_{\time} - \strat}
	= -\ddt \reg_{\strat}(\time),
\end{equation}
Consequently, following the reasoning of \citet{Sor21}, we get
\begin{equation*}
\reg_{\strat}(\time)
	\leq \fench{\hreg}(\strat,0)
	= \hreg(\strat) + \hconj(0)
	\leq \max\hreg +\hconj(0)
	=  \max\hreg -\min\hreg
	\qedhere
\end{equation*}
\end{proof}

\begin{lemma}
\label{lem:DA-cont}
If a solution trajectory $\strat_{\time} = \eq$  of \eqref{eq:DA-cont} is stationary, then $\eq \in \Eq(\game)$.
\end{lemma}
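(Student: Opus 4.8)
The plan is to read off the conclusion directly from the regret bound of \cref{lem:reg}. Suppose $\strat_{\time} = \eq$ for all $\time \geq 0$ is a stationary solution of \eqref{eq:DA-cont}. Then, for an arbitrary reference point $\strat \in \strats$, the regret defined in \eqref{eq:Regret} simplifies, since the integrand no longer depends on $\timealt$, to
\begin{equation*}
\reg_{\strat}(\time)
	= \int_{0}^{\time} \braket{\payv(\eq)}{\strat - \eq} \dd\timealt
	= \time \, \braket{\payv(\eq)}{\strat - \eq}.
\end{equation*}
On the other hand, \cref{lem:reg} furnishes a bound $\reg_{\strat}(\time) \leq \max\hreg - \min\hreg$ which is finite and, crucially, independent of $\time$.

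Combining the two, I obtain $\time \, \braket{\payv(\eq)}{\strat - \eq} \leq \max\hreg - \min\hreg$ for every $\time > 0$; dividing by $\time$ and letting $\time \to \infty$ forces $\braket{\payv(\eq)}{\strat - \eq} \leq 0$. As $\strat \in \strats$ was arbitrary, this is exactly the Stampacchia inequality \eqref{eq:SVI} at $\eq$, so $\eq \in \SVI(\game) \equiv \Eq(\game)$, as claimed.

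There is no substantive obstacle here; the two points worth checking are (i) that the constant curve $\strat_{\time} \equiv \eq$ paired with the affine dual path $\score_{\time} = \score_{0} + \time\,\payv(\eq)$ genuinely satisfies \eqref{eq:DA-cont} — which is precisely the stationarity hypothesis, namely $\eq = \choice{\hreg}(\learn(\score_{0} + \time\,\payv(\eq)))$ for all $\time$ — so that \cref{lem:reg} applies, and (ii) that the bound of \cref{lem:reg} is uniform in $\time$, which is what legitimizes the division-and-limit step. If one prefers to bypass \cref{lem:reg}, the same conclusion follows directly from the variational characterization \eqref{eq:choice} of the choice map: stationarity gives $\braket{\learn\score_{0} + \learn\time\,\payv(\eq)}{\strat - \eq} \leq \hreg(\strat) - \hreg(\eq)$ for all $\strat \in \strats$, and dividing by $\learn\time$ and sending $\time \to \infty$ yields \eqref{eq:SVI} once more.
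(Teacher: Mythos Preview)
Your proof is correct and matches the paper's own argument essentially line for line: both compute the regret along the constant trajectory $\strat_{\time}\equiv\eq$, invoke the uniform bound of \cref{lem:reg}, divide by $\time$, and let $\time\to\infty$ to obtain \eqref{eq:SVI}. The additional remarks you give in the final paragraph (the check that the constant curve is indeed a solution, and the alternative direct argument via \eqref{eq:choice}) are not in the paper, but they are sound and do not change the substance of the proof.
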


\begin{proof}
From the previous lemma, for all $t>0$ and $\strat\in\strats$ one has
\begin{equation}
\braket{\payv(\eq)}{\strat-\eq}
	= \frac{\reg_{\strat}(\time)}{\time}
	\leq \frac{\max\hreg - \min\hreg}{\time},
\end{equation}
Letting $\time\to\infty$ implies $\eq\in\Eq(\game)$, as was to be shown.
\end{proof}

\begin{lemma}\label{lemmaR4}
If $\hreg$ satisfies \eqref{eq:strong}, then for all $\strat\in\strats$ and all sequences $\curr[\score]$ in $\R^{\pures}$.
\begin{equation}
\label{R4}
\fench{\hreg}(\strat,\curr[\score])
	\to 0
\implies	
\choice{\hreg}(\curr[\score])
	\to \strat
\end{equation}

\end{lemma}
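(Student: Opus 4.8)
The plan is to derive the quadratic lower bound
\[
\fench{\hreg}(\strat,\score)
	\geq \frac{\hstr}{2}\,\norm{\choice{\hreg}(\score) - \strat}^{2}
	\qquad \text{for all $\strat\in\strats$ and $\score\in\R^{\pures}$,}
\]
from which \eqref{R4} is immediate: if $\fench{\hreg}(\strat,\curr[\score])\to0$, the bound forces $\norm{\choice{\hreg}(\curr[\score]) - \strat}\to0$, \ie $\choice{\hreg}(\curr[\score])\to\strat$.

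To prove the bound, set $\strat^{+} = \choice{\hreg}(\score)$. Since $\hreg$ is extended-real-valued with $\dom\hreg = \strats$ (\cref{def:reg}), the defining maximization \eqref{eq:choice} for $\strat^{+}$ has optimality condition $\score\in\subd\hreg(\strat^{+})$ (the normal cone of $\strats$ at boundary points is already absorbed into $\subd\hreg$). Moreover, by the very definition of $\strat^{+}$ we have $\hconj(\score) = \braket{\score}{\strat^{+}} - \hreg(\strat^{+})$, so the Fenchel coupling \eqref{eq:Fench} simplifies to
\[
\fench{\hreg}(\strat,\score)
	= \hreg(\strat) - \hreg(\strat^{+}) - \braket{\score}{\strat - \strat^{+}}.
\]
Next I would use that \eqref{eq:strong} is equivalent to convexity of $\hreg - (\hstr/2)\norm{\cdot}^{2}$ on $\strats$; combined with $\score\in\subd\hreg(\strat^{+})$ this gives the strong-convexity subgradient inequality
\[
\hreg(\strat)
	\geq \hreg(\strat^{+}) + \braket{\score}{\strat - \strat^{+}} + \frac{\hstr}{2}\,\norm{\strat - \strat^{+}}^{2}
	\qquad \text{for all $\strat\in\strats$,}
\]
and substituting this into the identity above yields exactly the desired bound.

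The only genuinely delicate points are (i) the passage from the midpoint form \eqref{eq:strong} of strong convexity to the subgradient form used above, which I would justify by the standard one-dimensional reduction along the segment $[\strat^{+},\strat]$, and (ii) the precise form of the first-order optimality condition for the constrained, extended-real-valued program \eqref{eq:choice}. Neither is difficult, but both are exactly where the hypothesis \eqref{eq:strong} (and the structure of $\hreg$ from \cref{def:reg}) enters; differentiability of $\hconj$, recorded after \cref{def:reg}, is not needed here.
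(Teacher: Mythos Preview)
Your proposal is correct and follows essentially the same route as the paper. The paper's proof (deferred to \cref{lem:Fench2norm} in the appendix) also derives the quadratic lower bound $\fench{\hreg}(\strat,\score)\geq\tfrac{\hstr}{2}\norm{\choice{\hreg}(\score)-\strat}^{2}$ via the identity $\fench{\hreg}(\strat,\score)=\hreg(\strat)-\hreg(\strat^{+})-\braket{\score}{\strat-\strat^{+}}$; the only cosmetic difference is that instead of packaging the argument through the inclusion $\score\in\subd\hreg(\strat^{+})$ and a pre-established subgradient form of strong convexity, the paper works directly with the argmax inequality $\braket{\score}{\strat^{+}}-\hreg(\strat^{+})\geq\braket{\score}{z}-\hreg(z)$ applied at $z=(1-\lambda)\strat^{+}+\lambda\strat$, combines it with the midpoint inequality \eqref{eq:strong}, and passes to the limit $\lambda\to0$ \textemdash\ which is exactly the ``one-dimensional reduction'' you anticipated.
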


\begin{proof}
See \cref{lem:Fench2norm} in the appendix.
\end{proof}

We also need to assume the so called ``reciprocity' condition  which requires that for all $\strat\in\strats$ and all sequences $\curr[\score]$ in $\R^{\pures}$:
\begin{equation}
\label{eq:reciprocity}
\tag{Rec}
\choice{\hreg}(\curr[\score])
	\to \strat
	\implies
\fench{\hreg}(\strat,\curr[\score])
	\to 0
\end{equation}
Taken together, \eqref{eq:strong} and \eqref{eq:reciprocity} imply that the sequence $\curr[\strat] = \choice{\hreg}(\curr[\score])$ converges to $\strat\in\strats$ if and only if $\fench{\hreg}(\strat,\curr[\score]) \to 0$.
The terminology ``reciprocity'' is justified by the fact that, under \eqref{eq:reciprocity}, the topology induced by the level sets of the Fenchel coupling becomes equivalent to the ordinary topology on $\strats$;
for a precise statement, see \cref{lem:Fench2norm} in \cref{app:mirror}.
This is a subtle \textendash\ but important \textendash\ requirement, which is fortunately satisfied by all the standard regularization choices on $\strats$, \cf \citep{MerSta18,MZ19}.

We can now state our convergence results for \eqref{eq:DA-cont} when the learning rate is constant.

\begin{theorem}
\label{thm:DA-cont}
Let $\strat_{\time} $ be a solution trajectory of \eqref{eq:DA-cont} with constant learning rate $\learn_{\time} \equiv 1$.
Assume further that one of the following holds:
\begin{enumerate}
[\upshape(\itshape a\hspace*{.5pt}\upshape)]
\item
$\game$ is potential and $\hconj$ is twice differentiable.
\item
$\game$ is strictly monotone, and $\hreg$ satisfies \eqref{eq:strong} and \eqref{eq:reciprocity}.
\end{enumerate}
Then $\strat_{\time}$ converges to $\Eq(\game)$.
\end{theorem}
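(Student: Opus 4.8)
The plan is to treat the two cases with different Lyapunov functionals, both built on the identities already recorded: since $\learn_{\time}\equiv1$ we have $\strat_{\time}=\nabla\hconj(\score_{\time})$ and $\dot\score_{\time}=\payv(\strat_{\time})$, so \eqref{FenchelDerivation} reads $\ddt\fench{\hreg}(\strat,\score_{\time})=\braket{\payv(\strat_{\time})}{\strat_{\time}-\strat}=-\ddt\reg_{\strat}(\time)$, while \cref{lem:reg} bounds $\reg_{\strat}(\time)$ by $\max\hreg-\min\hreg$ uniformly in $\time$ and $\strat$. For the \emph{potential case} I would use $\lyap(\time)\defeq\pot(\strat_{\time})$. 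Because $\hconj$ is twice differentiable, the chain rule gives $\dot\strat_{\time}=\nabla^{2}\hconj(\score_{\time})\dot\score_{\time}=\nabla^{2}\hconj(\score_{\time})\payv(\strat_{\time})$, so, arguing as in \eqref{derivationP} and using \eqref{eq:pot} with direction $\tvec=\dot\strat_{\time}$,
\[
\ddt\pot(\strat_{\time})
	=\braket{\payv(\strat_{\time})}{\dot\strat_{\time}}
	=\braket{\payv(\strat_{\time})}{\nabla^{2}\hconj(\score_{\time})\payv(\strat_{\time})}
	\geq0,
\]
the inequality holding because $\hconj$ is convex, so its Hessian is positive semidefinite. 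Hence $\pot(\strat_{\time})$ is nondecreasing; being bounded ($\pot$ is continuous on the compact set $\strats$) it converges to some $\pot^{\ast}$, and therefore $\int_{0}^{\infty}\braket{\payv(\strat_{\time})}{\nabla^{2}\hconj(\score_{\time})\payv(\strat_{\time})}\dd\time<\infty$. It then remains to show that every accumulation point $\strat_{\infty}$ of $\strat_{\time}$ lies in $\Eq(\game)$. I would argue this by contradiction: if $\gap(\strat_{\infty})>0$, then along any $\time_{k}\to\infty$ with $\strat_{\time_{k}}\to\strat_{\infty}$ the orbit cannot leave a fixed neighbourhood of $\strat_{\infty}$ too fast, so $\pot(\strat_{\time})$ would be forced to increase by a fixed positive amount over infinitely many disjoint windows, contradicting integrability of $\ddt\pot(\strat_{\time})$; on $\relint\strats$ the uniform estimates needed here come from continuity of $\nabla^{2}\hconj$ and local boundedness of $\subd\hreg$, while accumulation on $\bd\strats$ is handled by a limiting/time-shift argument identifying the reached face as best responding (equivalently, by \cref{lem:DA-cont} along a convergent sub-trajectory, or by the regret bound).

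For the \emph{strictly monotone case}, let $\eq$ be the unique equilibrium of $\game$ (see \cref{sec:classes}). Combining \eqref{eq:monotone} with the fact that $\eq$ solves \eqref{eq:SVI},
\[
\braket{\payv(\strat_{\time})}{\strat_{\time}-\eq}
	=\braket{\payv(\strat_{\time})-\payv(\eq)}{\strat_{\time}-\eq}+\braket{\payv(\eq)}{\strat_{\time}-\eq}
	\leq0,
\]
so \eqref{FenchelDerivation} shows the Fenchel coupling $\fench{\hreg}(\eq,\score_{\time})$ is nonincreasing; being nonnegative by \eqref{eq:Fench}, it converges to some $\ell\geq0$. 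Equivalently, $f(\time)\defeq\braket{\payv(\strat_{\time})}{\eq-\strat_{\time}}=\ddt\reg_{\eq}(\time)\geq0$ with $\int_{0}^{\infty}f(\time)\dd\time\leq\max\hreg-\min\hreg<\infty$ (\cref{lem:reg}), so $\liminf_{\time\to\infty}f(\time)=0$. I would then pick $\time_{k}\to\infty$ with $f(\time_{k})\to0$ and, along a subsequence, $\strat_{\time_{k}}\to\strat_{\infty}\in\strats$; continuity of $\payv$ gives $\braket{\payv(\strat_{\infty})}{\eq-\strat_{\infty}}=0$, and strict monotonicity together with $\braket{\payv(\eq)}{\strat_{\infty}-\eq}\leq0$ forces $\strat_{\infty}=\eq$. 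Hence $\choice{\hreg}(\score_{\time_{k}})=\strat_{\time_{k}}\to\eq$, so \eqref{eq:reciprocity} yields $\fench{\hreg}(\eq,\score_{\time_{k}})\to0$, i.e.\ $\ell=0$, whence $\fench{\hreg}(\eq,\score_{\time})\to0$; then \eqref{eq:strong} and \cref{lemmaR4} give $\strat_{\time}\to\eq=\Eq(\game)$. (The same subsequential argument in fact shows directly that every accumulation point of $\strat_{\time}$ equals $\eq$; the point of \eqref{eq:strong} and \eqref{eq:reciprocity} is to phrase the conclusion through the Fenchel coupling, in parallel with the discrete-time treatment.)

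The step I expect to be the main obstacle is the identification of accumulation points in the potential case. Unlike in \cref{thm:BRD}, where $\ddt\pot(\mean_{\time})=\gap(\mean_{\time})$ directly measures disequilibrium, here the preconditioner $\nabla^{2}\hconj(\score_{\time})$ may degenerate as $\strat_{\time}$ approaches $\bd\strats$, so $\ddt\pot(\strat_{\time})$ need not dominate $\gap(\strat_{\time})$; mere integrability of $\ddt\pot(\strat_{\time})$ therefore does not rule out the orbit lingering near a non-equilibrium state on the boundary, and excluding this is where the argument has to do real work.
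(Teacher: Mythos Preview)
Your argument for the strictly monotone case is essentially identical to the paper's: you both show $\fench{\hreg}(\eq,\score_{\time})$ is nonincreasing, extract a subsequence along which $\braket{\payv(\strat_{\time_{k}})}{\strat_{\time_{k}}-\eq}\to0$, use strict monotonicity (equivalently, the \ac{GESS} property of $\eq$) to force $\strat_{\time_{k}}\to\eq$, then invoke \eqref{eq:reciprocity} and \cref{lemmaR4} to conclude. The only cosmetic difference is that the paper argues by assuming a uniform negative bound on $\braket{\payv(\strat_{\time})}{\strat_{\time}-\eq}$ and deriving $\fench{\hreg}\to-\infty$, whereas you phrase the same step via integrability of $f(\time)$; these are equivalent.

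For the potential case, you set up the same Lyapunov function $\pot(\strat_{\time})$ and compute the same derivative $\payv(\strat_{\time})^{\!\top}\nabla^{2}\hconj(\score_{\time})\payv(\strat_{\time})\geq0$. The difference is in how the argument is closed. The paper simply notes that the inequality is strict whenever $\dot\strat_{\time}\neq0$ (since for a PSD matrix $M$, $v^{\top}Mv=0\iff Mv=0$) and concludes in one line, ``As such, $\strat_{\time}$ converges to $\Eq(\game)$'', treating this as a standard Lyapunov step. It does not spell out the passage from rest points of $\dot\strat_{\time}$ to elements of $\Eq(\game)$, nor does it discuss the boundary degeneracy of $\nabla^{2}\hconj$ that you flag. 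So your proposal follows the paper's approach exactly up to the point where you try to be more careful; the ``main obstacle'' you identify is a step the paper takes for granted rather than one it resolves explicitly.
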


\begin{proof}
Again, we treat the potential and monotone cases separately.

\para{Potential case}
As $\score_{\time}=\int_0^\time \payv(\strat_{\timealt}) \dd\timealt $ and $\strat_{\time} = \choice{\hreg}(\score_{\time}) = \nabla\hconj(\score_{\time})$ we obtain that: 
\begin{equation}
\dot\strat_{\time}
	= \braket{\payv (\strat_{\time})}{\nabla^{2}\hconj (\score_{\time})}.
\end{equation}
Since $\hconj$ is convex, $\nabla^2\hconj $ is semidefinite positive and so:
\begin{equation}
\ddt\pot(\strat_{\time})
	= \braket{\payv (\strat_{\time})}{\dot\strat_{\time}}
	= \payv(\strat_{\time})\nabla^2\hconj (\score_{\time})\payv (\strat_{\time})
	\geq 0,
\end{equation}
the inequality being strict whenever $\dot\strat_{\time} \neq 0$.
As such, $\strat_{\time}$ converges to $\Eq(\game)$. 

\para{Monotone case}
As $\game$ is strictly monotone, $\Eq(\game)$ is a singleton which coincides with $\eq$ the (necessarily unique) \ac{GESS} of the game.
The calculation in \eqref{FenchelDerivation} gives:
\begin{equation}
\ddt \fench{\hreg}(\eq,\score_{\time})
	= \braket{\payv(\strat_{\time})}{\strat_{\time}-\eq}
	\leq 0
\end{equation}
 As $\eq$ is a \ac{GESS} $\ddt \fench{\hreg}(\eq,\score_{\time})\leq 0$, with equality if and only if $\strat_{\time}=\choice{\hreg}(\score_{\time}) = \eq$.
 Suppose there is $\delta >0$ such that $\braket{\payv(\strat_{\time})}{\strat_{\time}-\eq} < -\delta<0$ for all $t$ large enough.
 Then, $\fench{\hreg}(\eq,\score_{\time})$ goes to $-\infty$ which is impossible because $\fench{\hreg} (\cdot) \geq 0$.
 Thus there is $\time_k \rightarrow \infty$ such that $\braket{\payv(\strat_{\time_k})}{\strat_{\time_k}-\eq} \rightarrow 0$ and so $x_{t_k} = \choice{\hreg}(\score_{t_k}) \rightarrow \eq$.
 By the reciprocity condition \eqref{eq:reciprocity}, we deduce that $\fench{\hreg}(\eq,\score_{t_k}) \rightarrow 0 $ since $\eq$ is a \ac{GESS}.
 As $\ddt \fench{\hreg}(\eq,\score_{\time})\leq 0$ and $\fench{\hreg} (\cdot) \geq 0$, necessarily $\fench{\hreg}(\eq,\score_{\time})$ converges and by the last argument, its limit is zero.
 Using the strong convexity of $h$, \cref{lemmaR4} implies that $\strat_{\time}=\choice{\hreg}(\score_{\time})$ converges to $\eq$. 
\end{proof}

There are several things to note here.
First, the proof of convergence to the unique equilibrium in strictly monotone games (part b of \cref{thm:DA-cont}) extends to all games with a \ac{GESS}.
Second, the same proof implies that in all games, ESS (if exist) are asymptotically stable under \eqref{eq:DA-cont} with constant learning rate (\eg the dynamics converges to an \ac{ESS} if it starts sufficiently close to it).
Third, \cref{thm:DA-cont} concerns the ``day-to-day'' evolution of  \eqref{eq:DA-cont}, not the corresponding empirical mean.
Last, the theorem does not apply to monotone games which are not strict.

To complete the picture, the next proposition shows the convergence of the empirical mean to the set of Nash equilibria in all monotone games.

\begin{proposition}
\label{prop:DA-cont-avg}
Let $\strat_{\time} = \choice{\hreg}(\score_{\time})$ be a solution trajectory of \eqref{eq:DA-cont} with constant learning rate $\learn_{\time} \equiv 1$.
If the underlying game is monotone, then $\mean_{\time}= \frac{1}{t} \int_0^t \strat_s ds$ converges to $\Eq(\game)$.
\end{proposition}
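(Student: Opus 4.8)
The plan is to combine the uniform regret bound of \cref{lem:reg} with the monotonicity of $\payv$ to show that every accumulation point of $\mean_{\time}$ solves \eqref{eq:MVI}, and is therefore an equilibrium of $\game$.

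First I would lower-bound the regret functional using \eqref{eq:monotone}. Fix an arbitrary test point $\strat\in\strats$. Monotonicity gives $\braket{\payv(\strat_{\timealt}) - \payv(\strat)}{\strat_{\timealt} - \strat} \leq 0$, hence $\braket{\payv(\strat_{\timealt})}{\strat - \strat_{\timealt}} \geq \braket{\payv(\strat)}{\strat - \strat_{\timealt}}$ for every $\timealt\geq 0$. Integrating over $[0,\time]$ and using $\int_{0}^{\time} \strat_{\timealt}\dd\timealt = \time\,\mean_{\time}$ yields
\[
\reg_{\strat}(\time)
	= \int_{0}^{\time} \braket{\payv(\strat_{\timealt})}{\strat - \strat_{\timealt}}\dd\timealt
	\geq \time\,\braket{\payv(\strat)}{\strat - \mean_{\time}}.
\]
Combined with \cref{lem:reg}, which bounds $\reg_{\strat}(\time)$ by $\max\hreg - \min\hreg$ uniformly in $\time$ and in $\strat$, this gives
\[
\braket{\payv(\strat)}{\strat - \mean_{\time}}
	\leq \frac{\max\hreg - \min\hreg}{\time}
	\qquad \text{for all } \strat \in \strats, \ \time > 0 .
\]

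Next I would pass to the limit. Let $\mean_{0}$ be an accumulation point of $\mean_{\time}$ as $\time\to\infty$, say $\mean_{\time_{\runalt}}\to\mean_{0}$ along some sequence $\time_{\runalt}\to\infty$; such points exist because $\strats$ is compact and $\mean_{\time}\in\strats$ for every $\time>0$. For each fixed $\strat\in\strats$, letting $\runalt\to\infty$ in the previous display sends its right-hand side to $0$, while its left-hand side converges to $\braket{\payv(\strat)}{\strat - \mean_{0}}$ (the second argument of the bracket enters affinely). Hence $\braket{\payv(\strat)}{\strat - \mean_{0}} \leq 0$ for all $\strat\in\strats$, i.e., $\mean_{0}\in\MVI(\game)$. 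Since $\MVI(\game)\subseteq\SVI(\game)\equiv\Eq(\game)$ under our standing continuity assumption on $\payv$ (and in fact $\MVI(\game)\equiv\Eq(\game)$ when $\game$ is monotone), every accumulation point of $\mean_{\time}$ lies in $\Eq(\game)$, which is precisely the asserted convergence.

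The argument merely chains together facts already available in the paper, so I do not foresee a genuine obstacle; the one point deserving care is the order of quantifiers. It is essential that \cref{lem:reg} bounds $\reg_{\strat}(\time)$ \emph{uniformly over $\strat\in\strats$}: this is exactly what lets $\strat$ be held fixed while $\time\to\infty$, and hence what upgrades the conclusion from a Stampacchia-type inequality to the full Minty inequality \eqref{eq:MVI} that pins $\mean_{0}$ down as an equilibrium.
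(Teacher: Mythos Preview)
Your proof is correct and follows essentially the same route as the paper: bound the regret via \cref{lem:reg}, use monotonicity to replace $\payv(\strat_{\timealt})$ by $\payv(\strat)$ inside the integral, divide by $\time$, and conclude that every accumulation point of $\mean_{\time}$ solves \eqref{eq:MVI} and hence lies in $\Eq(\game)$. The only difference is that you spell out the accumulation-point argument in more detail than the paper does.
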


\begin{proof}
By \cref{lem:reg} and monotonicity of the game, for all $\strat\in\strats$:
\begin{equation}
\max\hreg - \min\hreg
	\geq \reg_{\strat}(\time)
	= \int_{0}^{\time} \braket{\payv(\strat_{\timealt})}{\strat - \strat_{\timealt}} \dd\timealt
	\geq \int_{0}^{\time} \braket{\payv(\strat)}{\strat - \strat_{\timealt}} \dd\timealt
\end{equation}
and hence
\begin{equation}
\frac{\max\hreg - \min\hreg}{\time}
	\geq \braket{\payv(\strat)}{\strat - \mean_{\time}}.
\end{equation}
Consequently, $\mean_{\time}$ converges to  $\MVI(\game)$ which coincides with  $\Eq(\game)$ as the game is monotone.
\end{proof}

\subsection{\Acl{DAD} with a variable learning rate}

This subsection extends the results of the previous section on monotone games in two directions:
we consider \acp{GESS} instead of (strictly) monotone games, and we treat dynamics with a variable learning rate.
Other extensions are also possible, but we do not treat them here.

We begin with a technical lemma on the regret incurred by \eqref{eq:DA-cont}:

\begin{lemma}
\label{lem:reg2}
If $\strat_{\time}$ is a solution trajectory of \eqref{eq:DA-cont} then the regret is bounded as follows:
\begin{equation}
\reg_{\strat}(\time)
	\leq \frac{\max\hreg - \min\hreg}{\learn_{\time}}.
\end{equation}
Consequently, if $\time\learn_{\time} \to \infty$ and $\strat_{\time} = \eq$ for all $\time\geq0$, then $\eq \in \Eq(\game)$.
\end{lemma}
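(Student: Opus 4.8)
The plan is to recycle the Fenchel-coupling machinery from the proof of \cref{lem:reg}, but now carry the learning rate $\learn_{\timealt}$ along. We may assume $\learn_{\time}>0$ throughout (otherwise the asserted bound is $+\infty$ and nothing is to prove), and then monotonicity of $\learn$ gives $\learn_{\timealt}\geq\learn_{\time}>0$ for every $\timealt\in[0,\time]$, so dividing by $\learn_{\timealt}$ is legitimate. Recall from \eqref{eq:dhconj}, \eqref{eq:DA-int} that a solution of \eqref{eq:DA-cont} satisfies $\strat_{\timealt}=\choice{\hreg}(\learn_{\timealt}\score_{\timealt})=\nabla\hconj(\learn_{\timealt}\score_{\timealt})$ with $\score_{\timealt}=\int_{0}^{\timealt}\payv(\strat_{u})\dd u$ (so $\score_{0}=0$), and that, by the Fenchel--Young inequality \eqref{eq:Fench}, $\fench{\hreg}\geq0$ with equality at the pair $(\strat_{\timealt},\learn_{\timealt}\score_{\timealt})$.

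The core step is to differentiate $F(\timealt)\defeq\fench{\hreg}(\strat,\learn_{\timealt}\score_{\timealt})$. Using $\dot\score_{\timealt}=\payv(\strat_{\timealt})$, the product rule, and $\nabla\hconj(\learn_{\timealt}\score_{\timealt})=\strat_{\timealt}$, one gets $\tfrac{d}{d\timealt}F(\timealt)=\braket{\strat_{\timealt}-\strat}{\dot\learn_{\timealt}\score_{\timealt}+\learn_{\timealt}\payv(\strat_{\timealt})}$. In the second summand, $\learn_{\timealt}\braket{\strat_{\timealt}-\strat}{\payv(\strat_{\timealt})}=-\learn_{\timealt}\,\tfrac{d}{d\timealt}\reg_{\strat}(\timealt)$ by \eqref{eq:Regret}; in the first, the Fenchel--Young equality at $(\strat_{\timealt},\learn_{\timealt}\score_{\timealt})$ together with the definition of $F$ rewrites $\braket{\strat_{\timealt}-\strat}{\learn_{\timealt}\score_{\timealt}}=\hreg(\strat_{\timealt})-\hreg(\strat)+F(\timealt)$. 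Solving the resulting identity for $\tfrac{d}{d\timealt}\reg_{\strat}(\timealt)$ and observing that the two terms carrying $F(\timealt)$ and $\dot\learn_{\timealt}$ recombine into the exact derivative $-\tfrac{d}{d\timealt}\bracks*{F(\timealt)/\learn_{\timealt}}$, I obtain
\begin{equation*}
\frac{d}{d\timealt}\reg_{\strat}(\timealt)
 = \frac{\dot\learn_{\timealt}}{\learn_{\timealt}^{2}}\bracks*{\hreg(\strat_{\timealt})-\hreg(\strat)}
 \;-\; \frac{d}{d\timealt}\bracks*{\frac{F(\timealt)}{\learn_{\timealt}}}.
\end{equation*}

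Integrating this from $0$ to $\time$ with $\reg_{\strat}(0)=0$ then finishes the estimate. The telescoping part contributes $F(0)/\learn_{0}-F(\time)/\learn_{\time}\leq F(0)/\learn_{0}$ since $F\geq0$, and $F(0)=\fench{\hreg}(\strat,0)=\hreg(\strat)+\hconj(0)=\hreg(\strat)-\min\hreg\leq\max\hreg-\min\hreg$, so $F(0)/\learn_{0}\leq(\max\hreg-\min\hreg)/\learn_{0}$. For the integral term, $\dot\learn_{\timealt}\leq0$ and $\abs{\hreg(\strat_{\timealt})-\hreg(\strat)}\leq\max\hreg-\min\hreg$ give $\tfrac{\dot\learn_{\timealt}}{\learn_{\timealt}^{2}}\bracks*{\hreg(\strat_{\timealt})-\hreg(\strat)}\leq-(\max\hreg-\min\hreg)\,\tfrac{\dot\learn_{\timealt}}{\learn_{\timealt}^{2}}$, and $\int_{0}^{\time}\!\bigl(-\tfrac{\dot\learn_{\timealt}}{\learn_{\timealt}^{2}}\bigr)\dd\timealt=\tfrac{1}{\learn_{\time}}-\tfrac{1}{\learn_{0}}$. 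Adding the two contributions, the $1/\learn_{0}$ terms cancel and we are left with $\reg_{\strat}(\time)\leq(\max\hreg-\min\hreg)/\learn_{\time}$, which is the claimed bound (and reduces to \cref{lem:reg} when $\learn_{\time}\equiv1$). For the final assertion, if $\strat_{\time}=\eq$ for all $\time\geq0$ then $\reg_{\strat}(\time)=\time\braket{\payv(\eq)}{\strat-\eq}$, hence $\braket{\payv(\eq)}{\strat-\eq}\leq(\max\hreg-\min\hreg)/(\time\learn_{\time})\to0$ as $\time\to\infty$ under the hypothesis $\time\learn_{\time}\to\infty$; thus $\braket{\payv(\eq)}{\strat-\eq}\leq0$ for every $\strat\in\strats$, i.e.\ $\eq$ solves \eqref{eq:SVI}, so $\eq\in\Eq(\game)$ — exactly the conclusion argument of \cref{lem:DA-cont}.

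The main obstacle is the bookkeeping in the core step: one must recognize that, after isolating $\tfrac{d}{d\timealt}\reg_{\strat}$, the variable-rate correction terms organize precisely into the exact derivative $-\tfrac{d}{d\timealt}\bracks*{F(\timealt)/\learn_{\timealt}}$, leaving an integrand of definite sign; once this structure is seen, the remainder is a routine sign-and-telescoping estimate. A secondary (minor) point is to record that positivity of $\learn$ on $[0,\time]$ — needed to divide by $\learn_{\timealt}$ — follows from the standing monotonicity of $\learn$ together with the harmless reduction to $\learn_{\time}>0$.
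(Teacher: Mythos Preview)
Your argument is correct. The paper itself does not actually prove the regret bound: it simply invokes \cite[Theorem~4.1]{KM17} and then repeats the implication argument of \cref{lem:DA-cont}. What you have written is precisely the computation that underlies that citation \textemdash\ indeed, your key identity
\[
\frac{d}{d\timealt}\reg_{\strat}(\timealt)
 = \frac{\dot\learn_{\timealt}}{\learn_{\timealt}^{2}}\bracks*{\hreg(\strat_{\timealt})-\hreg(\strat)}
 - \frac{d}{d\timealt}\bracks*{\frac{\fench{\hreg}(\strat,\learn_{\timealt}\score_{\timealt})}{\learn_{\timealt}}}
\]
is (after bounding $\hreg(\strat_{\timealt})-\hreg(\strat)$) the very inequality later quoted from \cite[Eq.~29]{KM17} in the proof of \cref{thm:DA-contb}. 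So you have not taken a different route; you have supplied, self-contained, the argument the paper outsources. The only tacit assumption worth flagging is differentiability of $\timealt\mapsto\learn_{\timealt}$, which the paper also uses without comment elsewhere.
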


\begin{proof}
The bound in the regret follows from \cite[Theorem 4.1]{KM17}, and the implication follows as in \cref{lem:DA-cont} above.
\end{proof}

\cref{lem:reg2} allows to extend  proposition \ref{prop:DA-cont-avg} to variable learning rates.

\begin{proposition}
\label{prop:DA-cont-avg2}
Let $\strat_{\time} = \choice{\hreg}(\score_{\time})$ be a solution trajectory of \eqref{eq:DA-cont}.
If the underlying game is monotone and $\time\learn_{\time} \to \infty$ then $\mean_{\time} = \frac{1}{\time} \int_{0}^{\time} \strat_{\timealt} \dd\timealt$ converges to $\Eq(\game)$.
\end{proposition}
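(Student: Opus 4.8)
The plan is to follow the proof of \cref{prop:DA-cont-avg} almost verbatim, the only change being that the constant–rate regret bound of \cref{lem:reg} is replaced by the variable–rate bound of \cref{lem:reg2}. First I would fix an arbitrary reference point $\strat\in\strats$ and invoke \cref{lem:reg2} to bound the regret along the trajectory, namely
\begin{equation*}
\reg_{\strat}(\time)
	= \int_{0}^{\time} \braket{\payv(\strat_{\timealt})}{\strat-\strat_{\timealt}} \dd\timealt
	\leq \frac{\max\hreg-\min\hreg}{\learn_{\time}}.
\end{equation*}

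Next I would use the monotonicity of $\game$ in the form $\braket{\payv(\strat_{\timealt})}{\strat-\strat_{\timealt}} \geq \braket{\payv(\strat)}{\strat-\strat_{\timealt}}$ (this is just \eqref{eq:monotone} rearranged), integrate over $[0,\time]$, and divide by $\time$; since $\mean_{\time} = \frac{1}{\time}\int_{0}^{\time}\strat_{\timealt}\dd\timealt$, this yields
\begin{equation*}
\braket{\payv(\strat)}{\strat-\mean_{\time}}
	\leq \frac{\max\hreg-\min\hreg}{\time\learn_{\time}}.
\end{equation*}
Because $\time\learn_{\time}\to\infty$ by hypothesis, the right-hand side tends to $0$ as $\time\to\infty$. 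Since $\strats$ is compact, $\mean_{\time}$ has accumulation points; letting $\time\to\infty$ along any such accumulation point $\mean_{\infty}$ (with $\strat$ still held fixed) gives $\braket{\payv(\strat)}{\strat-\mean_{\infty}}\leq 0$, and as $\strat\in\strats$ was arbitrary this says precisely that $\mean_{\infty}$ solves \eqref{eq:MVI}, i.e.\ $\mean_{\infty}\in\MVI(\game)$. Finally, since $\game$ is monotone we have $\MVI(\game)\equiv\Eq(\game)$ (see \cref{sec:classes}), so every accumulation point of $\mean_{\time}$ lies in $\Eq(\game)$, which is the asserted convergence.

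I do not anticipate a real obstacle here; the proof is a routine adaptation. The only points requiring a little care are bookkeeping ones: the bound must be read as uniform in $\time$ for each \emph{fixed} reference point $\strat$, so one first passes to the limit in $\time$ and only afterwards quantifies over $\strat\in\strats$; and the substantive input, \cref{lem:reg2}, is itself borrowed from \cite[Theorem~4.1]{KM17}, which I would simply cite rather than reprove.
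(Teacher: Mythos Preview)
Your proposal is correct and follows exactly the approach the paper indicates: the paper does not even write out a separate proof for \cref{prop:DA-cont-avg2}, simply remarking that \cref{lem:reg2} allows one to extend \cref{prop:DA-cont-avg} to variable learning rates. Your argument makes this extension explicit in precisely the expected way, replacing the constant-rate bound of \cref{lem:reg} by that of \cref{lem:reg2} and using $\time\learn_{\time}\to\infty$ to kill the right-hand side.
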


The next result extends part b) of theorem \ref{thm:DA-cont} to variable learning rates.
The proof is a little more complicated.

\begin{theorem}
\label{thm:DA-contb}
Let $\strat_{\time} $ be a solution trajectory of \eqref{eq:DA-cont} with a non-increasing learning rate.
Suppose further that the underlying game $\game$ admits a \ac{GESS} $\eq$.
If the regularizer satisfies \eqref{eq:strong} and \eqref{eq:reciprocity}, and the dynamics' learning rate satisfies
$\lim_{\time\to\infty} \time\learn_{\time} = \infty$,
then $\strat_{\time}$ converges to $\eq$.
\end{theorem}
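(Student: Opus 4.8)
The plan is to follow the template of the monotone case of \cref{thm:DA-cont}, using the Fenchel coupling as an energy function but correcting it for the drift introduced by the time-varying learning rate. Throughout, set $\score_{\time} = \int_{0}^{\time}\payv(\strat_{\timealt})\dd\timealt$, so that $\strat_{\time} = \choice{\hreg}(\learn_{\time}\score_{\time}) = \nabla\hconj(\learn_{\time}\score_{\time})$, and recall that by \eqref{eq:strong}, \eqref{eq:reciprocity} and \cref{lemmaR4} it suffices to prove $\fench{\hreg}(\eq,\learn_{\time}\score_{\time})\to 0$: this forces $\strat_{\time}\to\eq$, and $\Eq(\game)=\{\eq\}$ since $\eq$ is a \ac{GESS}.

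\emph{Step 1 (a convergent subsequence).} By \cref{lem:reg2}, $\reg_{\eq}(\time)=\int_{0}^{\time}\braket{\payv(\strat_{\timealt})}{\eq-\strat_{\timealt}}\dd\timealt\leq(\max\hreg-\min\hreg)/\learn_{\time}$; dividing by $\time$ and using $\time\learn_{\time}\to\infty$ shows that the time-average of $\braket{\payv(\strat_{\timealt})}{\eq-\strat_{\timealt}}$ tends to $0$. Since $\eq\in\GESS(\game)$, the characterization \eqref{eq:ESS-VI} with $\nhd=\strats$ gives $\braket{\payv(\strat)}{\eq-\strat}\geq 0$ for every $\strat$, with equality only at $\strat=\eq$; hence there is $\time_{k}\to\infty$ with $\braket{\payv(\strat_{\time_{k}})}{\eq-\strat_{\time_{k}}}\to 0$, and by compactness of $\strats$ and continuity of $\payv$ every accumulation point of $(\strat_{\time_{k}})$ equals $\eq$, i.e.\ $\strat_{\time_{k}}\to\eq$. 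By \eqref{eq:reciprocity}, $\fench{\hreg}(\eq,\learn_{\time_{k}}\score_{\time_{k}})\to 0$ along this subsequence.

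\emph{Step 2 (a non-increasing corrected energy).} Differentiating as in \eqref{FenchelDerivation} and using the Fenchel--Young identity $\hconj(\learn_{\time}\score_{\time})=\braket{\strat_{\time}}{\learn_{\time}\score_{\time}}-\hreg(\strat_{\time})$, a direct computation gives, for $V(\time):=\fench{\hreg}(\eq,\learn_{\time}\score_{\time})/\learn_{\time}$,
\[
\ddt V(\time)=\braket{\payv(\strat_{\time})}{\strat_{\time}-\eq}+\ddt\!\Bigl(\tfrac{1}{\learn_{\time}}\Bigr)\bigl(\hreg(\eq)-\hreg(\strat_{\time})\bigr).
\]
The first term is $\leq 0$ by the \ac{GESS} property, and since $\learn_{\time}$ is non-increasing we have $\ddt(1/\learn_{\time})\geq 0$ and $\hreg(\strat_{\time})\geq\min\hreg$; hence $W(\time):=V(\time)-(\hreg(\eq)-\min\hreg)/\learn_{\time}$ satisfies
\[
\ddt W(\time)=\braket{\payv(\strat_{\time})}{\strat_{\time}-\eq}-\ddt\!\Bigl(\tfrac{1}{\learn_{\time}}\Bigr)\bigl(\hreg(\strat_{\time})-\min\hreg\bigr)\leq 0,
\]
so $W$ is non-increasing. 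The same computation shows that $\fench{\hreg}(\eq,\learn_{\time}\score_{\time})$ stays bounded, its derivative being nonpositive whenever it exceeds $\max\hreg-\min\hreg$.

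\emph{Step 3 (upgrading to full convergence) — the main obstacle.} Comparing $W$ at a generic time $\time\geq\time_{k}$ with its value at $\time_{k}$ yields
\[
\fench{\hreg}(\eq,\learn_{\time}\score_{\time})\leq\bigl(\hreg(\eq)-\min\hreg\bigr)\Bigl(1-\tfrac{\learn_{\time}}{\learn_{\time_{k}}}\Bigr)+\tfrac{\learn_{\time}}{\learn_{\time_{k}}}\,\fench{\hreg}(\eq,\learn_{\time_{k}}\score_{\time_{k}}).
\]
If $\inf_{\time}\learn_{\time}>0$, letting $\time\to\infty$ and then $k\to\infty$ (using $\learn_{\time_{k}}\to\inf_{\time}\learn_{\time}$ and Step 1) forces $\fench{\hreg}(\eq,\learn_{\time}\score_{\time})\to 0$. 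The delicate case — and the reason the proof is more involved than that of \cref{thm:DA-cont} — is $\learn_{\time}\to 0$: there the inequality above only gives $\limsup_{\time}\fench{\hreg}(\eq,\learn_{\time}\score_{\time})\leq\hreg(\eq)-\min\hreg$, so one must localize. The plan is to re-run Step 2 on a tail interval on which $\strat_{\time}$ is already confined to a small neighbourhood of $\eq$ (provided by the $\limsup$ bound together with \eqref{eq:strong}), replacing the crude correction constant $\hreg(\eq)-\min\hreg$ by the restricted quantity $\sup\{\hreg(\eq)-\hreg(\strat):\strat\in\strats,\ \norm{\strat-\eq}\leq r\}$ on that neighbourhood, and to iterate, shrinking $r$ at each pass; the hypothesis $\time\learn_{\time}\to\infty$ is precisely what keeps the accumulated correction (of order $1/\learn_{\time}$) negligible relative to the horizon at every stage, so the comparison inequality keeps improving the confinement. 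I expect closing this bootstrap — in concert with the local asymptotic stability of $\eq$ under the constant-rate dynamics noted after \cref{thm:DA-cont} — to be the technical heart of the argument, and the place where the extra hypotheses on $\hreg$ and $\learn_{\time}$ are genuinely used.
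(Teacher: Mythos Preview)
Your Steps 1--2 are correct and match the paper's setup: it too works with the deflated energy $V(\time)=\fench{\hreg}(\eq,\learn_{\time}\score_{\time})/\learn_{\time}$ and the same derivative identity (quoted there from \citet{KM17}). The gap is Step 3 in the case $\learn_{\time}\to 0$. Your bootstrap does not close: after one pass you have $\limsup_{\time}\fench{\hreg}(\eq,\learn_{\time}\score_{\time})\leq C$, and \eqref{eq:strong} confines $\strat_{\time}$ to the ball of radius $\sqrt{2C/\hstr}$; the next pass then replaces $C$ by $\sup\{\hreg(\eq)-\hreg(\strat):\norm{\strat-\eq}\leq\sqrt{2C/\hstr}\}$. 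But $\eq$ has no reason to minimise $\hreg$, so for small $C$ this replacement is of order $\sqrt{C}$ (if $\hreg$ is merely Lipschitz near $\eq$), and the iteration stalls at a positive fixed point rather than driving $C\to 0$. The hypothesis $\time\learn_{\time}\to\infty$ does not help here: in your comparison inequality the correction $(\hreg(\eq)-\min\hreg)(1-\learn_{\time}/\learn_{\time_{k}})$ simply tends to $\hreg(\eq)-\min\hreg$ as $\learn_{\time}\to 0$, independently of the horizon.

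The paper's fix is to exploit the \emph{drift} term $\braket{\payv(\strat_{\time})}{\strat_{\time}-\eq}$ rather than the smallness of $\hreg(\eq)-\hreg(\strat_{\time})$. For each fixed $\eps>0$ it shows directly (no iteration) that $\learn_{\time}\score_{\time}$ is eventually trapped in $\dzone_{\eps}=\{\score:\fench{\hreg}(\eq,\score)\leq\eps\}$. The mechanism is a two-zone argument: on the annulus $\dzone_{\eps}\setminus\dzone_{\eps/2}$, \eqref{eq:reciprocity} keeps $\strat_{\time}$ uniformly away from $\eq$, so the \ac{GESS} property gives $\braket{\payv(\strat_{\time})}{\strat_{\time}-\eq}\leq -c(\eps)<0$ there, and since the correction term in $\dot V$ is $\bigoh(\dot\learn_{\time}/\learn_{\time}^{2})\to 0$, $V$ is eventually strictly decreasing on the annulus. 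Combined with the infinitely-often entry into $\dzone_{\eps}$ (your Step 1, which the paper reproves as its Claim 1), this yields permanent confinement in $\dzone_{\eps}$, for every $\eps>0$. The missing idea in your proposal is precisely this: the uniform negativity of the GESS drift on annuli is what dominates the correction, not an iterated shrinking of the oscillation of $\hreg$.
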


\begin{proof}
\cite[Eq.~29]{KM17} prove that
\begin{equation}
\ddt \frac{\fench{\hreg}(\eq,\learn_{\time}\score_{\time})}{\learn_{\time}} \leq \braket{\payv(\strat_{\time})}{\strat_{\time}-\eq}-\frac{\dot \eta}{\learn_{\time}^2}(\max h - \min h ).
\end{equation}
Let $\dzone_{\eps} = \setdef{\score\in\R^{\pures}}{\fench{\hreg}(\eq,\score) \leq \eps}$ denote the ``Fenchel zone'' of magnitude $\eps$ with respect to $\eq$.
Then it suffices to show that
\begin{equation}
\label{eq:inzone}
\tag{P}
\textrm{For all $\eps>0$, there exists $T(\eps) > 0$ such that $\learn_{\time}\score_{\time} \in \dzone_{\eps}$ for all $t\geq T(\eps)$.}
\end{equation}
Actually this means that $\fench{\hreg}(\eq, \learn_{\time}\score_{\time}) \to 0$.
By \eqref{eq:strong}, \eqref{R4} holds and thus  $\strat_{\time}$ converges to $\eq$.
The proof of \eqref{eq:inzone} follows from the following two claims:

\para{Claim 1}  $\learn_{\time}\score_{\time} \in \dzone_{\eps}$ infinitely often.
\begin{proof}
\renewcommand{\qedsymbol}{\S}
If  $\learn_{\time}\score_{\time} \notin \dzone_{\eps}$, then  by \eqref{eq:reciprocity},  there exists  a neighbourhood $\mathcal{U} (\eps)$ of $\eq$ such that $\strat_{\time} \notin \mathcal{U}(\eps)$.
By definition of \ac{GESS}, there is $c(\eps)>0$ such that $\braket{\payv(\strat_{\time})}{ \strat_{\time}-\eq} < -2c(\eps)$.
As  $\frac{\dot \learn_{\time}}{\learn_{\time}^2} \rightarrow 0$, there is  $T_1(\eps)$ such that for all $t\geq T_1(\eps) $ we have $\frac{\dot \learn_{\time}}{\learn_{\time}^2} (\max h - \min h )  < c(\eps)$.
Consequently by contradiction to the claim,  $\ddt \frac{\fench{\hreg}(\eq,\learn_{\time}\score_{\time})}{\learn_{\time}}<-c(\eps)$ for all $\time$ large enough and so $\frac{\fench{\hreg}(\eq,\learn_{\time}\score_{\time})}{\learn_{\time}}$ goes to $-\infty$.
Since $\fench{\hreg}(\eq,\learn_{\time}\score_{\time}) \geq 0$ this is impossible.
\end{proof}

\para{Claim 2}
There exists $T_2(\eps) \geq T_1(\eps/2)$ such that if $t\geq  T_2(\eps) $ and  $\learn_{\time}\score_{\time} \in \dzone_{\eps}$ then there exists $\tau>0$ such that $\learn_{\timealt}\score_{\timealt} \in \dzone_{\eps}$  for all $s\in [t, t+\tau]$.

\begin{proof}
\renewcommand{\qedsymbol}{\S}
We need to consider two cases separately:
\begin{enumerate}
[left=0pt,label={\emph{Case} \itshape\alph*\upshape).}]
\item
If $\learn_{\time} \score_{\time} \in \dzone_{\eps/2} $ then obviously, there is $\tau>0$ such that  $\learn_{\timealt} \score_{\timealt} \in \dzone_{\eps} $ for all $s\in [t, t+\tau]$.
\item
If $\learn_{\time} \score_{\time}  \in \dzone_{\eps} $ but $\learn_{\time} \score_{\time}  \notin \dzone_{\eps/2} $ then, from claim 1, $\ddt \frac{\fench{\hreg}(\eq,\learn_{\time}\score_{\time})}{\learn_{\time}}<-c(\eps/2) < 0$.
Thus there exists $\tau>0$ such that for all $s\in [t, t+\tau]$: $\frac{\fench{\hreg}(\eq,\learn_{\timealt}\score_{\timealt})}{\learn_{\timealt}} \leq \frac{\fench{\hreg}(\eq,\learn_{\time}\score_t)}{\learn_{\time}}$.
As $\learn_{\time}\score_{\time} \in \dzone_{\eps} $, $\fench{\hreg}(\eq,\eta_{\time}\score_{\time}) \leq \eps$ and since  $\learn_{\timealt} \leq \learn_{\time}$, we deduce that $\fench{\hreg}(\eq,\learn_{\timealt}\score_{\timealt}) \leq \eps $.
Thus $\learn_{\timealt} \score_{\timealt}  \in \dzone_{\eps} $  for all $s\in [t, t+\tau]$.
\qedhere
\end{enumerate}
\end{proof}

From Claims 1 and 2, there exists $t \geq \max(T_2(\eps), T(\eps))$,  such that $\learn_{\time} \score_t$ belongs to $\dzone_{\eps}$ and remains in it forever.
\end{proof}

\section{Analysis and results in discrete time}
\label{sec:disc}

In this section, we continue with the analysis of the discrete-time dynamics presented in \cref{sec:dyn-disc},
starting with \acl{FP} and its variants in \cref{sec:FP-disc} before moving on to the dynamics of \acl{DA} in \cref{sec:DA-disc}.

\subsection{\Acl{FP} and \acl{RFP}}
\label{sec:FP-disc}

The convergence of \eqref{eq:FP} and \eqref{eq:RFP} can be obtained from the continuous-time analysis of \cref{sec:cont} using the theory of stochastic approximation \citep{Ben99,BHS05,BHS06}.
Informally, this theory links the asymptotic behavior of differential inclusions of the form
\begin{equation}
\label{eq:ODI}
\dot\strat_{\time}
	\in \oper(\strat_{\time})
\end{equation}
to the limit sets of discrete-time processes satisfying the basic recursion
\begin{equation}
\next[\state] -  \curr
	\in \next[\step] \oper(\curr),
\end{equation}
where $\oper\from\strats\too\R^{\pures}$ is an upper semicontinuous,
compact-convex-valued correspondence on $\strats$, and $\step_{\run} > 0$ is a vanishing step-size sequence.

In our specific context, the role of $\oper$ is to be played by the best-reply correspondence $\BR$ for \eqref{eq:FP}, and its regularized variant $\rBR{\weight\hreg}$ for \eqref{eq:RFP}.
Concretely, we have:

\begin{proposition}
\label{prop:APT}
Let $\curr[\avg] = (1/\run) \sum_{\runalt=\start}^{\run} \iter$, $\run=\running$, be the empirical frequency of play under a sequence of population states $\curr\in\strats$.
Then:
\begin{enumerate}
[\upshape(\itshape i\hspace{1pt}\upshape)]
\item
If $\curr$ is generated by \eqref{eq:FP}, $\curr[\avg]$ is an \acl{APT} of \eqref{eq:BRD}.
\item
If $\curr$ is generated by \eqref{eq:RFP}, $\curr[\avg]$ is an \acl{APT} of \eqref{eq:RBRD}.
\end{enumerate}
\end{proposition}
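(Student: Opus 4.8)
The plan is to derive the result as a direct application of the theory of stochastic approximation for differential inclusions \citep{Ben99,BHS05,BHS06}. The first step is to recast both empirical-frequency recursions \eqref{eq:FP-avg} and \eqref{eq:RFP-avg} in the canonical Robbins--Monro form
\begin{equation}
\next[\avg] - \curr[\avg]
	\in \step_{\run+1}\,\oper(\curr[\avg]),
	\qquad
	\step_{\run} = 1/\run,
\end{equation}
with driving correspondence $\oper = \BR - \mathrm{id}$ for \eqref{eq:FP} and $\oper = \rBR{\weight\hreg} - \mathrm{id}$ for \eqref{eq:RFP}. Since the increment lies \emph{exactly} in $\step_{\run+1}\oper(\curr[\avg])$, there is no stochastic noise to control, and the affine interpolation of $(\curr[\avg])_{\run}$ (with interpolation times $\tau_{\run} = \sum_{\runalt\leq\run}\step_{\runalt}$) will be an \acs{APT} of the semiflow generated by $\dot\mean_{\time} \in \oper(\mean_{\time})$, provided the standard hypotheses are met: $\oper$ is an upper semicontinuous correspondence with nonempty convex compact values and linear growth; $\sum_{\run}\step_{\run} = \infty$ while $\step_{\run}\to0$; and the iterates stay in a bounded set.

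For part~(i), with $\oper = \BR - \mathrm{id}$, I would verify these in turn. The steps $\step_{\run} = 1/\run$ visibly satisfy $\sum_{\run}\step_{\run} = \infty$ and $\step_{\run}\to0$. The iterates remain in the compact simplex $\strats$, because $\next[\avg] = \tfrac{\run}{\run+1}\curr[\avg] + \tfrac{1}{\run+1}\next$ is a convex combination of $\curr[\avg]\in\strats$ and $\next\in\strats$; and $\strats$ is likewise forward-invariant for \eqref{eq:BRD}. Finally, $\BR$ has already been observed to have a closed graph with nonempty convex compact values, so $\oper = \BR - \mathrm{id}$ inherits all these properties, and it is bounded on the compact set $\strats$, so the linear-growth requirement is automatic. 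The stochastic-approximation theorem then gives that $\curr[\avg]$ is an \acs{APT} of $\dot\mean_{\time}\in\BR(\mean_{\time}) - \mean_{\time}$, which is precisely \eqref{eq:BRD}.

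For part~(ii), with $\oper = \rBR{\weight\hreg} - \mathrm{id}$, the argument is identical, the only new point being that $\rBR{\weight\hreg} = \choice{\weight\hreg}\circ\payv$ is a single-valued \emph{continuous} map: continuity of the choice map $\choice{\weight\hreg} = \nabla(\weight\hreg)^{\ast}$ follows from the differentiability of the convex conjugate recorded after \cref{def:reg} (since $\hreg$ is supported on the compact set $\strats$, its conjugate is finite and differentiable on all of $\R^{\pures}$, hence $\nabla\hconj$ is continuous there), while continuity of $\payv$ is part of our standing assumptions. Thus $\oper$ is a continuous single-valued map with (singleton, hence convex compact) values, the iterates again remain in $\strats$, and the same theorem yields that $\curr[\avg]$ is an \acs{APT} of \eqref{eq:RBRD}. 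In this single-valued case one may equally invoke the original \acs{APT} theory of \citet{Ben99} directly.

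The genuinely substantive facts --- that $\BR$ has a closed graph with nonempty convex compact values, and that $\rBR{\weight\hreg}$ is a continuous function --- are already in hand, so the remaining work is purely a matter of matching the precise hypotheses of whichever version of the stochastic-approximation theorem one cites. The only mildly delicate points are the requirements that $\oper$ be (extendable to) an upper semicontinuous correspondence of linear growth on all of $\R^{\pures}$ and that the perturbation term vanish; both are immediate here --- the first because everything is confined to the compact simplex (one may extend $\oper$ off $\strats$ by precomposing with the Euclidean projector, which does not alter the dynamics since they never leave $\strats$), the second because the recursion is an exact differential inclusion with no noise. I therefore do not expect any real obstacle.
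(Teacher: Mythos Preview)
Your proposal is correct and follows exactly the same route as the paper, which simply states that the result follows immediately from the general theory of \citet[Propositions 1.3, 1.4, and Theorem 4.2]{BHS05} without presenting any details. You have actually supplied the verification of hypotheses that the paper omits, and all of your checks (closed graph and convex compact values of $\BR$, continuity of $\rBR{\weight\hreg}$, compactness of $\strats$, step-size conditions, absence of noise) are accurate.
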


Informally, the notion of an \acf{APT} means that $\curr[\avg]$ asymptotically tracks the pseudo-flow of \eqref{eq:BRD}/\eqref{eq:RBRD} with arbitrary accuracy over windows of arbitrary length;
for a precise definition, see \citet[Definitions I and 4.1]{BHS05}.%
The proof of \cref{prop:APT} follows immediately from the general theory of \citet[Propositions 1.3, 1.4, and Theorem 4.2]{BHS05}, so we do not present it here.

In view of this ``asymptotic approximation'' property, it is plausible to expect that $\curr[\avg]$ exhibits the same asymptotic behavior as the corresponding continuous-time dynamics.
Our next result makes this intuition precise in two classes of games:%
\begin{enumerate}
\item
Monotone games, \ie games that satisfy \eqref{eq:monotone}.
\item
Potential games for which the $\weight$-regularized potential $\pot_{\weight} = \pot - \weight\hreg$, $\weight\geq0$, satisifies the Sard-type condition:
\begin{equation}
\label{eq:Sard}
\tag{S$_{\eps}$}
\text{$\pot_{\weight}(\rEq{\weight}(\game))$ has empty interior}.
\end{equation}
\end{enumerate}
Our main results for \eqref{eq:FP} and \eqref{eq:RFP} may then be stated as follows:

\begin{theorem}
\label{thm:FP-disc}
Suppose that $\game$ satisfies \eqref{eq:monotone} or \eqref{eq:Sard} for $\weight=0$.
Then the empirical frequency of play $\curr[\avg]$ under \eqref{eq:FP} converges to $\Eq(\game)$.
\end{theorem}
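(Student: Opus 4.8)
The plan is to transfer the continuous-time convergence established in \cref{thm:BRD} to the discrete-time empirical frequencies $\curr[\avg]$ by means of the stochastic-approximation theory for differential inclusions. First I would invoke \cref{prop:APT}(i): since $\curr$ is generated by \eqref{eq:FP}, the empirical mean $\curr[\avg]$ is an \acl{APT} of \eqref{eq:BRD}. Because $\curr[\avg]$ takes values in the compact set $\strats$ and $\BR$ is upper semicontinuous with nonempty, convex, compact values (so that \eqref{eq:BRD} is a well-posed Marchaud differential inclusion), the general theory of \citet{Ben99,BHS05} then implies that the limit set $\mathbf{A}[\curr[\avg]]$ is a nonempty, compact, connected set that is \emph{internally chain transitive} (ICT) for the set-valued semiflow induced by \eqref{eq:BRD}. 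In view of the paper's convergence convention, the theorem amounts to the inclusion $\mathbf{A}[\curr[\avg]] \subseteq \Eq(\game)$.

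The second step is to show that \emph{every} ICT set of \eqref{eq:BRD} lies in $\Eq(\game)$, and here I would recycle the strict Lyapunov functions built in the proof of \cref{thm:BRD}. In the potential case, take $L = \pot$: by \eqref{eq:pot} one has $\ddt \pot(\mean_{\time}) = \gap(\mean_{\time}) \geq 0$ along \eqref{eq:BRD}, strictly unless $\mean_{\time}$ is a rest point, i.e.\ unless $\mean_{\time} \in \Eq(\game)$. In the monotone case, take $L = \gap$, with $\gap \geq 0$, $\gap^{-1}(0) = \Eq(\game)$, and $\ddt \gap(\mean_{\time}) \leq -\gap(\mean_{\time})$ along \eqref{eq:BRD}, which is again a strict decrease off $\Eq(\game)$. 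In either case $\Eq(\game)$ is exactly the rest-point set of \eqref{eq:BRD} and $L$ is a Lyapunov function for it in the sense of \citet{BHS05}; consequently, any ICT set $S$ satisfies $S \subseteq \Eq(\game)$ as soon as the set of Lyapunov values $L(\Eq(\game))$ has empty interior in $\R$. For the monotone case this is automatic, since $\gap(\Eq(\game)) = \{0\}$; for the potential case it is precisely hypothesis \eqref{eq:Sard} with $\weight = 0$, namely that $\pot(\Eq(\game))$ has empty interior. Combining the two steps gives $\mathbf{A}[\curr[\avg]] \subseteq \Eq(\game)$, as required; and the identical scheme, with \eqref{eq:RBRD} and $\pot_{\weight}$ in place of \eqref{eq:BRD} and $\pot$, will yield the companion statement for \eqref{eq:RFP}.

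I expect the main obstacle to lie in the careful bookkeeping of the second step rather than in the first. One has to verify that $\pot$ (resp.\ $\gap$) is a Lyapunov function in the precise set-valued sense — the monotonicity/strict-decrease inequalities must hold along \emph{every} solution of the inclusion \eqref{eq:BRD}, not merely a typical one — and that the set where strictness fails is exactly $\Eq(\game)$ and nothing larger; the potential case follows cleanly from the directional-derivative identity \eqref{eq:pot}, but the monotone case is more delicate because $\gap$ is only the value function of a parametric maximization and the envelope computation behind $\ddt\gap \leq -\gap$ relies on $\payv$ being $C^{1}$-smooth. That smoothness is listed in \cref{thm:BRD}(b) but not in the present statement, so either it should be imported here as well or the $\gap$-estimate should be relaxed to ``$\gap(\mean_{\time})$ is non-increasing and strictly positive off $\Eq(\game)$'', which still feeds the ICT argument after a LaSalle-type refinement. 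Once the Lyapunov structure is pinned down, the remainder is a direct appeal to the ICT/Lyapunov principle and introduces no new ideas.
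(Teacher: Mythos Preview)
Your proposal is correct and follows essentially the same route as the paper: invoke \cref{prop:APT} to identify $\curr[\avg]$ as an \ac{APT} of \eqref{eq:BRD}, then use the ICT/Lyapunov machinery of \citet{BHS05} together with the Lyapunov functions from \cref{thm:BRD} (with the Sard condition handling the potential case and $\gap(\Eq(\game))=\{0\}$ handling the monotone case). Your caveat about the $C^{1}$ hypothesis in the monotone case is well taken; the paper's sketch simply invokes \cref{thm:BRD} wholesale, so that assumption is implicitly carried over.
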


\begin{theorem}
\label{thm:RFP-disc}
Suppose \eqref{eq:selection} holds and $\game$ satisfies \eqref{eq:monotone} or \eqref{eq:Sard} for some $\weight>0$.
Then the empirical frequency of play $\curr[\avg]$ under \eqref{eq:RFP} converges to $\rEq{\weight}(\game)$.
\end{theorem}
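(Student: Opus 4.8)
The plan is to reduce the discrete-time statement to the continuous-time analysis of \cref{thm:RBRD} through the theory of \aclp{APT} and internally chain transitive sets, in the standard stochastic-approximation manner \citep{Ben99,BHS05}.

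\emph{Step 1 (discretization).} By \cref{prop:APT}(ii), the empirical frequency $\curr[\avg]$ is an \acl{APT} of \eqref{eq:RBRD}. Since $\strats$ is compact, $(\curr[\avg])$ is bounded, so by the general \acl{APT} theory its set of accumulation points $\mathbf{A} \defeq \mathbf{A}[\curr[\avg]]$ is nonempty, compact, connected, and \emph{internally chain transitive} for the semiflow induced by \eqref{eq:RBRD}. Here I use that, under \eqref{eq:selection}, the regularized best response $\rBR{\weight\hreg} = \choice{\weight\hreg}\circ\payv$ is continuous: $\choice{\weight\hreg} = \nabla(\weight\hconj)$ is continuous by the remark following \cref{def:reg}, and $\payv$ is continuous by assumption, so \eqref{eq:RBRD} generates a semiflow with the regularity required by the \acl{APT} framework. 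It therefore suffices to show that every internally chain transitive set of \eqref{eq:RBRD} is contained in $\rEq{\weight}(\game)$, since this gives $\mathbf{A} \subseteq \rEq{\weight}(\game)$, which is exactly the asserted convergence.

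\emph{Step 2 (potential case).} From \cref{thm:RBRD} and its proof, $\lyap = \pot_{\weight} = \pot - \weight\hreg$ is continuous on $\strats$ and strictly increasing along every non-stationary solution of \eqref{eq:RBRD}, with stationary set equal to the rest points of \eqref{eq:RBRD}, which by \eqref{eq:RFP-rest} is precisely $\rEq{\weight}(\game)$. Thus $-\pot_{\weight}$ is a strict Lyapunov function for \eqref{eq:RBRD} with critical set $\rEq{\weight}(\game)$, and hypothesis \eqref{eq:Sard} says exactly that $\pot_{\weight}(\rEq{\weight}(\game))$ has empty interior. I would then invoke the standard reduction principle (see, e.g., \citet{Ben99,BHS05}): an internally chain transitive set of a semiflow admitting a continuous Lyapunov function whose values on the critical set form a set of empty interior must lie inside that critical set. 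This yields $\mathbf{A} \subseteq \rEq{\weight}(\game)$.

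\emph{Step 3 (monotone case).} By \cref{lem:RBRD2}, $\game_{\weight}$ is \emph{strictly} monotone, so $\rEq{\weight}(\game) = \Eq(\game_{\weight})$ is a singleton $\{\eq\}$; by the proof of \cref{thm:RBRD} the function $\lyap = \gap_{\weight}$ is continuous, nonnegative, vanishes exactly at $\eq$, and is strictly decreasing along non-stationary solutions of \eqref{eq:RBRD}. Since the critical set $\{\eq\}$ has image of empty interior under $\gap_{\weight}$ trivially, the same reduction principle forces any internally chain transitive set to equal $\{\eq\}$, whence $\curr[\avg] \to \eq = \rEq{\weight}(\game)$. (The companion statement \cref{thm:FP-disc} is the degenerate case $\weight=0$, obtained verbatim with \eqref{eq:BRD}, \cref{prop:APT}(i), and $\pot$, $\gap$ replacing \eqref{eq:RBRD}, \cref{prop:APT}(ii), $\pot_{\weight}$, $\gap_{\weight}$.) The main obstacle is not any single computation but the bookkeeping required to apply the internally-chain-transitive reduction: verifying that \eqref{eq:RBRD} has the regularity the \acl{APT} machinery demands, that the inequalities of \cref{thm:RBRD} (derived along individual trajectories) genuinely upgrade to the \emph{strict} Lyapunov property (a trajectory along which $\lyap$ is constant is a rest point), and that \eqref{eq:Sard} is precisely the empty-interior condition needed in the potential case. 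The monotone case is comparatively painless, since the singleton structure of $\rEq{\weight}(\game)$ makes the Sard-type hypothesis vacuous.
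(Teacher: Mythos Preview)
Your proposal is correct and follows essentially the same route as the paper: establish the \acl{APT} property via \cref{prop:APT}, pass to internally chain transitive sets, and then invoke the continuous-time Lyapunov analysis of \cref{thm:RBRD} together with the standard reduction principle from \citet{BHS05}. The only cosmetic difference is that in the monotone case the paper cites the attractor-based results (Theorem~3.23 and Proposition~3.25 of \citet{BHS05}) rather than rerunning the Lyapunov argument with $\gap_{\weight}$ and the trivially-empty-interior singleton, but the two routes are equivalent here.
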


\begin{proof}[Sketch of proof]
The proof of \cref{thm:FP-disc,thm:RFP-disc} follows a similar technical path starting with the fact that $\curr[\avg]$ is an \ac{APT} of \eqref{eq:BRD}\,/\,\eqref{eq:RBRD} respectively (\cf \cref{prop:APT} above).
In both cases, Theorem 4.3 of \citet{BHS05} shows that the accumulation points of an \ac{APT} lie in an internally chain transitive set of the underlying continuous-time dynamics.%
\footnote{An internally chain transitive set is a compact invariant set that contains no proper attractors \citep{Con78}.}
The claims of \cref{thm:FP-disc} may then be proved as follows:
\begin{enumerate}
\item
For games satisfying \eqref{eq:monotone}, invoke \cref{thm:BRD} from \cref{sec:cont} and Theorem 3.23 and Proposition 3.25 of \citet{BHS05}.
\item
For games satisfying \eqref{eq:Sard} with $\weight=0$, invoke \cref{thm:BRD} from \cref{sec:cont} and Proposition 3.27 of \citet{BHS05}.
\end{enumerate}
Finally, for $\weight>0$, the proof of \cref{thm:RFP-disc} is entirely analogous, and simply invokes \cref{thm:RBRD} in lieu of \cref{thm:BRD} in the above chain of implications.
\end{proof}

\subsection{\Acl{DA}}
\label{sec:DA-disc}

We now proceed to state and prove our main convergence result for the recursion \eqref{eq:DA}.
To do so, we will require the ``reciprocity'' condition \eqref{eq:reciprocity} that was introduced for the study of the continuous-time dynamics \eqref{eq:DA-cont}, namely
\begin{equation*}
\tag{\ref*{eq:reciprocity}}
\choice{\hreg}(\curr[\score])
	\to \strat
	\implies
\fench{\hreg}(\strat,\curr[\score])
	\to 0
\end{equation*}
for all $\strat\in\strats$ and all sequences $\curr[\score]$ in $\R^{\pures}$.
We then have:

\begin{theorem}
\label{thm:DA-disc}
Suppose that \eqref{eq:DA} is run with a regularizer satisfying \eqref{eq:strong} and \eqref{eq:reciprocity}, and a vanishing non-increasing learning rate $\curr[\learn] \searrow 0$ such that
\begin{equation}
\label{eq:learn}
\lim_{\run\to\infty}
	\parens*{ \frac{1}{\curr[\learn]} - \frac{1}{\prev[\learn]} }
	= 0.
\end{equation}
Then:
\begin{itemize}
\addtolength{\itemsep}{\smallskipamount}

\item
If $\eq$ is a \ac{GESS} of $\game$,
the sequence of population states $\curr$ converges to $\eq$.

\item
If $\eq$ is an \ac{ESS} of $\game$,
the same conclusion holds provided that
$\max\{\curr[\learn],\curr[\learn]^{-1} - \prev[\learn]^{-1} \}$ is small enough
and
\eqref{eq:DA} is initialized sufficiently close to $\eq$.

\end{itemize}
\end{theorem}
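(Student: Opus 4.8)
The plan is to transpose the continuous-time argument behind \cref{thm:DA-contb} to the recursion \eqref{eq:DA}, the role of the Lyapunov function being played by the rescaled Fenchel coupling
\[
E_{\run}
	\defeq \frac{\fench{\hreg}(\eq,\curr[\learn]\curr[\aggpay])}{\curr[\learn]}
	\geq 0,
\]
the discrete counterpart of $\fench{\hreg}(\eq,\learn_{\time}\score_{\time})/\learn_{\time}$. Recall that $\curr = \choice{\hreg}(\prev[\learn]\prev[\aggpay]) = \nabla\hconj(\prev[\learn]\prev[\aggpay])$, and that $\vbound \defeq \sup_{\strat\in\strats}\dnorm{\payv(\strat)} < \infty$ since $\payv$ is continuous on the compact set $\strats$.

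\emph{Step 1: a discrete energy inequality.} First I would establish
\begin{multline*}
E_{\run+1} - E_{\run}
	\leq \braket{\payv(\next)}{\next - \eq}
	\\
	+ \frac{\curr[\learn]\,\vbound^{2}}{2\hstr}
	+ \parens*{\tfrac{1}{\next[\learn]} - \tfrac{1}{\curr[\learn]}}(\max\hreg - \min\hreg),
\end{multline*}
the discrete analogue of \cite[Eq.~29]{KM17}. The argument decomposes $E_{\run+1} - E_{\run}$ into \textup{(i)} the payoff update $\curr[\aggpay]\mapsto\next[\aggpay] = \curr[\aggpay] + \payv(\next)$ carried out \emph{with the current regularizer} $\hreg/\curr[\learn]$, followed by \textup{(ii)} the sharpening $\hreg/\curr[\learn]\mapsto\hreg/\next[\learn]$ carried out \emph{at the updated point} $\next[\aggpay]$. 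For \textup{(i)}, $\hstr$-strong convexity of $\hreg$ makes $(\hreg/\curr[\learn])^{\ast}$ be $(\curr[\learn]/\hstr)$-smooth, and --- crucially --- $\nabla(\hreg/\curr[\learn])^{\ast}(\curr[\aggpay]) = \choice{\hreg}(\curr[\learn]\curr[\aggpay]) = \next$ \emph{exactly}, so the descent lemma produces the first two terms with no mismatch error; this is precisely why the two sub-steps must be performed in this order. For \textup{(ii)}, differentiating $t\mapsto\fench{\hreg}(\eq,t\next[\aggpay])/t$ and invoking the Fenchel equality $\hconj(\score) + \hreg(\choice{\hreg}(\score)) = \braket{\score}{\choice{\hreg}(\score)}$ shows the derivative equals $-t^{-2}[\hreg(\eq) - \hreg(\choice{\hreg}(t\next[\aggpay]))]$, of absolute value at most $t^{-2}(\max\hreg - \min\hreg)$; integrating over $[\next[\learn],\curr[\learn]]$ yields the last term. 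Both error terms vanish as $\run\to\infty$ --- the first because $\curr[\learn]\to0$, the second by \eqref{eq:learn}.

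\emph{Step 2: global convergence when $\eq$ is a \ac{GESS}.} Here \eqref{eq:ESS-VI} holds with $\nhd=\strats$, so $\braket{\payv(\strat)}{\strat-\eq}<0$ on $\strats\setminus\{\eq\}$; writing $\dzone_{\eps} = \setdef{\score}{\fench{\hreg}(\eq,\score)\leq\eps}$, conditions \eqref{eq:strong} and \eqref{eq:reciprocity} (via \cref{lem:Fench2norm}) give, for each $\eps>0$, a $\delta(\eps)>0$ with $\score\notin\dzone_{\eps}\implies\norm{\choice{\hreg}(\score)-\eq}\geq\delta(\eps)$, and hence (compactness of $\strats$, continuity of $\payv$) a $c(\eps)>0$ with $\braket{\payv(\strat)}{\strat-\eq}\leq-2c(\eps)$ whenever $\norm{\strat-\eq}\geq\delta(\eps)$. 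I would then run the two-claim scheme of \cref{thm:DA-contb} with sums in place of integrals. \textbf{Claim 1:} $\curr[\learn]\curr[\aggpay]\in\dzone_{\eps}$ infinitely often; otherwise $\next$ stays bounded away from $\eq$ for all large $\run$, and telescoping the Step 1 bound gives $E_{\run}\leq E_{\run_{0}} - 2c(\eps)(\run-\run_{0}) + o(\run)\to-\infty$ (using $\sum_{\runalt=\start}^{\run}\iter[\learn] = o(\run)$ and $1/\curr[\learn] = o(\run)$, both forced by $\curr[\learn]\to0$ and \eqref{eq:learn}), contradicting $E_{\run}\geq0$. \textbf{Claim 2:} there is $\run_{\eps}$ such that, for $\run\geq\run_{\eps}$, $\curr[\learn]\curr[\aggpay]\in\dzone_{\eps}\implies\next[\learn]\next[\aggpay]\in\dzone_{\eps}$; this follows from Step 1 by distinguishing the case $\curr[\learn]\curr[\aggpay]\in\dzone_{\eps/2}$ (the vanishing errors cannot carry $\fench{\hreg}$ past $\eps$) from $\curr[\learn]\curr[\aggpay]\in\dzone_{\eps}\setminus\dzone_{\eps/2}$ (then $\braket{\payv(\next)}{\next-\eq}\leq-2c(\eps/2)$ dominates the vanishing errors, so $\fench{\hreg}$ does not increase). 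Claims 1 and 2 give $\fench{\hreg}(\eq,\curr[\learn]\curr[\aggpay])\to0$, and then $\curr = \choice{\hreg}(\prev[\learn]\prev[\aggpay])\to\eq$ by \eqref{eq:strong} and \cref{lemmaR4}.

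\emph{Step 3: local convergence when $\eq$ is an \ac{ESS}, and the main obstacle.} When $\eq$ is only an \ac{ESS}, \eqref{eq:ESS-VI} holds merely on a neighbourhood $\nhd$ of $\eq$, so the Step 2 argument is valid only while the iterates stay in $\nhd$. By \cref{lemmaR4}, $\sup_{\score\in\dzone_{\eps}}\norm{\choice{\hreg}(\score)-\eq}\to0$ as $\eps\to0$, so one can fix $\eps_{0}>0$ with $\choice{\hreg}(\dzone_{\eps_{0}})\subseteq\nhd$; rerunning the case analysis of Claim 2 at level $\eps_{0}$ --- now exploiting that $\curr[\learn]$ and $1/\curr[\learn]-1/\prev[\learn]$ are \emph{uniformly} small, not just vanishing --- shows $\dzone_{\eps_{0}}$ is forward invariant for the sequence $\curr[\learn]\curr[\aggpay]$. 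If \eqref{eq:DA} is initialized close enough to $\eq$ that $\init[\learn]\init[\aggpay]\in\dzone_{\eps_{0}}$, then $\curr\in\nhd$ for all $\run$, and Steps 1--2 apply verbatim with $\strats$ replaced by $\nhd$, yielding $\curr\to\eq$. The crux of the whole proof is Step 1: decomposing $E_{\run+1}-E_{\run}$ in the \emph{wrong} order --- updating the payoff against the new regularizer --- creates an error proportional to $\dnorm{\curr[\aggpay]}$, which is not controlled under the hypotheses, whereas the order above eliminates it; this is the structural observation the argument rests on. The secondary difficulty is the quantitative bookkeeping in Step 3, namely picking $\eps_{0}$ and the smallness thresholds so that $\dzone_{\eps_{0}}$ is invariant for \emph{every} $\run$, not merely eventually.
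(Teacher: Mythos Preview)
Your proposal is correct and follows essentially the same route as the paper. The paper packages your Step~1 as \cref{lem:template} (quoted from \citet{HMMR20} rather than derived), your Claim~2 as the stability \cref{lem:stability}, and your Claim~1 as the subsequence-extraction \cref{lem:subsequence}; the final assembly in both the \ac{GESS} and \ac{ESS} cases is then identical to yours.
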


\begin{corollary}
Suppose that $\game$ is strictly monotone
and
\eqref{eq:DA} is run with
a regularizer satisfying \eqref{eq:strong} and \eqref{eq:reciprocity},
and
a learning rate of the form $\curr[\learn] \propto 1/\run^{\pexp}$ for $\pexp\in(0,1)$.
Then the sequence of population states induced by \eqref{eq:DA} converges to the \textpar{necessarily unique} equilibrium of $\game$.
\end{corollary}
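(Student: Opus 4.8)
The plan is to obtain this statement directly from \cref{thm:DA-disc} after two short verifications, so there is essentially no new work to do.

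First I would invoke the structural facts recorded in \cref{sec:classes}: when $\game$ is strictly monotone, the chain \eqref{eq:taxonomy} collapses, so $\GESS(\game) = \ESS(\game) = \Eq(\game)$, and moreover $\Eq(\game)$ is a singleton $\{\eq\}$ consisting precisely of the unique \ac{GESS} of the game. In particular the hypothesis of the first bullet of \cref{thm:DA-disc} (``$\eq$ is a \ac{GESS}'') is automatically satisfied, and the ``necessarily unique equilibrium'' referred to in the statement is exactly this $\eq$.

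Second I would check that a step-size of the form $\curr[\learn] = \const/\run^{\pexp}$ with $\pexp\in(0,1)$ meets all the requirements imposed on the learning rate in \cref{thm:DA-disc}: it is positive, non-increasing, and vanishes, so the only nontrivial point is the summability-type condition \eqref{eq:learn}. Here $1/\curr[\learn] - 1/\prev[\learn] = \const^{-1}\bracks{\run^{\pexp} - (\run-1)^{\pexp}}$, and applying the mean value theorem to $t\mapsto t^{\pexp}$ on $[\run-1,\run]$ (whose derivative $\pexp t^{\pexp-1}$ is decreasing since $\pexp-1<0$) gives $\run^{\pexp} - (\run-1)^{\pexp} \leq \pexp\,(\run-1)^{\pexp-1} \to 0$. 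Hence \eqref{eq:learn} holds. Since the regularizer is assumed to satisfy \eqref{eq:strong} and \eqref{eq:reciprocity} by hypothesis, all assumptions of \cref{thm:DA-disc} are in force.

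With these two observations in hand, the first bullet of \cref{thm:DA-disc} applies verbatim and yields $\curr \to \eq$, which is the assertion of the corollary. The main (and essentially only) obstacle is the elementary asymptotic estimate $\run^{\pexp} - (\run-1)^{\pexp} \to 0$ needed to verify \eqref{eq:learn}; everything else is a direct bookkeeping reduction to the already-proved \cref{thm:DA-disc} together with the standard characterization of strictly monotone games from \cref{sec:classes}.
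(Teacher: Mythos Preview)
Your proposal is correct and matches the paper's intended approach: the corollary is stated immediately after \cref{thm:DA-disc} without a separate proof, so it is meant to follow exactly as you describe, by invoking the collapse $\GESS(\game)=\Eq(\game)=\{\eq\}$ for strictly monotone games and checking that $\curr[\learn]\propto 1/\run^{\pexp}$ with $\pexp\in(0,1)$ is positive, non-increasing, vanishing, and satisfies \eqref{eq:learn}. Your mean-value-theorem verification of the latter is exactly the missing elementary step.
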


Our proof relies on the use of a suitable ``energy inequality'', provided here by a deflated variant of the Fenchel coupling:

\begin{lemma}
\label{lem:template}
Fix a population state $\strat\in\strats$,
and let
\begin{equation}
\label{eq:Fench-deflated}
\curr[\lyap]
	\defeq \frac{1}{\curr[\learn]} \fench{\hreg}(\strat,\curr[\learn]\curr[\aggpay])
	= \frac{1}{\curr[\learn]} \bracks{
		\hreg(\strat)
		- \hreg(\next)
		+ \curr[\learn]\braket{\curr[\aggpay]}{\next - \strat}
		},
\end{equation}
with $\curr[\aggpay]$ as in \eqref{eq:DA}.
Then, for all $\run=\running$, we have:
\begin{equation}
\label{eq:template}
\curr[\lyap]
	\leq \prev[\lyap]
		+ \braket{\payv(\curr)}{\curr - \strat}
		+ \bracks{\hreg(\strat) - \min\hreg} \, \curr[\diff]
		+ \frac{\fench{\hreg}(\curr,\prev[\learn]\curr[\aggpay])}{\prev[\learn]},
\end{equation}
where $\curr[\diff] = 1/\curr[\learn] - 1/\prev[\learn]$.
If, in addition, $\hreg$ satisfies \eqref{eq:strong}, we have
\begin{equation}
\label{eq:template-strong}
\curr[\lyap]
	\leq \prev[\lyap]
		+ \braket{\payv(\curr)}{\curr - \strat}
		+ \bracks{\hreg(\strat) - \min\hreg} \, \curr[\diff] 
		+ \frac{\prev[\learn]}{2\hstr} \dnorm{\payv(\curr)}^{2}.
\end{equation}
\end{lemma}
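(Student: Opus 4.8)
The plan is to treat $\curr[\lyap]$ as an energy function and estimate its one–step increment $\curr[\lyap]-\prev[\lyap]$. Throughout I will use three facts: by \eqref{eq:dhconj} and the recursion \eqref{eq:DA}, the current state obeys $\curr=\choice{\hreg}(\prev[\learn]\prev[\aggpay])=\nabla\hconj(\prev[\learn]\prev[\aggpay])$; the aggregate payoffs satisfy $\curr[\aggpay]=\prev[\aggpay]+\payv(\curr)$; and $\hreg$, being continuous on the compact simplex $\strats$, attains its minimum, so $\min\hreg$ is finite. The key device is to insert the intermediate quantity $\fench{\hreg}(\strat,\prev[\learn]\curr[\aggpay])/\prev[\learn]$, splitting
\begin{equation*}
\curr[\lyap]-\prev[\lyap]
= \underbrace{\tfrac{1}{\curr[\learn]}\fench{\hreg}(\strat,\curr[\learn]\curr[\aggpay]) - \tfrac{1}{\prev[\learn]}\fench{\hreg}(\strat,\prev[\learn]\curr[\aggpay])}_{\text{learning-rate drift}}
\;+\; \underbrace{\tfrac{1}{\prev[\learn]}\bracks{\fench{\hreg}(\strat,\prev[\learn]\curr[\aggpay]) - \fench{\hreg}(\strat,\prev[\learn]\prev[\aggpay])}}_{\text{payoff update}}.
\end{equation*}

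For the payoff–update term I would use the three–point identity for the Fenchel coupling,
\begin{equation*}
\fench{\hreg}(\strat,\score') = \fench{\hreg}(\strat,\score) + \fench{\hreg}(\nabla\hconj(\score),\score') + \braket{\nabla\hconj(\score)-\strat}{\score'-\score},
\end{equation*}
which follows in one line from the definition \eqref{eq:Fench} together with the Fenchel--Young equality $\hconj(\score)=\braket{\nabla\hconj(\score)}{\score}-\hreg(\nabla\hconj(\score))$ (essentially the computation behind \citet[Lemma~A.2]{MZ19}). Applying it with $\score=\prev[\learn]\prev[\aggpay]$ and $\score'=\prev[\learn]\curr[\aggpay]$, and using $\nabla\hconj(\score)=\curr$ and $\score'-\score=\prev[\learn]\payv(\curr)$, the payoff–update term collapses \emph{exactly} to $\braket{\payv(\curr)}{\curr-\strat}+\fench{\hreg}(\curr,\prev[\learn]\curr[\aggpay])/\prev[\learn]$, which are two of the three terms on the right–hand side of \eqref{eq:template}; no inequality is used here.

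For the learning–rate drift term I would write $\fench{\hreg}(\strat,\learn\curr[\aggpay])/\learn = \hreg(\strat)/\learn + \hconj(\learn\curr[\aggpay])/\learn - \braket{\strat}{\curr[\aggpay]}$ and then observe, after subtracting and adding $\min\hreg$,
\begin{equation*}
\frac{\hconj(\learn\curr[\aggpay])}{\learn}
= \max_{\stratalt\in\strats}\braces*{\braket{\curr[\aggpay]}{\stratalt} - \frac{\hreg(\stratalt)-\min\hreg}{\learn}} - \frac{\min\hreg}{\learn}.
\end{equation*}
Because $\hreg(\stratalt)-\min\hreg\ge0$, each maximand is non-decreasing in $\learn>0$, so the ``max'' term is a non-decreasing function of $\learn$; since the learning rate is non-increasing ($\curr[\learn]\le\prev[\learn]$), it drops with the favourable sign, leaving exactly $\bracks{\hreg(\strat)-\min\hreg}\curr[\diff]$ with $\curr[\diff]=1/\curr[\learn]-1/\prev[\learn]\ge0$. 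Adding the two pieces yields \eqref{eq:template}. To get \eqref{eq:template-strong} I would bound the residual $\fench{\hreg}(\curr,\prev[\learn]\curr[\aggpay])/\prev[\learn]$: since $\curr=\nabla\hconj(\prev[\learn]\prev[\aggpay])$, this equals $1/\prev[\learn]$ times the Bregman divergence of $\hconj$ at $\prev[\learn]\curr[\aggpay]$ relative to $\prev[\learn]\prev[\aggpay]$; invoking that $\hstr$-strong convexity of $\hreg$ on $\strats$ makes $\hconj$ differentiable with $(1/\hstr)$-Lipschitz gradient (cf.\ \cref{lem:Fench2norm}), this Bregman divergence is at most $\tfrac{1}{2\hstr}\dnorm{\prev[\learn]\curr[\aggpay]-\prev[\learn]\prev[\aggpay]}^{2}=\tfrac{\prev[\learn]^{2}}{2\hstr}\dnorm{\payv(\curr)}^{2}$, and dividing by $\prev[\learn]$ closes the argument. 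I expect the only genuinely delicate step to be the learning–rate drift estimate — getting the monotonicity in $\learn$ and the exact role of $\min\hreg$ right — since the payoff–update step is a bookkeeping identity and the strongly convex refinement is the textbook strong-convexity/smoothness duality.
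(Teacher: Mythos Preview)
Your proof is correct and complete. The paper itself does not prove \cref{lem:template}: it states that such energy inequalities are standard, cites \citet[Lemma~C.1]{HMMR20} for this specific formulation, and omits the argument. Your decomposition into a learning-rate drift term and a payoff-update term, handled respectively by the monotonicity of $\eta\mapsto\max_{\stratalt}\{\braket{\curr[\aggpay]}{\stratalt}-(\hreg(\stratalt)-\min\hreg)/\eta\}$ and by the three-point identity for the Fenchel coupling, is exactly the standard route these references take. Two small remarks: (i) your drift estimate uses $\curr[\learn]\le\prev[\learn]$, which the lemma does not state explicitly but which is the standing assumption for \eqref{eq:DA} in the paper; (ii) the reference to \cref{lem:Fench2norm} for the last step is slightly off, since that lemma gives the lower bound $\fench{\hreg}(\base,\score)\ge\tfrac{\hstr}{2}\norm{\choice{\hreg}(\score)-\base}^{2}$ rather than the smoothness upper bound on $\hconj$ you actually need --- but the fact that $\hstr$-strong convexity of $\hreg$ yields $(1/\hstr)$-smoothness of $\hconj$ is the standard convex duality you invoke, so the mathematics is fine.
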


Energy inequalities of the form \eqref{eq:template} are standard in the analysis of \acl{DA} schemes, and they go back at least as far as \citet{Nes07,Nes09} (who introduced the method).
The specific formulation above follows \citet[Lemma.~C.1]{HMMR20}, so we omit the proof;
for a range of similar computations, see \citet[Lemma~2.20]{SS11}, \citet[p.~42]{Kwo16} and \citet{Sor21}.

Now, to move forward with the proof of \cref{thm:DA-disc}, we will require two complementary arguments.
The first is a stability result:
we show below that, if $\curr$ is sufficiently close to an \acl{ESS} $\eq$ and the method's learning rate is ``small enough'', then $\next$ remains close to $\eq$.
To state this precisely, let $\dzone_{\eps} = \setdef{\score\in\R^{\pures}}{\fench{\hreg}(\eq,\score) \leq \eps}$ denote the ``Fenchel zone'' of magnitude $\eps$ with respect to $\eq$,
and let $\zone_{\eps} = \choice{\hreg}(\dzone_{\eps}) = \setdef{\strat = \choice{\hreg}(\score)}{\score\in\dzone_{\eps}}$ denote the image of $\dzone_{\eps}$ under $\choice{\hreg}$.
We then have:

\begin{lemma}
\label{lem:stability}
Let $\eq \in \ESS(\game)$ and assume that \eqref{eq:strong} and \eqref{eq:reciprocity} hold.
Then, for all sufficiently small $\eps>0$, we have the following implication \textpar{valid for all $\run=\running$}:
if $\curr\in\zone_{\eps}$ and $\max\{\curr[\learn],\curr[\diff]\}$ is small enough, we also have $\next\in\zone_{\eps}$.
\end{lemma}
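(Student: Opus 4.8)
The plan is to feed the energy inequality of \cref{lem:template} the reference point $\strat=\eq$ and read off a one-step invariance for the deflated Fenchel coupling. Set $W_{\run}\defeq\fench{\hreg}(\eq,\curr[\learn]\curr[\aggpay])=\curr[\learn]\curr[\lyap]$, where $\curr[\lyap]$ is the quantity of \cref{lem:template} evaluated at $\strat=\eq$; since $\next=\choice{\hreg}(\curr[\learn]\curr[\aggpay])$, the inequality $W_{\run}\le\eps$ says precisely that the dual variable producing $\next$ lies in $\dzone_{\eps}$, i.e.\ $\next\in\zone_{\eps}$ (and I read the hypothesis ``$\curr\in\zone_{\eps}$'' as $W_{\run-1}\le\eps$, i.e.\ that the dual iterate producing $\curr$ lies in $\dzone_{\eps}$). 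Multiplying \eqref{eq:template-strong} through by $\curr[\learn]>0$, using that $\learn_{\run}$ is non-increasing (so $\curr[\learn]/\prev[\learn]\le1$, hence $\curr[\learn]\prev[\lyap]\le W_{\run-1}$, and, once $\sigma_{\run}\defeq\max\{\curr[\learn],\curr[\diff]\}$ is small, also $\prev[\learn]\le2\curr[\learn]$ since $1/\prev[\learn]=1/\curr[\learn]-\curr[\diff]$), and bounding $\dnorm{\payv(\curr)}\le G\defeq\max_{\strat\in\strats}\dnorm{\payv(\strat)}<\infty$, one obtains an estimate of the form
\begin{equation*}
W_{\run}\;\le\;W_{\run-1}+\curr[\learn]\braket{\payv(\curr)}{\curr-\eq}+C\,\curr[\learn]\,\sigma_{\run},
\end{equation*}
where $C$ depends only on $\hstr$, $\hreg(\eq)-\min\hreg$ and $G$. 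The crucial feature is that every error term carries a factor of $\curr[\learn]$, so the drift $\curr[\learn]\braket{\payv(\curr)}{\curr-\eq}$ can be made to dominate once $\curr$ is bounded away from $\eq$.

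Next I pin down the local geometry at $\eq$. Since $\eq\in\ESS(\game)$, \eqref{eq:ESS-VI} furnishes a neighborhood $\nhd$ of $\eq$ on which $\braket{\payv(\strat)}{\strat-\eq}<0$ for $\strat\neq\eq$. Strong convexity \eqref{eq:strong} gives the standard bound $\fench{\hreg}(\eq,\score)\ge\tfrac{\hstr}{2}\norm{\choice{\hreg}(\score)-\eq}^{2}$, so $\zone_{\eps}\subseteq\ball(\eq,\sqrt{2\eps/\hstr})$; fixing $\eps$ small we may assume $\zone_{\eps}\subseteq\nhd$. Dually, \eqref{eq:reciprocity} forces the outer part of the Fenchel zone to stay away from $\eq$: there is $\rho(\eps)>0$ such that $\fench{\hreg}(\eq,\score)\ge\eps/2$ implies $\norm{\choice{\hreg}(\score)-\eq}\ge\rho(\eps)$, for otherwise a sequence $\score_{\runalt}$ with $\choice{\hreg}(\score_{\runalt})\to\eq$ and $\fench{\hreg}(\eq,\score_{\runalt})\ge\eps/2$ would contradict \eqref{eq:reciprocity} (this is the content of \cref{lem:Fench2norm}). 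Consequently, on the compact annulus $K_{\eps}\defeq\setdef{\strat\in\strats}{\rho(\eps)\le\norm{\strat-\eq}\le\sqrt{2\eps/\hstr}}\subseteq\nhd\setminus\{\eq\}$, the continuous map $\strat\mapsto\braket{\payv(\strat)}{\strat-\eq}$ attains a strictly negative maximum $-c(\eps)<0$.

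The proof then closes with a two-zone argument. Assume $W_{\run-1}\le\eps$ and $\sigma_{\run}\le\bar\sigma(\eps)$ for a threshold $\bar\sigma(\eps)$ to be fixed. If $W_{\run-1}\le\eps/2$, then $\curr=\choice{\hreg}(\prev[\learn]\prev[\aggpay])\in\zone_{\eps/2}\subseteq\nhd$, so $\braket{\payv(\curr)}{\curr-\eq}\le0$ by \eqref{eq:ESS-VI}, and the recursion above gives $W_{\run}\le\eps/2+C\sigma_{\run}^{2}\le\eps$ provided $\sigma_{\run}\le\sqrt{\eps/(2C)}$. If instead $\eps/2<W_{\run-1}\le\eps$, then $\curr\in K_{\eps}$ by the previous paragraph, so $\braket{\payv(\curr)}{\curr-\eq}\le-c(\eps)$, and the recursion gives $W_{\run}\le\eps+\curr[\learn]\bigl(-c(\eps)+C\sigma_{\run}\bigr)\le\eps$ provided $\sigma_{\run}<c(\eps)/C$. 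In both cases $W_{\run}\le\eps$, i.e.\ $\next\in\zone_{\eps}$; the lemma thus holds with $\bar\sigma(\eps)=\min\{\tfrac12,\ \sqrt{\eps/(2C)},\ c(\eps)/C\}$.

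The step I expect to be the main obstacle is the geometric setup of the second paragraph, and specifically the claim $\rho(\eps)>0$ — that $\choice{\hreg}$ maps the shell $\{\eps/2\le\fench{\hreg}(\eq,\cdot)\le\eps\}$ into a set whose closure misses $\eq$. This is exactly where \eqref{eq:reciprocity} is indispensable: strong convexity alone controls only the outer radius (it bounds $\fench{\hreg}(\eq,\cdot)$ below by the squared distance), not the inner one, and the gap is genuine when $\eq$ lies on the boundary of $\strats$, where $\subd\hreg(\eq)$ may be empty. I would therefore invoke \cref{lem:Fench2norm} for the precise topological-equivalence statement rather than reprove it. A secondary, more mechanical point requiring care is the verification that every error term in \eqref{eq:template-strong}, after multiplication by $\curr[\learn]$, is genuinely $O(\curr[\learn]\sigma_{\run})$ and not merely $O(\sigma_{\run})$: this is what uses monotonicity of the learning rate, to absorb $\prev[\learn]$ into $2\curr[\learn]$.
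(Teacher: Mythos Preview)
Your proposal is correct and follows essentially the same approach as the paper: substitute $\strat=\eq$ in the template inequality \eqref{eq:template-strong}, pass to $F_{\run}=\fench{\hreg}(\eq,\curr[\learn]\curr[\aggpay])$, and run a two-zone argument (inner zone $\zone_{\eps/2}$ versus outer shell) where reciprocity supplies the uniform negative drift $-c(\eps)$ on the shell. Your phrasing via the norm annulus $K_{\eps}$ and your explicit reduction of $\prev[\learn]$ to $2\curr[\learn]$ are slightly more careful than the paper's version, but the argument is the same; one small slip is the reference to \cref{lem:Fench2norm} for the $\rho(\eps)>0$ claim, which is actually a direct consequence of \eqref{eq:reciprocity} rather than of that lemma.
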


The second component of the proof of \cref{thm:DA-disc} is a subsequence extraction argument:
if the iterates of \eqref{eq:DA} lie in a neighborhood of $\eq$ where \eqref{eq:ESS} holds,
there exists a subsequence of $\curr$ that converges to $\eq$.

\begin{lemma}
\label{lem:subsequence}
Let $\eq \in \ESS(\game)$, let $\nhd$ be the neighborhood of validity of \eqref{eq:ESS}, and assume further that \eqref{eq:strong} and \eqref{eq:learn} hold.
If $\curr\in\nhd$ for all sufficiently large $\run$, we have $\liminf_{\run\to\infty} \norm{\curr - \eq} = 0$.
\end{lemma}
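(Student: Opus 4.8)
The plan is to argue by contradiction, using the deflated Fenchel coupling $\curr[\lyap]$ of \cref{lem:template} — evaluated at the reference point $\strat = \eq$ — as a Lyapunov-type potential. Suppose, towards a contradiction, that $\liminf_{\run\to\infty}\norm{\curr - \eq} = c > 0$. Then there is an index $\run_{1}$ such that, for every $\run\geq\run_{1}$, we have simultaneously $\curr\in\nhd$ and $\norm{\curr - \eq}\geq c/2$. Since $\eq\in\ESS(\game)$, the characterization \eqref{eq:ESS-VI} gives $\braket{\payv(\strat)}{\strat - \eq} < 0$ for all $\strat\in\nhd\setminus\{\eq\}$; combining this strict inequality with the continuity of $\payv$ and a compactness argument over the tail of the orbit, one extracts a \emph{uniform} gap: there is $a > 0$ with $\braket{\payv(\curr)}{\curr - \eq}\leq -a$ for all $\run\geq\run_{1}$.

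Next I would invoke the strong-convexity form \eqref{eq:template-strong} of the energy inequality at $\strat = \eq$ (legitimate since \eqref{eq:strong} holds), namely
\begin{multline*}
\curr[\lyap]
	\leq \prev[\lyap]
		+ \braket{\payv(\curr)}{\curr - \eq} \\
		+ [\hreg(\eq) - \min\hreg]\,\curr[\diff]
		+ \frac{\prev[\learn]}{2\hstr}\dnorm{\payv(\curr)}^{2},
\end{multline*}
where $\curr[\diff] = \curr[\learn]^{-1} - \prev[\learn]^{-1}$. The last two summands form an error term that vanishes: $\curr[\diff]\to 0$ by \eqref{eq:learn} while $\hreg(\eq) - \min\hreg$ is a fixed finite constant (recall $\hreg$ is continuous on the compact simplex $\strats$); and $\prev[\learn]\to 0$ since $\curr[\learn]\searrow 0$, while $\dnorm{\payv(\curr)}\leq G\defeq\max_{\strat\in\strats}\dnorm{\payv(\strat)} < \infty$ by continuity of $\payv$ on $\strats$. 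Hence there is $\run_{0}\geq\run_{1}$ beyond which this error is at most $a/2$, so that $\curr[\lyap]\leq\prev[\lyap] - a/2$ for all $\run\geq\run_{0}$.

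Iterating this one-step decrease forces $\curr[\lyap]\to-\infty$. But $\fench{\hreg}\geq 0$ by the Fenchel--Young inequality and $\curr[\learn] > 0$, so $\curr[\lyap] = \curr[\learn]^{-1}\fench{\hreg}(\eq,\curr[\learn]\curr[\aggpay])\geq 0$ for every $\run$ — a contradiction. We conclude that $\liminf_{\run\to\infty}\norm{\curr - \eq} = 0$.

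The one step that genuinely requires care is the \emph{uniform} gap $\braket{\payv(\curr)}{\curr - \eq}\leq -a$: the \ac{ESS} inequality \eqref{eq:ESS-VI} is strict but a priori not quantitative, so converting it into a constant bound calls for the compactness of $\strats$ together with a mild adjustment of $\nhd$ — e.g.\ passing to a slightly smaller compact neighborhood of $\eq$ on whose closure \eqref{eq:ESS-VI} is still strict, which is harmless in the ambient argument of \cref{thm:DA-disc} since there the iterates are confined to $\zone_{\eps}$ with $\eps$ arbitrarily small. Everything else is a direct, essentially bookkeeping, consequence of \cref{lem:template} and the hypotheses \eqref{eq:strong} and \eqref{eq:learn}.
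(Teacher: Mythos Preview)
Your proof is correct and follows essentially the same route as the paper's: a contradiction argument based on the deflated Fenchel energy $\curr[\lyap]$ from \cref{lem:template}, using the \ac{ESS} drift to force $\curr[\lyap]\to-\infty$. The only cosmetic difference is that the paper telescopes \eqref{eq:template-strong} first and then controls the aggregated error terms via Stolz--Ces\`aro, whereas you bound the per-step error directly and obtain a uniform one-step decrease; your bookkeeping is slightly cleaner but the argument is the same.
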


We prove these two ancillary results below.

\begin{proof}[Proof of \cref{lem:stability}]
Let $\nhd$ be the neighborhood whose existence is guaranteed by \eqref{eq:ESS}, \ie $\braket{\payv(\strat)}{\strat - \eq} < 0$ for all $\strat\in\nhd$, $\strat\neq\eq$.
By the relation between the Fenchel coupling and the ordinary norm topology on $\strats$ (\cf \cref{lem:Fench2norm} in \cref{app:mirror}),
it follows that $\zone_{\eps} \subseteq \nhd$ if $\eps$ is taken small enough;
we will assume this to be the case throughout the rest of this proof.
Moreover, by \eqref{eq:reciprocity} and the continuity of $\pay$, there exists some $\const \equiv \const(\eps)$ such that $\braket{\payv(\strat)}{\strat - \eq} \leq - \const < 0$ for all $\strat \in \zone_{\eps} \setminus \zone_{\eps/2}$.

With all this in hand, assume that $\curr \in \zone_{\eps}$;
we will show that $\next \in \zone_{\eps}$ provided that $\max\{\curr[\learn],\curr[\diff]\}$ is small enough.
To do so, let $\curr[F] = \fench{\hreg}(\eq,\curr[\learn]\curr[\aggpay])$;
then, by substituting $\strat\gets\eq$ in the template inequality \eqref{eq:template-strong} of \cref{lem:template}, we obtain:

\begin{enumerate}
[left=1em,label={\bfseries Case \arabic*}.]
\addtolength{\itemsep}{\smallskipamount}

\item
If $\curr\in\zone_{\eps} \setminus \zone_{\eps/2}$, then
\begin{equation}
\curr[F]
	\leq \prev[F]
		- \curr[\learn] \bracks*{\const - \curr[\diff] \depth - \prev[\learn] \frac{\vbound^{2}}{2\hstr}},
\end{equation}
where $\depth = \max\hreg - \min\hreg$ and $\vbound = \max_{\strat\in\strats} \dnorm{\payv(\strat)}$.
Hence, if $\curr[\diff] \leq \const/(2\depth)$ and $\prev[\learn] \leq \const\hstr/\vbound^{2}$, we conclude that $\curr[F] \leq \prev[F]$.

\item
If $\curr\in\zone_{\eps/2}$, then
\begin{equation}
\curr[F]
	\leq \frac{\eps}{2}
		+ \curr[\diff] \depth + \prev[\learn] \frac{\vbound^{2}}{2\hstr},
\end{equation}
so $\curr[F] \leq \eps$ provided that $\curr[\diff] < \eps/(4\depth)$ and $\prev[\learn] \leq \eps \hstr / (2\vbound^{2})$.
\end{enumerate}
In both cases, we conclude that $\next = \choice{\hreg}(\curr[\learn]\curr[\aggpay])\in\zone_{\eps}$, as claimed.
\end{proof}

\begin{proof}[Proof of \cref{lem:subsequence}]
Assume to the contrary that $\liminf_{\run\to\infty} \norm{\curr - \eq} > 0$.
Then, by assumption, there exist $\eps>0$ and $\run_{0} \equiv \run_{0}(\eps)$ such that $\curr\in\nhd$ and $\norm{\curr - \eq} \geq \eps$ for all $\run\geq\run_{0}$.
Since $\eq\in\ESS(\game)$, there exists a positive constant $\const>0$ such that $\braket{\payv(\curr)}{\curr - \eq} \leq -\const < 0$ for all $\run\geq\run_{0}$.
With this in mind, substituting $\strat \gets \eq$ in the template inequality \eqref{eq:template-strong} and telescoping yields
\begin{align}
\label{eq:lyap-subseq}
\curr[\lyap]
	&\leq \lyap_{\run_{0}}
		- \const(\run - \run_{0})
		+ \parens*{\frac{1}{\curr[\learn]} - \frac{1}{\learn_{\run_{0}}}} \depth
		+ \frac{\vbound^{2}}{2\hstr} \sum_{\runalt=\run_{0}}^{\run} \iter[\learn]
	\notag\\
	&\leq \lyap_{\run_{0}} + \const \run_{0} - \depth/\run_{0}
		- \run \bracks*{\const - \frac{1}{\run \curr[\learn]} - \frac{1}{\run} \sum_{\runalt=\run_{0}}^{\run} \iter[\learn]}.
\end{align}
To proceed, note that the first part of \eqref{eq:learn} gives
\begin{equation}
\lim_{\run\to\infty} \frac{\parens*{1/\curr[\learn] - 1/\prev[\learn]}}{\run - (\run-1)}
	= \lim_{\run\to\infty} \parens*{\frac{1}{\curr[\learn]} - \frac{1}{\prev[\learn]}}
	= \lim_{\run\to\infty} \curr[\diff]
	= 0,
\end{equation}
so $\lim_{\run\to\infty} 1/(\run\curr[\learn]) = 0$ by the Stolz\textendash Cesàro theorem.
By the second part of \eqref{eq:learn}, we also have $(1/\run) \sum_{\runalt=\run_{0}}^{\run} \iter[\learn] = 0$.
Hence, by letting $\run\to\infty$ in \eqref{eq:lyap-subseq}, we get $\lim_{\run\to\infty} \curr[\lyap] = -\infty$, a contradiction which completes our proof.
\end{proof}

The proof of our main convergence results follows by a tandem application of the above.

\begin{proof}[Proof of \cref{thm:DA-disc}]
We consider the two cases separately.

\para{Case 1: $\eq\in\GESS(\game)$}
Let $\eps>0$ be arbitrary.
By \eqref{eq:learn}, it follows that $\lim_{\run\to\infty} \curr[\learn] = \lim_{\run\to\infty} \curr[\diff] = 0$, and, by \cref{lem:subsequence}, there exists a subsequence $\state_{\run_{\runalt}}$ of $\curr$ converging to $\eq$.
This means that, in the long run, $\curr[\learn]$ and $\curr[\diff]$ become arbitrarily small, and $\curr$ comes arbitrarily close to $\eq$.
Hence, by applying \cref{lem:stability} inductively, we conclude that $\curr \in \zone_{\eps}$ for all sufficiently large $\run$.
Since $\eps>0$ has been chosen arbitrarily, this implies that $\curr$ converges to $\eq$, as claimed.

\para{Case 2: $\eq\in\ESS(\game)$}
Let $\nhd$ be the neighborhood whose existence is guaranteed by the evolutionary stability condition \eqref{eq:ESS-VI} for $\eq$.
Then, by \cref{lem:stability}, if \eqref{eq:DA} is initialized sufficiently close to $\eq$ and $\max\{\curr[\learn],\curr[\diff]\}$ is sufficiently small, we will have $\curr\in\nhd$ for all $\run$.
Hence, by
\cref{lem:subsequence}, we conclude that $\liminf_{\run\to\infty} \norm{\curr - \eq} = 0$.
Our proof then follows by the same inductive application of \cref{lem:stability} as in the case of a \ac{GESS} above.
\end{proof}

\para{Time averages under \acl{DA}}
We close this section with a result on the empirical frequency of play under \eqref{eq:DA}.
Indeed, telescoping \eqref{eq:template} trivially yields the bound:
\begin{equation}
\sum_{\runalt=\start}^{\run} \braket{\payv(\iter)}{\strat - \iter}
	\leq \frac{\max\hreg - \min\hreg}{\curr[\learn]}
	+ \frac{1}{2\hstr} \sum_{\runalt=\start}^{\run} \beforeiter[\learn] \dnorm{\payv(\iter)}^{2}.
\end{equation}
This regret-type bound echoes standard results in the literature \textendash\ see \eg \citet[Theorem 1]{Nes09} \textendash\ and leads to the following time-average convergence result:

\begin{proposition}
\label{prop:DA-disc-avg}
Suppose that \eqref{eq:DA} is run with
a regularizer satisfying \eqref{eq:strong}
and
a learning rate of the form $\curr[\learn] \propto 1/\sqrt{\run}$.
If $\game$ is monotone, the empirical frequency of play $\curr[\avg]$ under \eqref{eq:DA} converges to $\Eq(\game)$.
\end{proposition}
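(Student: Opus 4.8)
The plan is to transcribe, into discrete time, the argument used for \cref{prop:DA-cont-avg}: the regret-type bound displayed just above the statement plays the role that \cref{lem:reg} plays in continuous time, and monotonicity is then used to linearize around a fixed reference point, exactly as there. First I would fix an arbitrary $\strat\in\strats$ and record that, since $\payv$ is continuous on the compact simplex $\strats$, the payoff vectors are uniformly bounded there; write $\vbound \defeq \max_{\stratalt\in\strats}\dnorm{\payv(\stratalt)} < \infty$. With the prescribed schedule $\curr[\learn] = \const/\sqrt{\run}$, the right-hand side of the telescoped bound
\[
\sum_{\runalt=\start}^{\run} \braket{\payv(\iter)}{\strat - \iter}
	\leq \frac{\max\hreg - \min\hreg}{\curr[\learn]}
	+ \frac{1}{2\hstr} \sum_{\runalt=\start}^{\run} \beforeiter[\learn] \dnorm{\payv(\iter)}^{2}
\]
is then $\bigoh(\sqrt{\run})$: the first term equals $(\max\hreg-\min\hreg)\sqrt{\run}/\const$, and the second is at most $(\vbound^{2}/(2\hstr))\sum_{\runalt=\start}^{\run}\beforeiter[\learn] = \bigoh(\sqrt{\run})$ because $\sum_{\runalt=\start}^{\run}\runalt^{-1/2} = \bigoh(\sqrt{\run})$.

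Next I would invoke monotonicity to pass to a fixed payoff vector, just as in \cref{prop:DA-cont-avg}: applying \eqref{eq:monotone} to the pair $(\iter,\strat)$ gives $\braket{\payv(\iter)}{\strat-\iter}\geq\braket{\payv(\strat)}{\strat-\iter}$ for every $\runalt$, so summing over $\runalt=\start,\dotsc,\run$, dividing by $\run$, and using the previous step yields
\[
\braket{\payv(\strat)}{\strat - \curr[\avg]}
	\leq \frac{1}{\run}\sum_{\runalt=\start}^{\run}\braket{\payv(\iter)}{\strat-\iter}
	= \bigoh(1/\sqrt{\run}),
\]
which tends to $0$ as $\run\to\infty$. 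Hence every accumulation point $\avg[\strat]_{\star}$ of the sequence $\curr[\avg]$ satisfies $\braket{\payv(\strat)}{\strat-\avg[\strat]_{\star}}\leq 0$ for all $\strat\in\strats$, i.e.\ $\avg[\strat]_{\star}\in\MVI(\game)$. Finally, since $\game$ is monotone, the equivalence $\MVI(\game)\equiv\Eq(\game)$ recorded in \cref{sec:classes} gives $\avg[\strat]_{\star}\in\Eq(\game)$, which is exactly the set-convergence asserted (per the convention in the footnote at the start of \cref{sec:cont}).

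I do not anticipate a genuine obstacle: the statement is essentially a discrete shadow of \cref{prop:DA-cont-avg}, and everything needed (the energy/regret inequality, monotonicity, and the identification $\MVI\equiv\Eq$ in the monotone class) is already in place. The only two points deserving a moment's care are that it is the uniform bound $\vbound$ on $\dnorm{\payv}$, rather than any stochasticity, that keeps the last term of the regret inequality at order $\sqrt{\run}$ (so that, after dividing by $\run$, it vanishes under the choice $\curr[\learn]\propto1/\sqrt{\run}$), and that $\curr[\avg]$ itself need not converge to a single point, so the conclusion must be phrased in terms of accumulation points, in keeping with the paper's notion of convergence to a set.
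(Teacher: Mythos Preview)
Your proposal is correct and follows exactly the approach the paper intends: it says explicitly that the proof ``mimics that of \cref{prop:DA-cont-avg},'' and your argument does precisely this, replacing \cref{lem:reg} by the telescoped regret bound and then invoking monotonicity and the identification $\MVI(\game)=\Eq(\game)$ just as in the continuous-time case.
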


The proof of \cref{prop:DA-disc-avg} mimics that of \cref{prop:DA-cont-avg} so we omit it.
We only note that, in contrast to \cref{thm:DA-disc}, \cref{prop:DA-disc-avg} does not require strict monotonicity or reciprocity;
however, it concerns the cruder, average frequency of play, so it provides a considerably weaker guarantee in this regard.

\section{Extensions and concluding remarks}
\label{sec:extensions}

In this concluding section, we provide a series of extensions and remarks that would have otherwise disrupted the flow of the rest of our paper.

\subsection{Positive correlation}

\citet{San01} introduced the so-called ``\emph{positive correlation}'' condition
\begin{equation}
\label{eq:corr}
\tag{PC}
\dot \strat_{\time}
	\neq 0
	\implies \braket{\payv(\strat_{\time})}{\dot\strat_{\time}} > 0.
\end{equation}
In the case of potential games, \eqref{eq:corr} implies that
\begin{equation}
\ddt \pot(\strat_{\time})
	= \braket{\payv(\strat_{\time})}{\dot\strat_{\time}}
	> 0
\end{equation}
whenever $\dot\strat_{\time} \neq 0$.
This in turn implies the  convergence of the trajectories to the set of rest points of the dynamics \citep[Theorem 7.1.2]{San10}.
Some of our results in \cref{sec:cont} for potential games can also be obtained by establishing \eqref{eq:corr} for the dynamics under study.

\subsection{No-regret dynamics}

Our second remark concerns continuous- or discrete-time processes for which the average regret vanishes, \ie for all $\strat\in\strats$ we have:
\begin{subequations}
\label{eq:no-reg}
\begin{alignat}{2}
\limsup_{\time\to\infty}
	\frac{1}{\time}\int_{0}^{\time} \braket{\payv(\strat_{\timealt})}{\strat - \strat_{\timealt}} \dd\timealt
	&= 0
	&\quad
	&\text{in continuous time}
	\\
\limsup_{\run\to\infty}
	\frac{1}{\run}\sum_{\runalt=\start}^{\run} \braket{\payv(\iter)}{\strat - \iter}
	&= 0
	&\quad
	&\text{in discrete time}
\end{alignat}
\end{subequations}
As we saw in \cref{prop:DA-cont-avg,prop:DA-disc-avg}, the dynamics of \acl{DA} have the no-regret property \eqref{eq:no-reg} in both continuous and discrete time.
Looking more closely at the proofs of \cref{prop:DA-cont-avg,prop:DA-disc-avg}, we conclude that, in monotone games, the empirical frequency of play under any dynamical process that satisfies \eqref{eq:no-reg} converges to the game's set of equilibria.
This property forms the basis of many algorithms and dynamics designed to solve variational inequalities and equilibrium problems \citep{Nem04,Nes09,JNT11};
the above provides a game-theoretic interpretation of this property in the context of population games.

\subsection{Extensions}

We also note that our results can be extended to the following settings:
\begin{enumerate}

\item
If $\strats$ is a convex compact subset of $\R^{\pures}$ and $\payv\from\strats\to\R^{\pures}$ is a vector field on $\strats$, we maintain the same results for convergence to solutions of \eqref{eq:SVI} and \eqref{eq:MVI};
however, the population game interpretation disappears.

\item
This interpretation is recovered in multi-population games:
if there are several player populations indexed by $\play=1,\dotsc,\nPlayers$, and if $\pures_{\play}$ denotes the set of pure strategies of the $\play$-th population and $\pay_{\play\pure}\from\prod_{\play=1}^{\nPlayers} \simplex(\pures_{\play}) \to \R$ is the payoff function of $\pure$-strategists in the $\play$-th population, it suffices to set $\strats = \prod_{\play} \simplex(\pures_{\play})$ and $\payv(\strat) = (\pay_{\play\pure_{\play}}(\strat))_{\pure_{\play}\in\pures_{\play},\play\in\players}$.

\item
Our theory can also be extended to normal form games with a finite number of players and smooth concave payoff functions $g_{\play}$, $\play=1,\dotsc,\nPlayers$.
In this case, the components of the corresponding vector field are $G_{\play\pure}(\strat) = \frac{\pd g_{\play}}{\pd\strat_{\play\pure}}$, and the notion of evolutionary stability boils down to that of \emph{variational stability};
for a detailed treatment, see \citet{SW16} and \citet{MZ19}.
\end{enumerate}

\subsection{Links with \aclp{VI} and further extensions}

Taking a more general point of view that focuses on the vector field $\payv$, we can also consider the following framework that brings to the forefront the associated \aclp{VI}.
Let $\points_{\play}$, $\play = 1,\dotsc,\nPlayers$ be a convex compact subset of a topological vector space $\vecspace_{\play}$ with dual $\dspace_{\play}$,
write $\vecspace = \prod_{\play}\vecspace_{\play}$ and $\points = \prod_{\play}\points_{\play}$,
and
let $\payv_{\play} \from \points\to\dspace_{\play}$ be a collection of continuous maps, $\play=1,\dotsc,\nPlayers$.
Finally, set $\payv(\point) = (\payv_{1}(\point),\dotsc,\payv_{\nPlayers}(\point))$ and, as usual, write
\begin{equation}
\braket{\payv(\point)}{\tvec}
	= \sum_{\play=1}^{\nPlayers} \braket{\payv_{\play}(\point)}{\tvec_{\play}}
\end{equation}
for the dual pairing between $\payv(\point) \in \dspace = \prod_{\play}\dspace_{\play}$ and $\tvec = (\tvec_{1},\dotsc,\tvec_{\nPlayers}) \in \vecspace \defeq \prod_{\play} \vecspace_{\play}$.

We may then define $\SVI(\points,\payv)$ as the set of $\sol\in\points$ such that
\begin{equation}
\braket{\payv(\sol)}{\point-\sol}
	\leq 0,
	\quad
	\text{for all $\point\in\points$}
\end{equation}
and $\MVI(\points,\payv)$ as the set of $\sol\in\points$ such that
\begin{equation}
\braket{\payv(\point)}{\point-\sol}
	\leq 0
	\quad
	\text{for all $\point\in\points$}.
\end{equation}
The counterparts for $\ESS$ and $\GESS$ may also be defined similarly (though the link with population games is no longer present).

\begin{example}
\label{ex:cont}
Consider a strategic game with $\nPlayers$ players, compact metric strategy spaces $\pures_{\play} \subseteq \R^{\vdim}$ and continuous payoff functions $\pay_{\play} \from \pures \to \R$, $\play=1,\dotsc,\nPlayers$, with $\pures \defeq \prod_{\playalt}\pures_{\playalt}$.
Introduce the mixed extension of the game as usual:
for a probability distribution (mixed strategy profile) $\chi = (\chi_{1},\dotsc,\chi_{\nPlayers}) \in \strats$ with $\chi_{\play} \in \strats_{\play} \defeq \simplex(\pures_{\play})$, let $\pay_{\play}$ be extended as
\begin{equation}
\pay_{\play}(\chi)
	= \int_{\pures} \pay_{\play}(\pure_{1},\dotsc,\pure_{\nPlayers})
	\,\dd\chi_{1}(\pure_{1}) \dotsm \dd\chi_{\nPlayers}(\pure_{\nPlayers}).
\end{equation}
Then, introduce $\payv_{\play}(\chi) = \pay_{\play}(\cdot,\chi_{-\play})$ so that
\begin{equation}
\braket{\payv_{\play}(\chi)}{\alt\chi_{\play}}
	= \payv_{\play}(\alt\chi_{\play};\chi_{-\play})
\end{equation}
for all $\alt\chi_{\play}\in\strats_{\play}=\simplex(\pures_{\play})$.
In this way, the equilibrium condition for $\eq[\chi]\in\strats$ becomes
\begin{equation}
\braket{\payv(\eq[\chi])}{\chi - \eq[\chi]}
	= \sum_{\play=1}^{\nPlayers} \braket{\payv_{\play}(\eq[\chi])}{\chi_{\play} - \eq[\chi]_{\play}}
	\leq 0
	\quad
	\text{for all $\chi\in\strats$},
\end{equation}
which is again an instance of $\SVI(\points,\payv)$.
\hfill
\endenv
\end{example}

In this setting, $\payv$ is monotone (dissipative) if
\begin{equation}
\braket{\payv(\point) - \payv(\pointalt)}{\point - \pointalt}
	\leq 0
	\quad
	\text{for all $\point,\pointalt\in\points$}
\end{equation}
As in the case of \cref{sec:prelims}, we have $\SVI(\points,\payv) = \MVI(\points,\payv)$ if $\payv$ is monotone;
furthermore, mutatis mutandis, all the properties presented in \cref{sec:classes} extend to this general setup.

The best-response map associated to $\payv$ is defined again as
 \begin{equation}
\BR(\point)
	= \argmax_{\pointalt\in\points} \braket{\payv(\point)}{\pointalt}.
\end{equation}
In addition to monotonicity, we will also now make the ``\emph{unique best response}'' assumption
\begin{equation}
\label{eq:unique}
\tag{U}
\text{$\BR(\point)$ is a singleton for all $\point\in\points$.}
\end{equation}

\begin{proposition}
\label{prop:singleton}
If $\payv$ is  monotone and \eqref{eq:unique} holds, the set $E \defeq \MVI(\points,\payv) = \SVI(\points,\payv)$ is a singleton.
\end{proposition}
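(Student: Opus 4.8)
The plan is to argue separately that $E$ is nonempty and that it contains at most one point. Nonemptiness is essentially already in hand: since $\payv$ is monotone we have $\MVI(\points,\payv) = \SVI(\points,\payv)$ (this is the analogue, granted in the excerpt, of the equivalences of \cref{sec:classes}), and $\SVI(\points,\payv)$ is the fixed-point set of $\BR$, which under \eqref{eq:unique} is a single-valued map with closed graph, hence continuous; a fixed point exists by the same fixed-point/minmax argument used in \cref{sec:prelims}. So the whole content is uniqueness, and for that the key observation is that one must play the \emph{strict} Stampacchia inequality (supplied by \eqref{eq:unique}) at one solution against the \emph{Minty} inequality at the other — testing SVI against SVI merely reproduces monotonicity and yields nothing.

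Concretely, suppose $\sol,\sol'\in E$ with $\sol\neq\sol'$. First I would note that $\sol\in\SVI(\points,\payv)$ says exactly that $\sol$ maximizes $\point\mapsto\braket{\payv(\sol)}{\point}$ over $\points$, i.e.\ $\sol\in\BR(\sol)$; by \eqref{eq:unique}, $\BR(\sol)=\{\sol\}$, so $\sol$ is the \emph{unique} maximizer and therefore
\begin{equation*}
\braket{\payv(\sol)}{\sol'-\sol} < 0 ,
\end{equation*}
the inequality being strict precisely because $\sol'\neq\sol$. On the other hand, $\sol'\in E=\MVI(\points,\payv)$ gives $\braket{\payv(\point)}{\point-\sol'}\leq 0$ for every $\point\in\points$; evaluating at $\point=\sol$ yields $\braket{\payv(\sol)}{\sol-\sol'}\leq 0$, that is $\braket{\payv(\sol)}{\sol'-\sol}\geq 0$. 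This contradicts the previous display, so $\sol=\sol'$ and $E$ is a singleton.

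The argument is short, and the only real subtlety is the one just flagged: the asymmetry between the Stampacchia and Minty formulations is what breaks the tie, so the proof genuinely uses both the hypothesis $\payv\in\mathsf{Mon}$ (to get $\SVI=\MVI$, and for existence) and \eqref{eq:unique} (to upgrade the Stampacchia inequality at $\sol$ to a strict one). Beyond that, the steps I would want to double-check are purely routine: that $\BR$ is well-defined and single-valued on all of $\points$ under \eqref{eq:unique} (so that ``unique maximizer'' is meaningful), that continuity of $\payv$ together with compactness of $\points$ secures $\SVI(\points,\payv)\neq\varnothing$, and that the identity $\MVI(\points,\payv)=\SVI(\points,\payv)$ for monotone $\payv$ indeed carries over to this general setting — all of which are granted in the text preceding the statement.
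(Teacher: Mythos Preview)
Your proof is correct, but it is not the paper's argument, and your editorial aside that ``testing SVI against SVI merely reproduces monotonicity and yields nothing'' is actually wrong --- that is precisely what the paper does, and it works.

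The paper takes $\point,\pointalt\in E$, writes the two Stampacchia inequalities $\braket{\payv(\point)}{\pointalt-\point}\leq 0$ and $\braket{\payv(\pointalt)}{\point-\pointalt}\leq 0$, and adds them to obtain $\braket{\payv(\pointalt)-\payv(\point)}{\pointalt-\point}\geq 0$. This is the \emph{reverse} of the monotonicity inequality, so together the two force $\braket{\payv(\pointalt)-\payv(\point)}{\pointalt-\point}=0$, and since the two summands have opposite signs, each must vanish individually; in particular $\braket{\payv(\pointalt)}{\pointalt-\point}=0$, which says $\point\in\BR(\pointalt)$, and then \eqref{eq:unique} gives $\point=\pointalt$.

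Your route --- upgrade SVI at $\sol$ to a strict inequality via \eqref{eq:unique}, then play it against MVI at $\sol'$ --- is equally short and perfectly valid; it trades the explicit appeal to monotonicity for the already-granted identification $E=\MVI$. Neither approach dominates the other, but you should drop the claim that the SVI--SVI pairing is vacuous. (On existence: the paper's proof, like the substantive part of yours, only establishes uniqueness; nonemptiness is tacitly inherited from the earlier discussion.)
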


\begin{proof}
If $\point,\pointalt\in E$, we have $\braket{\payv(\point)}{\pointalt - \point} \leq 0$ and $\braket{\payv(\pointalt)}{\point-\pointalt} \leq 0$, so
\begin{equation}
\braket{\payv(\pointalt) - \payv(\point)}{\pointalt - \point}
	\geq 0
\end{equation}
implying first $\braket{\payv(\pointalt) - \payv(\point)}{\pointalt - \point} = 0$ by monotonicity and then $\braket{\payv(\pointalt)}{\pointalt - \point} = 0$.
Thus, $\point = \pointalt$ by \eqref{eq:unique}.
\end{proof}

The regularized best response map is likewise defined as
\begin{equation}
\rBR{\weight\hreg}(\point)
	= \argmax_{\pointalt\in\points} \{\braket{\payv(\point)}{\pointalt} - \eps\hreg(\pointalt) \}
\end{equation}
where $\hreg\from\points\to\R$ satisfies the same axioms as \cref{def:reg}.
Note that $\rBR{\weight\hreg}$ satisfies \eqref{eq:unique} by construction.

Now, using these best-response maps, we may extend the definition of the dynamics \eqref{eq:FP}, \eqref{eq:RFP} and \eqref{eq:DA} to the more general setup considered here in the obvious way.
We then have:

\begin{proposition}
Suppose that \eqref{eq:unique} holds and let $\curr$, $\run=\running$, be a sequence of states generated by \eqref{eq:FP}.
If the accumulation points of $\curr[\avg]$ lie in $\SVI(\points,\payv)$,
the same holds for the accumulation points of $\curr$.
\end{proposition}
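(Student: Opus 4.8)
The plan is to exploit two consequences of the ``unique best response'' assumption \eqref{eq:unique}: first, that membership in $\SVI(\points,\payv)$ coincides with being a fixed point of $\BR$; and second, that $\BR$, being a single-valued upper hemicontinuous $\argmax$ correspondence, is a genuinely \emph{continuous} map of $\points$ into itself. Granting these two facts, the proposition reduces to a one-line limit argument along a suitable subsequence.

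First I would record the fixed-point characterization: directly from the definition of $\SVI(\points,\payv)$, a point $\point\in\points$ lies in $\SVI(\points,\payv)$ if and only if $\point$ maximizes $\braket{\payv(\point)}{\cdot}$ over $\points$, \ie if and only if $\point\in\BR(\point)$; under \eqref{eq:unique} this is the fixed-point equation $\point=\BR(\point)$. Second, since $\payv$ is continuous and $\points$ is compact and fixed, the maximum theorem of \citet{Ber97} gives that $\BR$ is upper hemicontinuous with nonempty compact values, and \eqref{eq:unique} then forces $\BR$ to be a continuous function $\points\to\points$. Finally, from \eqref{eq:FP} we have $\curr=\BR(\prev[\avg])$ for all $\run\geq 2$ (with $\prev[\avg]$ the average of the first $\run-1$ states), while \eqref{eq:FP-avg} together with the compactness of $\points$ yields $\curr[\avg]-\prev[\avg]\to 0$.

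With these in place, let $p$ be an accumulation point of $(\curr)$ and pass to a subsequence along which $\curr\to p$. By compactness of $\points$, pass to a further subsequence so that $\prev[\avg]$ converges to some $q\in\points$; then $\curr[\avg]\to q$ as well, so $q$ is an accumulation point of the empirical frequencies $(\curr[\avg])$ and hence, by hypothesis, $q\in\SVI(\points,\payv)$, \ie $q=\BR(q)$. Letting $\run\to\infty$ along this subsequence in $\curr=\BR(\prev[\avg])$ and invoking the continuity of $\BR$, we obtain $p=\BR(q)=q\in\SVI(\points,\payv)$, which is exactly the assertion.

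The only step carrying any weight is the continuity of $\BR$: without \eqref{eq:unique} the $\argmax$ map is merely upper hemicontinuous, and one could not pass from $\prev[\avg]\to q$ to $\BR(\prev[\avg])\to\BR(q)$, so this is precisely where \eqref{eq:unique} is used. (In the stated generality of topological vector spaces one should, strictly speaking, replace ``subsequence'' by ``subnet'' throughout and appeal to compactness in that form; this is a routine modification that does not affect the structure of the argument.)
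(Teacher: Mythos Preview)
Your proof is correct and follows essentially the same approach as the paper: pass to a subsequence along which $\curr$ converges, extract a further subsequence so that the corresponding averages converge to some $q$, use the hypothesis that $q\in\SVI(\points,\payv)$, and conclude via \eqref{eq:unique}. The only cosmetic difference is that you package the limit step as ``$\BR$ is continuous by Berge's maximum theorem,'' whereas the paper passes to the limit directly in the defining inequality $\braket{\payv(\bar\state_{\run_{\runalt}})}{\BR(\bar\state_{\run_{\runalt}})-\point}\geq0$ using only the continuity of $\payv$; these are equivalent, and your remark about subnets in the general topological-vector-space setting is a fair caveat that applies equally to the paper's own argument.
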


\begin{proof}
Let $\eq$ be an accumulation point of $\next = \BR(\curr[\avg])$, and let $\state_{\run_{\runalt}+1}$ be a subsequence converging to $\eq$.
By descending to a further subsequence if necessary, we may assume without loss of generality that $\avg_{\run_{\runalt}}$ converges to some limit point $\bar\point$.
Furthermore, by the definition of \eqref{eq:FP}, we have:
\begin{equation}
\braket{\payv(\bar\state_{\run_{\runalt}})}{\state_{\run_{\runalt}+1} - \point}
	= \braket{\payv(\bar\state_{\run_{\runalt}})}{\BR(\bar\state_{\run_{\runalt}}) - \point}
	\geq 0
	\quad
	\text{for all $\point\in\points$}.
\end{equation}
Hence, taking the limit $\runalt\to\infty$, we get
\begin{equation}
\braket{\payv(\bar\point)}{\sol - \point}
	\geq 0
	\quad
	\text{for all $\point\in\points$},
\end{equation}
so $\eq \in \BR(\bar\point)$.
Since $\bar\point \in \SVI(\points,\payv)$ by assumption, we deduce further that $\bar\point \in \BR(\bar\point)$.
Hence, by \eqref{eq:unique}, we conclude that $\bar\point=\sol$, and our assertion follows.
\end{proof}

Obviously, under \eqref{eq:unique}, a similar property holds for \acl{RFP}.
 
%

\subsection{Preview for Part II}

Part II of our paper \citep{HLMS21b} deals with games with continuous action spaces, in the spirit of \cref{ex:cont} (but with a continuous population of nonatomic players).
The first step is to define the \acl{SVI} when $\pures$ is a compact metric space,
$\Theta$ is an appropriate compact convex subset of $\Delta(\pures)$,
$\payv$ is a continuous map $\theta\mapsto\payv_{\theta}(\cdot)$ from $\Theta$ to the space of continuous functions on $\pures$,
and the duality mapping is given by
\begin{equation}
\braket{\payv_{\theta}}{\nu}
	= \int_{\pures} \payv_{\theta}(\pure) \dd\nu(\pure)
	\quad
	\text{for all $\theta,\nu\in\Theta$}.
\end{equation}
In this framework, we introduce potential and monotone maps,
and we prove the convergence of the discrete-time dynamics of \acl{FP}.

We show that this study covers the case of nonatomic games with (player-dependent) compact action spaces.
Finally, we introduce a class of first-order mean-field games and prove that solutions of the \acl{SVI} correspond to the equilibria of the game in normal form where $\pures$ is the set of paths generated by the players.

\numberwithin{lemma}{section}		
\numberwithin{proposition}{section}		
\numberwithin{corollary}{section}		
\numberwithin{equation}{section}		
\appendix

\section{Properties of choice maps and the Fenchel coupling}
\label{app:mirror}

Our aim in this appendix is to provide some useful background and properties for the regularized choice map $\hreg$ and the Fenchel coupling.

To recall the basic setup, we assume below that $\hreg$ is a regularizer on $\strats$ in the sense of \cref{def:reg}.
The convex conjugate $\hconj\from\R^{\pures}\to\R$ of $\hreg$ is then defined as
\begin{equation}
\label{eq:hconj}
\hconj(\score)
	= \max_{\point\in\strats} \{ \braket{\score}{\point} - \hreg(\point) \}.
\end{equation}
Since $\strats$ is compact and $\hreg$ is strictly convex and continuous on $\strats$, the maximum in \eqref{eq:hconj} is attained at a unique point in $\strats$, so $\hconj(\score)$ is finite for all $\score\in\R^{\pures}$ \citep{BC17}.
Moreover, by standard results in convex analysis \citep[Chap.~26]{Roc70}, $\hconj$ is differentiable on $\R^{\pures}$ and its gradient satisfies the identity
\begin{equation}
\label{eq:dconj}
\nabla\hconj(\score)
	= \argmax_{\point\in\strats} \{ \braket{\score}{\point} - \hreg(\point) \}
	= \choice{}(\score)
\end{equation}
where $\choice{}\from\R^{\pures}\to\strats$ is the choice map induced by $\hreg$ (\cf~\cref{def:reg}).

The lemma below establishes the inverse differentiability property for $\hreg$, namely that $\subd\hreg = \choice{}^{-1}$:

\begin{lemma}
\label{lem:choice}
Let $\hreg$ be a regularizer on $\strats$.
Then, for all $\point\in\strats$ and all $\score\in\R^{\pures}$, we have:
\begin{equation}
\label{eq:links-choice}
\point = \choice{}(\score)
	\;\iff\;
	\score \in \subd\hreg(\point).
\end{equation}
If, in addition, $\hreg$ satisfies \eqref{eq:selection}, we also have
\begin{equation}
\label{eq:subinverse}
\braket{\nabla\hreg(\point)}{\point - \base}
	\leq \braket{\score}{\point - \base}
\end{equation}
for all $\base\in\strats$ and all $\score\in\subd\hreg(\point)$, $\point\in\dom\subd\hreg$.
\end{lemma}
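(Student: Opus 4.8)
The plan is to treat the two assertions separately. The equivalence \eqref{eq:links-choice} is a routine consequence of Fermat's rule: since $\hreg$ is supported on $\strats$, the point $\point = \choice{}(\score)$ is precisely the \textpar{unconstrained} minimizer over $\R^{\pures}$ of the proper, lower semicontinuous, convex function $\psi \defeq \hreg - \braket{\score}{\cdot}$, and by Fermat's rule this holds if and only if $0 \in \subd\psi(\point)$. Because $\braket{\score}{\cdot}$ is finite-valued, the sum rule gives $\subd\psi(\point) = \subd\hreg(\point) - \score$, so $\point = \choice{}(\score) \iff \score \in \subd\hreg(\point)$. (Equivalently, one may invoke the conjugate subgradient theorem together with the identity $\subd\hconj(\score) = \{\nabla\hconj(\score)\} = \{\choice{}(\score)\}$ recorded in \eqref{eq:dhconj}--\eqref{eq:dconj}.)

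For the inequality \eqref{eq:subinverse}, assume \eqref{eq:selection}, so that $\subsel\hreg\from\proxdom\to\R^{\pures}$ is a continuous selection of $\subd\hreg$, and fix $\point\in\proxdom$, $\score\in\subd\hreg(\point)$, $\base\in\strats$. First I would reduce to the case $\base\in\relint\strats$: choosing $\base_{\runalt}\in\relint\strats$ with $\base_{\runalt}\to\base$ and proving the inequality for each $\base_{\runalt}$, the general case follows by continuity of the linear functional $\braket{\score - \subsel\hreg(\point)}{\cdot}$. So let $\base\in\relint\strats$ and set $\point_{\time}\defeq(1-\time)\point + \time\base$. By the line-segment principle $\point_{\time}\in\relint\strats$ for all $\time\in(0,1]$, and since every convex function is subdifferentiable throughout the relative interior of its domain we have $\relint\strats\subseteq\proxdom$; hence $\subsel\hreg(\point_{\time})$ is well defined. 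The subgradient inequality, applied to $\score\in\subd\hreg(\point)$ at the point $\point_{\time}$ and to $\subsel\hreg(\point_{\time})\in\subd\hreg(\point_{\time})$ at the point $\point$, then yields the sandwich
\begin{equation*}
\braket{\score}{\base - \point}
	\leq \frac{\hreg(\point_{\time}) - \hreg(\point)}{\time}
	\leq \braket{\subsel\hreg(\point_{\time})}{\base - \point},
	\qquad \time\in(0,1].
\end{equation*}
Letting $\time\to0^{+}$, the right-hand side converges to $\braket{\subsel\hreg(\point)}{\base - \point}$ by continuity of $\subsel\hreg$ at $\point$ (note $\point_{\time}\to\point$ within $\proxdom$), and we obtain $\braket{\score}{\base - \point}\leq\braket{\subsel\hreg(\point)}{\base - \point}$, which rearranges to \eqref{eq:subinverse}.

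I expect the only delicate point to be the behaviour at the relative boundary of $\strats$: one must ensure that the selection $\subsel\hreg$ is actually available along the whole open segment $\point_{\time}$, which is arranged by forcing $\base$ into $\relint\strats$ (this automatically places $\point_{\time}$ there as well), while genuine boundary directions $\base\in\bd\strats$ are recovered only through the preliminary density argument and not by any direct estimate. Everything else reduces to two applications of the subgradient inequality and the continuity of the selection guaranteed by \eqref{eq:selection}.
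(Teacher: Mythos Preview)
Your proof is correct and follows essentially the same route as the paper's: both establish \eqref{eq:links-choice} via Fermat's rule/first-order optimality, and both obtain \eqref{eq:subinverse} by working along the segment $\point_{\time}=(1-\time)\point+\time\base$, invoking subgradient inequalities there, and using the continuity of the selection $\subsel\hreg$ to pass to the limit $\time\to0^{+}$. The paper packages the second step as ``the auxiliary convex function $\phi(\time)=\hreg(\point_{\time})-\hreg(\point)-\time\braket{\score}{\base-\point}$ attains its minimum at $\time=0$, hence $\phi'(0)\geq0$'', whereas you write the sandwich $\braket{\score}{\base-\point}\leq[\hreg(\point_{\time})-\hreg(\point)]/\time\leq\braket{\subsel\hreg(\point_{\time})}{\base-\point}$ directly; these are the same computation unwound. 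Your reduction to $\base\in\relint\strats$ (rather than the paper's $\base\in\dom\subd\hreg$) is in fact a touch cleaner, since the line-segment principle then guarantees $\point_{\time}\in\relint\strats\subseteq\proxdom$ for every $\time\in(0,1]$ without having to worry about whether $\dom\subd\hreg$ is convex.
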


\begin{proof}[Proof of \cref{lem:choice}]
To prove \eqref{eq:links-choice}, note that $\point$ solves \eqref{eq:dconj} if and only if $\score - \subd\hreg(\point) \ni 0$, \ie if and only if $\score\in\subd\hreg(\point)$.

For the inequality \eqref{eq:subinverse}, it suffices to show it holds for all $\base\in\dom\subd\hreg$ (by continuity).
To do so, let
\begin{equation}
\phi(t)
	= \hreg(\point + t(\base-\point))
	- \bracks{\hreg(\point) +  \braket{\score}{\point + t(\base-\point)}}.
\end{equation}
Since $\hreg$ is strictly convex and $\score\in\subd\hreg(\point)$ by \eqref{eq:links-choice}, it follows that $\phi(t)\geq0$ with equality if and only if $t=0$.
Moreover, note that $\psi(t) = \braket{\nabla\hreg(\point + t(\base-\point)) - \score}{\base - \point}$ is a continuous subgradient selection of $\phi$.
Given that $\phi$ and $\psi$ are both continuous on $[0,1]$, it follows a fortiori that $\phi$ is continuously differentiable and $\phi' = \psi$ on $[0,1]$.
Thus, with $\phi$ convex and $\phi(t) \geq 0 = \phi(0)$ for all $t\in[0,1]$, we conclude that $\phi'(0) = \braket{\nabla\hreg(\point) - \score}{\base - \point} \geq 0$, from which our claim follows.
\end{proof}

\begin{lemma}
\label{lem:Fench2norm}
Let $\hreg$ be a regularizer on $\strats$.
Then, for all $\base\in\strats$ and all $\score\in\R^{\pures}$, we have
\begin{equation}
\label{eq:Fench-pos}
\fench{\hreg}(\base,\score)
	\geq 0
	\quad
	\text{with equality if and only if $\base = \choice{}(\score)$}.
\end{equation}
If, in addition, if $\hreg$ satisfies \eqref{eq:strong},
we also have:
\begin{equation}
\label{eq:Fench2norm}
\fench{\hreg}(\base,\score)
	\geq \tfrac{\hstr}{2} \norm{\choice{}(\score) - \base}^{2}.
\end{equation}
\end{lemma}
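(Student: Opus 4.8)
The plan is to treat the two assertions separately; both are short consequences of the definition of the Fenchel coupling together with \cref{lem:choice} and the identity $\nabla\hconj = \choice{}$ recorded in \eqref{eq:dconj}.

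For the first part, \eqref{eq:Fench-pos}, I would unwind the definition $\fench{\hreg}(\base,\score) = \hreg(\base) + \hconj(\score) - \braket{\base}{\score} = \hconj(\score) - \bracks{\braket{\score}{\base} - \hreg(\base)}$. Since $\hconj(\score)$ is by construction the maximum of $\point \mapsto \braket{\score}{\point} - \hreg(\point)$ over $\strats$ (see \eqref{eq:hconj}), it dominates the value of this map at $\point = \base$, which gives $\fench{\hreg}(\base,\score) \geq 0$ at once --- this is just the Fenchel--Young inequality. For the equality case, I would invoke the fact that strict convexity and continuity of $\hreg$ on the compact simplex $\strats$ force the maximum in \eqref{eq:hconj} to be attained at a single point, which by \eqref{eq:dconj} is precisely $\choice{}(\score) = \nabla\hconj(\score)$; hence the bound is tight exactly when $\base = \choice{}(\score)$.

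For the second part, \eqref{eq:Fench2norm}, set $\point^{\ast} = \choice{}(\score)$. Using $\hconj(\score) = \braket{\score}{\point^{\ast}} - \hreg(\point^{\ast})$, the Fenchel coupling rewrites as $\fench{\hreg}(\base,\score) = \hreg(\base) - \hreg(\point^{\ast}) - \braket{\score}{\base - \point^{\ast}}$, and by \cref{lem:choice} we have $\score \in \subd\hreg(\point^{\ast})$. The claim then reduces to the subgradient form of $\hstr$-strong convexity, namely $\hreg(\base) \geq \hreg(\point^{\ast}) + \braket{\score}{\base - \point^{\ast}} + \tfrac{\hstr}{2}\norm{\base - \point^{\ast}}^{2}$ for every $\score \in \subd\hreg(\point^{\ast})$. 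I would derive this directly from the secant definition \eqref{eq:strong}: applying it with $\strat = \base$ and $\stratalt = \point^{\ast}$ and a parameter $\coef \in (0,1]$, then rearranging and dividing by $\coef$, gives $\hreg(\base) \geq \hreg(\point^{\ast}) + \bracks{\hreg(\point^{\ast} + \coef(\base - \point^{\ast})) - \hreg(\point^{\ast})}/\coef + \tfrac{\hstr}{2}(1-\coef)\norm{\base - \point^{\ast}}^{2}$; letting $\coef \to 0^{+}$, the difference quotient tends to the one-sided directional derivative $\hreg'(\point^{\ast};\base - \point^{\ast})$, which is finite (as $\hreg$ is finite, convex and continuous on $\strats$) and satisfies $\hreg'(\point^{\ast};\base - \point^{\ast}) \geq \braket{\score}{\base - \point^{\ast}}$ for any subgradient $\score$. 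Substituting back into the rewritten Fenchel coupling yields \eqref{eq:Fench2norm}. (Alternatively, one can note that $\hreg - \tfrac{\hstr}{2}\norm{\cdot}^{2}$ is convex on $\strats$ and apply the subdifferential sum rule, which is licit since $\tfrac{\hstr}{2}\norm{\cdot}^{2}$ is finite and continuous.)

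The only mildly delicate point --- and hence the main obstacle, though a minor one --- is the passage from the secant form \eqref{eq:strong} of strong convexity to the subgradient inequality, combined with the fact that $\hreg$ is extended-valued, so that $\subd\hreg(\point^{\ast})$ may contain normal-cone contributions on the boundary of $\strats$. Neither issue is genuinely problematic: the limiting argument above uses only that $\hreg$ is finite, convex and continuous on $\strats$, and the inequality $\hreg'(\point^{\ast};\tvec) \geq \braket{\score}{\tvec}$ holds for every subgradient irrespective of boundary effects. Everything else is a one-line substitution.
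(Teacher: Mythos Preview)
Your proof is correct and follows essentially the same approach as the paper. Both arguments handle \eqref{eq:Fench-pos} via Fenchel--Young plus strict convexity, and for \eqref{eq:Fench2norm} both rewrite $\fench{\hreg}(\base,\score) = \hreg(\base) - \hreg(\point^{\ast}) - \braket{\score}{\base - \point^{\ast}}$ with $\point^{\ast} = \choice{}(\score)$, apply the secant inequality \eqref{eq:strong} along the segment $[\point^{\ast},\base]$, divide by $\coef$, and let $\coef\to0^{+}$; the only cosmetic difference is that the paper bounds $\hreg(\point^{\ast} + \coef(\base - \point^{\ast}))$ from below using the argmax property of $\point^{\ast}$ directly, whereas you pass through the directional derivative and the subgradient inequality $\hreg'(\point^{\ast};\base - \point^{\ast}) \geq \braket{\score}{\base - \point^{\ast}}$ (having first invoked \cref{lem:choice} to get $\score \in \subd\hreg(\point^{\ast})$).
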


\begin{proof}
Our first claim is a trivial consequence of the Fenchel-Young inequality and the strict convexity of $\hreg$.
For our second claim, let $\point = \choice{}(\score)$.
Then, by the definition \eqref{eq:hconj} of the convex conjugate of $\hreg$, we have:
\begin{align}
\label{eq:3point-develop}
\fench{\hreg}(\base,\score)
	&= \hreg(\base)
		+ \hconj(\score)
		- \braket{\score}{\base}
	\notag\\
	&= \hreg(\base)
		+ \braket{\score}{\choice{}(\score)}
		- \hreg(\choice{}(\score))
		- \braket{\score}{\base}
	\notag\\
	&= \hreg(\base)
		- \hreg(\point)
		- \braket{\score}{\base-\point}
\end{align}
Now, since $\hreg$ is $\hstr$-strongly convex, we also have:
\begin{equation}
\hreg((1-\coef)\strat + \coef\base) + \coef(1-\coef)(\hstr/2) \norm{\strat - \base}^{2}
	\leq (1-\coef)\hreg(\strat) + \coef\hreg(\base)
\end{equation}
and, with $\strat = \choice{}(\score)$, we get:
\begin{equation}
\braket{\score}{\strat} - \hreg(\strat)
	\geq \braket{\score}{(1-\coef)\strat + \coef\base}
		- \hreg((1-\coef)\strat + \coef\base).
\end{equation}
Thus, after rearranging, dividing by $\coef$, and letting $\coef\to0$, we obtain
\begin{equation}
\hreg(\base) - \hreg(\strat) - \braket{\score}{\base - \strat}
	\geq (\hstr/2) \norm{\strat - \base}^{2}
\end{equation}
and our assertion follows from \eqref{eq:3point-develop}.
\end{proof}

By virtue of this lemma, we obtain the following convergence criterion in terms of the Fenchel coupling:

\begin{corollary}
Let $\hreg$ be a regularizer on $\strats$ satisfying \eqref{eq:strong},
fix a base point $\base\in\strats$,
and
let $\score_{\run}$ be a sequence in $\R^{\pures}$.
Then, if $\fench{\hreg}(\base,\score_{\run}) \to 0$, we also have $\choice{}(\score_{\run}) \to \base$.
\end{corollary}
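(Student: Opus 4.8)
The plan is to read this off directly from the quantitative Fenchel bound \eqref{eq:Fench2norm} in \cref{lem:Fench2norm}, which is the substantive ingredient; the corollary itself is then a one-line squeeze argument. So the first thing I would do is invoke \cref{lem:Fench2norm}: since $\hreg$ is assumed to satisfy \eqref{eq:strong}, for every $\score\in\R^{\pures}$ we have
\begin{equation*}
\tfrac{\hstr}{2}\,\norm{\choice{}(\score) - \base}^{2}
	\leq \fench{\hreg}(\base,\score).
\end{equation*}

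Next I would specialize this to the terms of the given sequence, i.e.\ set $\score = \score_{\run}$, obtaining $\norm{\choice{}(\score_{\run}) - \base}^{2} \leq (2/\hstr)\,\fench{\hreg}(\base,\score_{\run})$ for all $\run$. By hypothesis the right-hand side tends to $0$ as $\run\to\infty$, and $\hstr>0$ is a fixed constant, so the nonnegative quantity $\norm{\choice{}(\score_{\run}) - \base}^{2}$ is squeezed to $0$. Taking square roots (the square root is continuous and monotone) gives $\norm{\choice{}(\score_{\run}) - \base}\to0$, which is exactly the assertion $\choice{}(\score_{\run})\to\base$ in the norm topology on $\strats$.

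There is essentially no obstacle here: all the difficulty has already been absorbed into the proof of \cref{lem:Fench2norm} (the passage from strong convexity of $\hreg$ to the quadratic lower bound on the Fenchel coupling, via the three-point identity \eqref{eq:3point-develop} and the defining inequality of \eqref{eq:strong}). The only point worth a sentence of care is that strong convexity is genuinely needed: without it one only has $\fench{\hreg}(\base,\score)\ge0$ with equality iff $\base=\choice{}(\score)$ (the first part of \cref{lem:Fench2norm}), which does not by itself control the \emph{rate} and hence does not give the implication for an arbitrary vanishing sequence — this is precisely the role played separately by the reciprocity condition \eqref{eq:reciprocity} elsewhere in the paper. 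Hence I would state the proof in the two steps above and stop.
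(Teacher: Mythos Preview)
Your proof is correct and follows exactly the same approach as the paper: invoke the bound \eqref{eq:Fench2norm} from \cref{lem:Fench2norm}, rearrange to $\norm{\choice{}(\score_{\run}) - \base} \leq \sqrt{(2/\hstr)\,\fench{\hreg}(\base,\score_{\run})}$, and conclude. The paper's version is simply the one-line square-root form of your squeeze argument.
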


\begin{proof}
By \cref{lem:Fench2norm}, we have $\norm{\choice{}(\score_{\run}) - \base} \leq \sqrt{(2/\hstr) \fench{\hreg}(\base,\score_{\run})}$, so our claim follows.
\end{proof}

\section*{Acknowledgments}
{\small 
%
%
The authors acknowledge support by the COST Action CA16228 ``European Network for Game Theory'' (GAMENET),
the FMJH Program PGMO under grant Damper,
and
the French National Research Agency (ANR) in the framework of
the ``Investissements d'avenir'' program (ANR-15-IDEX-02),
the LabEx PERSYVAL (ANR-11-LABX-0025-01),
MIAI@Grenoble Alpes (ANR-19-P3IA-0003),
and the grants ORACLESS (ANR-16-CE33-0004) and ALIAS (ANR-19-CE48-0018-01).}

\bibliographystyle{plainfull}
\bibliography{bibtex/IEEEabrv,Bibliography}

\begin{thebibliography}{57}
\providecommand{\natexlab}[1]{#1}
\providecommand{\url}[1]{\texttt{#1}}
\expandafter\ifx\csname urlstyle\endcsname\relax
  \providecommand{\doi}[1]{doi: #1}\else
  \providecommand{\doi}{doi: \begingroup \urlstyle{rm}\Url}\fi

\bibitem[Bauschke \& Combettes(2017)Bauschke and Combettes]{BC17}
Heinz~H. Bauschke and Patrick~L. Combettes.
\newblock \emph{Convex Analysis and Monotone Operator Theory in Hilbert
  Spaces}.
\newblock Springer, New York, NY, USA, 2 edition, 2017.

\bibitem[Bena{\"\i}m(1999)]{Ben99}
Michel Bena{\"\i}m.
\newblock Dynamics of stochastic approximation algorithms.
\newblock In Jacques Az{\'e}ma, Michel {\'E}mery, Michel Ledoux, and Marc Yor
  (eds.), \emph{S{\'e}minaire de Probabilit{\'e}s XXXIII}, volume 1709 of
  \emph{Lecture Notes in Mathematics}, pp.\  1--68. Springer Berlin Heidelberg,
  1999.

\bibitem[Bena{\"\i}m \& Faure(2013)Bena{\"\i}m and Faure]{BF13}
Michel Bena{\"\i}m and Mathieu Faure.
\newblock Consistency of vanishingly smooth fictitious play.
\newblock \emph{Mathematics of Operations Research}, 38\penalty0 (3):\penalty0
  437--450, 2013.

\bibitem[Bena{\"\i}m et~al.(2005)Bena{\"\i}m, Hofbauer, and Sorin]{BHS05}
Michel Bena{\"\i}m, Josef Hofbauer, and Sylvain Sorin.
\newblock Stochastic approximations and differential inclusions.
\newblock \emph{SIAM Journal on Control and Optimization}, 44\penalty0
  (1):\penalty0 328--348, 2005.

\bibitem[Bena{\"\i}m et~al.(2006)Bena{\"\i}m, Hofbauer, and Sorin]{BHS06}
Michel Bena{\"\i}m, Josef Hofbauer, and Sylvain Sorin.
\newblock Stochastic approximations and differential inclusions, part {II}:
  Applications.
\newblock \emph{Mathematics of Operations Research}, 31\penalty0 (4):\penalty0
  673--695, 2006.

\bibitem[Berge(1997)]{Ber97}
Claude Berge.
\newblock \emph{Topological Spaces}.
\newblock Dover, New York, 1997.

\bibitem[Bravo \& Mertikopoulos(2017)Bravo and Mertikopoulos]{BM17}
Mario Bravo and Panayotis Mertikopoulos.
\newblock On the robustness of learning in games with stochastically perturbed
  payoff observations.
\newblock \emph{Games and Economic Behavior}, 103, John Nash Memorial
  issue:\penalty0 41--66, May 2017.

\bibitem[Brown(1951)]{Bro51}
George~W. Brown.
\newblock Iterative solutions of games by fictitious play.
\newblock In T.~C. Coopmans (ed.), \emph{Activity Analysis of Productions and
  Allocation}, 374-376. Wiley, 1951.

\bibitem[Cesa-Bianchi \& Lugosi(2006)Cesa-Bianchi and Lugosi]{CBL06}
Nicol{\`o} Cesa-Bianchi and G{\'a}bor Lugosi.
\newblock \emph{Prediction, Learning, and Games}.
\newblock Cambridge University Press, 2006.

\bibitem[Conley(1978)]{Con78}
Charles~Cameron Conley.
\newblock \emph{Isolated Invariant Set and the {Morse} Index}.
\newblock American Mathematical Society, Providence, RI, 1978.

\bibitem[Dafermos(1980)]{Daf80}
Stella Dafermos.
\newblock Traffic equilibrium and variational inequalities.
\newblock \emph{Transportation Science}, 14\penalty0 (1):\penalty0 42--54,
  February 1980.

\bibitem[Friedman(1991)]{Fri91}
Daniel Friedman.
\newblock Evolutionary games in economics.
\newblock \emph{Econometrica}, 59\penalty0 (3):\penalty0 637--666, 1991.

\bibitem[Fudenberg \& Levine(1998)Fudenberg and Levine]{FL98}
Drew Fudenberg and David~K. Levine.
\newblock \emph{The Theory of Learning in Games}, volume~2 of \emph{Economic
  learning and social evolution}.
\newblock MIT Press, Cambridge, MA, 1998.

\bibitem[Fudenberg \& Levine(1999)Fudenberg and Levine]{FL99}
Drew Fudenberg and David~K. Levine.
\newblock Conditional universal consistency.
\newblock \emph{Games and Economic Behavior}, 29\penalty0 (1):\penalty0
  104--130, 1999.

\bibitem[Gilboa \& Matsui(1991)Gilboa and Matsui]{GM91}
Itzhak Gilboa and Akihiko Matsui.
\newblock Social stability and equilibrium.
\newblock \emph{Econometrica}, 59\penalty0 (3):\penalty0 859--867, May 1991.

\bibitem[Hadikhanloo et~al.(2021)Hadikhanloo, Laraki, Mertikopoulos, and
  Sorin]{HLMS21b}
Saeed Hadikhanloo, Rida Laraki, Panayotis Mertikopoulos, and Sylvain Sorin.
\newblock Learning in nonatomic games, {Part II}: {Continuous} action spaces
  and mean-field games.
\newblock Mimeo, 2021.

\bibitem[H{\'e}liou et~al.(2020)H{\'e}liou, Martin, Mertikopoulos, and
  Rahier]{HMMR20}
Am{\'e}lie H{\'e}liou, Matthieu Martin, Panayotis Mertikopoulos, and Thibaud
  Rahier.
\newblock Online non-convex optimization with imperfect feedback.
\newblock In \emph{NeurIPS '20: Proceedings of the 34th International
  Conference on Neural Information Processing Systems}, 2020.

\bibitem[Hofbauer \& Sandholm(2002)Hofbauer and Sandholm]{HS02}
Josef Hofbauer and William~H. Sandholm.
\newblock On the global convergence of stochastic fictitious play.
\newblock \emph{Econometrica}, 70\penalty0 (6):\penalty0 2265--2294, November
  2002.

\bibitem[Hofbauer \& Sandholm(2009)Hofbauer and Sandholm]{HS09}
Josef Hofbauer and William~H. Sandholm.
\newblock Stable games and their dynamics.
\newblock \emph{Journal of Economic Theory}, 144\penalty0 (4):\penalty0
  1665--1693, July 2009.

\bibitem[Hofbauer \& Sigmund(1998)Hofbauer and Sigmund]{HS98}
Josef Hofbauer and Karl Sigmund.
\newblock \emph{Evolutionary Games and Population Dynamics}.
\newblock Cambridge University Press, Cambridge, UK, 1998.

\bibitem[Hofbauer \& Sigmund(2003)Hofbauer and Sigmund]{HS03}
Josef Hofbauer and Karl Sigmund.
\newblock Evolutionary game dynamics.
\newblock \emph{Bulletin of the American Mathematical Society}, 40\penalty0
  (4):\penalty0 479--519, July 2003.

\bibitem[Hofbauer et~al.(1979)Hofbauer, Schuster, and Sigmund]{HSS79}
Josef Hofbauer, Peter Schuster, and Karl Sigmund.
\newblock A note on evolutionarily stable strategies and game dynamics.
\newblock \emph{Journal of Theoretical Biology}, 81\penalty0 (3):\penalty0
  609--612, 1979.

\bibitem[Hofbauer et~al.(2009)Hofbauer, Sorin, and Viossat]{HSV09}
Josef Hofbauer, Sylvain Sorin, and Yannick Viossat.
\newblock Time average replicator and best reply dynamics.
\newblock \emph{Mathematics of Operations Research}, 34\penalty0 (2):\penalty0
  263--269, May 2009.

\bibitem[Juditsky et~al.(2011)Juditsky, Nemirovski, and Tauvel]{JNT11}
Anatoli Juditsky, Arkadi~Semen Nemirovski, and Claire Tauvel.
\newblock Solving variational inequalities with stochastic mirror-prox
  algorithm.
\newblock \emph{Stochastic Systems}, 1\penalty0 (1):\penalty0 17--58, 2011.

\bibitem[Kwon(2016)]{Kwo16}
Joon Kwon.
\newblock \emph{Strat{\'e}gies de descente miroir pour la minimisation du
  regret et l'approchabilit{\'e}}.
\newblock PhD thesis, Universit{\'e} Pierre-et-Marie-Curie, 2016.

\bibitem[Kwon \& Mertikopoulos(2017)Kwon and Mertikopoulos]{KM17}
Joon Kwon and Panayotis Mertikopoulos.
\newblock A continuous-time approach to online optimization.
\newblock \emph{Journal of Dynamics and Games}, 4\penalty0 (2):\penalty0
  125--148, April 2017.

\bibitem[Mas-Colell(1984)]{MC84}
Andreu Mas-Colell.
\newblock On a theorem of {Schmeidler}.
\newblock \emph{Journal of Mathematical Economics}, 13:\penalty0 201--206,
  1984.

\bibitem[Maynard~Smith \& Price(1973)Maynard~Smith and Price]{MSP73}
John Maynard~Smith and George~R. Price.
\newblock The logic of animal conflict.
\newblock \emph{Nature}, 246:\penalty0 15--18, November 1973.

\bibitem[McFadden(1974)]{McF74a}
Daniel~L. McFadden.
\newblock Conditional logit analysis of qualitative choice behavior.
\newblock In P.~Zarembka (ed.), \emph{Frontiers in Econometrics}, pp.\
  105--142. Academic Press, New York, NY, 1974.

\bibitem[McKelvey \& Palfrey(1995)McKelvey and Palfrey]{MP95}
Richard~D. McKelvey and Thomas~R. Palfrey.
\newblock Quantal response equilibria for normal form games.
\newblock \emph{Games and Economic Behavior}, 10\penalty0 (6):\penalty0 6--38,
  1995.

\bibitem[Mertikopoulos \& Moustakas(2010)Mertikopoulos and Moustakas]{MM10}
Panayotis Mertikopoulos and Aris~L. Moustakas.
\newblock The emergence of rational behavior in the presence of stochastic
  perturbations.
\newblock \emph{The Annals of Applied Probability}, 20\penalty0 (4):\penalty0
  1359--1388, July 2010.

\bibitem[Mertikopoulos \& Sandholm(2016)Mertikopoulos and Sandholm]{MS16}
Panayotis Mertikopoulos and William~H. Sandholm.
\newblock Learning in games via reinforcement and regularization.
\newblock \emph{Mathematics of Operations Research}, 41\penalty0 (4):\penalty0
  1297--1324, November 2016.

\bibitem[Mertikopoulos \& Sandholm(2018)Mertikopoulos and Sandholm]{MerSan18}
Panayotis Mertikopoulos and William~H. Sandholm.
\newblock Riemannian game dynamics.
\newblock \emph{Journal of Economic Theory}, 177:\penalty0 315--364, September
  2018.

\bibitem[Mertikopoulos \& Staudigl(2018)Mertikopoulos and Staudigl]{MerSta18}
Panayotis Mertikopoulos and Mathias Staudigl.
\newblock On the convergence of gradient-like flows with noisy gradient input.
\newblock \emph{SIAM Journal on Optimization}, 28\penalty0 (1):\penalty0
  163--197, January 2018.

\bibitem[Mertikopoulos \& Zhou(2019)Mertikopoulos and Zhou]{MZ19}
Panayotis Mertikopoulos and Zhengyuan Zhou.
\newblock Learning in games with continuous action sets and unknown payoff
  functions.
\newblock \emph{Mathematical Programming}, 173\penalty0 (1-2):\penalty0
  465--507, January 2019.

\bibitem[Minty(1967)]{Min67}
George~J. Minty.
\newblock On the generalization of a direct method of the calculus of
  variations.
\newblock \emph{Bulletin of the American Mathematical Society}, 73\penalty0
  (3):\penalty0 315--321, May 1967.

\bibitem[Monderer \& Shapley(1996)Monderer and Shapley]{MS96}
Dov Monderer and Lloyd~S. Shapley.
\newblock Potential games.
\newblock \emph{Games and Economic Behavior}, 14\penalty0 (1):\penalty0 124 --
  143, 1996.

\bibitem[Nemirovski(2004)]{Nem04}
Arkadi~Semen Nemirovski.
\newblock Prox-method with rate of convergence ${O}(1/t)$ for variational
  inequalities with {Lipschitz} continuous monotone operators and smooth
  convex-concave saddle point problems.
\newblock \emph{SIAM Journal on Optimization}, 15\penalty0 (1):\penalty0
  229--251, 2004.

\bibitem[Nesterov(2007)]{Nes07}
Yurii Nesterov.
\newblock Dual extrapolation and its applications to solving variational
  inequalities and related problems.
\newblock \emph{Mathematical Programming}, 109\penalty0 (2):\penalty0 319--344,
  2007.

\bibitem[Nesterov(2009)]{Nes09}
Yurii Nesterov.
\newblock Primal-dual subgradient methods for convex problems.
\newblock \emph{Mathematical Programming}, 120\penalty0 (1):\penalty0 221--259,
  2009.

\bibitem[Robinson(1951)]{Rob51}
Julia Robinson.
\newblock An iterative method for solving a game.
\newblock \emph{Annals of Mathematics}, 54:\penalty0 296--301, 1951.

\bibitem[Rockafellar(1970)]{Roc70}
Ralph~Tyrrell Rockafellar.
\newblock \emph{Convex Analysis}.
\newblock Princeton University Press, Princeton, NJ, 1970.

\bibitem[Rustichini(1999)]{Rus99}
Aldo Rustichini.
\newblock Optimal properties of stimulus-response learning models.
\newblock \emph{Games and Economic Behavior}, 29\penalty0 (1-2):\penalty0
  244--273, 1999.

\bibitem[Sandholm(2001)]{San01}
William~H. Sandholm.
\newblock Potential games with continuous player sets.
\newblock \emph{Journal of Economic Theory}, 97:\penalty0 81--108, 2001.

\bibitem[Sandholm(2010)]{San10}
William~H. Sandholm.
\newblock \emph{Population Games and Evolutionary Dynamics}.
\newblock MIT Press, Cambridge, MA, 2010.

\bibitem[Sandholm(2015)]{San15}
William~H. Sandholm.
\newblock Population games and deterministic evolutionary dynamics.
\newblock In H.~Peyton Young and Shmuel Zamir (eds.), \emph{Handbook of Game
  Theory {IV}}, pp.\  703--778. Elsevier, 2015.

\bibitem[Schmeidler(1973)]{Sch73}
David Schmeidler.
\newblock Equilibrium points of nonatomic games.
\newblock \emph{Journal of Statistical Physics}, 7\penalty0 (4):\penalty0
  295--300, April 1973.

\bibitem[Shalev-Shwartz(2011)]{SS11}
Shai Shalev-Shwartz.
\newblock Online learning and online convex optimization.
\newblock \emph{Foundations and Trends in Machine Learning}, 4\penalty0
  (2):\penalty0 107--194, 2011.

\bibitem[Shalev-Shwartz \& Singer(2006)Shalev-Shwartz and Singer]{SSS06}
Shai Shalev-Shwartz and Yoram Singer.
\newblock Convex repeated games and {Fenchel} duality.
\newblock In \emph{NIPS' 06: Proceedings of the 19th Annual Conference on
  Neural Information Processing Systems}, pp.\  1265--1272. MIT Press, 2006.

\bibitem[Sorin(2009)]{Sor09}
Sylvain Sorin.
\newblock Exponential weight algorithm in continuous time.
\newblock \emph{Mathematical Programming}, 116\penalty0 (1):\penalty0 513--528,
  2009.

\bibitem[Sorin(2020)]{Sor20}
Sylvain Sorin.
\newblock Replicator dynamics: {Old} and new.
\newblock \emph{Journal of Dynamics and Games}, 7\penalty0 (4):\penalty0
  365--386, 2020.

\bibitem[Sorin(2021)]{Sor21}
Sylvain Sorin.
\newblock No-regret algorithms in online learning, games and convex
  optimization.
\newblock Mimeo, 2021.

\bibitem[Sorin \& Wan(2016)Sorin and Wan]{SW16}
Sylvain Sorin and Cheng Wan.
\newblock Finite composite games: Equilibria and dynamics.
\newblock \emph{Journal of Dynamics and Games}, 3\penalty0 (1):\penalty0
  101--120, January 2016.

\bibitem[Taylor(1979)]{Tay79}
Peter~D. Taylor.
\newblock Evolutionarily stable strategies with two types of player.
\newblock \emph{Journal of Applied Probability}, 16\penalty0 (1):\penalty0
  76--83, March 1979.

\bibitem[Taylor \& Jonker(1978)Taylor and Jonker]{TJ78}
Peter~D. Taylor and Leo~B. Jonker.
\newblock Evolutionary stable strategies and game dynamics.
\newblock \emph{Mathematical Biosciences}, 40\penalty0 (1-2):\penalty0
  145--156, 1978.

\bibitem[van Damme(1987)]{vD87}
Eric van Damme.
\newblock \emph{Stability and perfection of {Nash} equilibria}.
\newblock Springer-Verlag, Berlin, 1987.

\bibitem[Weibull(1995)]{Wei95}
J{\"o}rgen~W. Weibull.
\newblock \emph{Evolutionary Game Theory}.
\newblock MIT Press, Cambridge, MA, 1995.

\end{thebibliography}

\end{document}